\def\url#1{{\texttt #1}}
\newtheoremstyle{theorem}{1em}{1em}{\slshape}{0pt}{\bfseries}{.}{ }{}
\theoremstyle{theorem}
\newtheorem{theorem}{Theorem}
\newtheorem*{theorem*}{Theorem}
\newtheorem{corollary}[theorem]{Corollary}
\newtheorem{lemma}[theorem]{Lemma}
\theoremstyle{remark}
\newtheorem*{remark*}{Remark}
\providecommand{\setN}{\mathbb{N}}
\providecommand{\setZ}{\mathbb{Z}}
\providecommand{\setR}{\mathbb{R}}
\newcommand{\conv}{\textrm{conv}}
\newcommand{\rk}{\textrm{rk}}
\newcommand{\xc}{\textrm{xc}}
\newcommand{\E}{\mathop{\mathbb{E}}}
 \theoremstyle{theorem}
\newtheorem*{homework*}{Homework}
 \theoremstyle{definition}
 \newtheorem{definition}{Definition}
\newtheorem*{claim*}{Claim}
\newtheorem*{example*}{Beispiel}
        \def\drawRect#1#2#3#4#5{
           \FPeval{\x2}{(#2) + #4} 
           \FPeval{\y2}{(#3) + #5} 
           \pspolygon[#1](#2,#3)(\x2,#3)(\x2,\y2)(#2,\y2)
        }
\def\placeIIID#1#2#3#4{
           \FPeval{\x}{(#1) + 0.3*(#2)} 
           \FPeval{\y}{0.5*(#2) + (#3)} 
           \rput[c](\x,\y){#4}
}
\begin{document}

\title{The matching polytope has exponential extension complexity}
\date{} 
\author{Thomas Rothvo{ss}\thanks{Email: {\tt rothvoss@uw.edu}. Part of the work was done while
the author was visiting the University of Waterloo and the rest was done while the author was a PostDoc at the Massachusetts Institute of Technology. Research supported in part by ONR grant N00014-11-1-0053 and by NSF contract 1115849. This is the journal version of the paper with the same title that appeared at STOC'14.} \vspace{2mm} \\ University of Washington, Seattle} 
\maketitle

\begin{abstract}
A popular method in combinatorial optimization is to express polytopes $P$, 
which may potentially have exponentially many facets,
as solutions of linear programs that use few extra variables to reduce the number of constraints
down to a polynomial. 
After two decades of standstill, recent years have brought amazing progress in showing lower bounds for
the so called \emph{extension complexity}, which for a polytope $P$ denotes the smallest 
number of inequalities necessary to describe a higher dimensional polytope $Q$ that can
be linearly projected on $P$.
 
However, the central question in this field remained wide
open: can the \emph{perfect matching polytope} be written as an LP with polynomially many constraints?

We answer this question negatively. In fact, the extension complexity of the perfect matching polytope
in a complete $n$-node graph is $2^{\Omega(n)}$. By a known reduction this also improves the lower bound on 
the extension complexity for the TSP polytope from $2^{\Omega(\sqrt{n})}$ to $2^{\Omega(n)}$. 
\end{abstract}

\section{Introduction}

Linear programs are at the heart of combinatorial optimization as they allow to model
a large class of polynomial time solvable problems such as flows, matchings and matroids.  The concept of
LP duality lead in many cases to structural insights that in turn lead to specialized
polynomial time algorithms. In practice, general LP solvers turn out to be
very competitive for many problems, even in cases in which specialized algorithms have 
better theoretical running time. Hence it is particularly interesting to model problems 
with as few linear constraints as possible. For example if we consider the convex hull $P_{ST}$ of the
characteristic vectors of all spanning trees in a complete $n$-node graph, then this 
polytope has $2^{\Omega(n)}$ many facets~\cite{MatroidAndTheGreedyAlgo-Edmonds-1971}. 
However, one can write $P_{ST} = \{ x \mid \exists y: (x,y) \in Q\}$ with a
 higher dimensional polytope $Q$ with only $O(n^3)$ many inequalities~\cite{CF4SpanningTree-KippMartin1991}. 
Hence, instead of optimizing
a linear function over $P_{ST}$, one can optimize over $Q$. In fact, $Q$ is called 
a \emph{linear extension} of $P_{ST}$ and the minimum number of facets of any linear extension is called 
the \emph{extension complexity} and it is denoted by $\xc(P_{ST})$; in this case $\xc(P_{ST}) \leq O(n^3)$.
If $\xc(P)$ is bounded by a polynomial in $n$, then we say that $P \subseteq \setR^n$ has a \emph{compact formulation}.

Other examples of non-trivial compact formulations contain the permutahedron~\cite{PermutahedronNlogNformulation_Goemans10}, the parity polytope, 
the matching polytope in planar graphs~\cite{CompactForm4MatchingPolytopeInPlanarGraphs-Baharona1993} 
and more generally the matching polytope in graphs with bounded genus~\cite{CompactLPforPerfectMatchingOnBoundedGenusSurfacesGerards1991}.

A natural question that emerges is which polytopes do \emph{not} admit a compact formulation. 
The first progress was made by Yannakakis~\cite{ExpressingCOproblemsAsLPs-Yannakakis1991} 
who showed that any \emph{symmetric} extended formulation
for the matching polytope and the TSP polytope must have exponential size. Conveniently, this 
allowed to reject a sequence of flawed ${\bf{P}}={\bf{NP}}$ proofs, which claimed to have (complicated) polynomial
size LPs for TSP.
It was not clear a priori whether the symmetry condition would be essential, but 
Kaibel, Pashkovich and Theis~\cite{SymmetryMatters-KaibelPashkovichTheis-IPCO2010} showed that for the
convex hull of all $\log n$-size matchings, there is a compact asymmetric formulation, but
no symmetric one. 
Using a counting argument the author of this paper was able to prove that a random
 0/1 polytope would have  extension complexity that is exponential in the dimension~\cite{Some01PolytopesNeedExpSizeLPs-Rothvoss-MathProg2013}, 
but without being able to show this
for a concrete polytope. In fact, this existential technique was extended to polygons by 
Fiorini, Rothvoss and Tiwary~\cite{ExtendedFormulations-for-Polygons-FioriniRothvossTiwary-DCG2012} and 
to the SDP extension complexity by~Bri{\"e}t, Dadush and Pokutta~\cite{Existence-01polytopes-with-high-SDP-rank-BrietDadushPokutta-ESA13}.

The major breakthrough by Fiorini, Massar, Pokutta, Tiwary and de Wolf~\cite{TSP-lower-bound-Fiorini-et-al-STOC2012} showed that
several well studied polytopes, including the correlation polytope and the TSP polytope, have
exponential extension complexity. More precisely, they show that the 
\emph{rectangle covering lower bound}~\cite{ExpressingCOproblemsAsLPs-Yannakakis1991} for 
the correlation polytope is exponential, for which
they use known tools from communication complexity~such as 
Razborov's \emph{rectangle corruption lemma}~\cite{Disjointness-Razborov90}. 
Their method was extended by Braun, Fiorini, Pokutta and Steurer~\cite{Approximate-LPs-BraunFioriniPokuttaSteurer-FOCS12}
to show also lower bounds on the size of approximations to the correlation polytope; via a reduction this
then implies that any polynomial size LP for the convex hull of all cliques in all $n$-node graphs must
have an integrality gap of $n^{1/2-\varepsilon}$. This quantity was subsequently improved 
to~$n^{1-\varepsilon}$ by Braverman and Moitra~\cite{InfoComplexity-approach-to-extended-formulations-BravermanMoitra-STOC13}
using an information theory approach; the same bound was later provided by Braun and 
Pokutta~\cite{CommonInfo-and-Unique-Disjointness-BraunPokutta-FOCS2013}
using the framework of common information.

One insight that appeared already in \cite{ExpressingCOproblemsAsLPs-Yannakakis1991,TSP-lower-bound-Fiorini-et-al-STOC2012}
is that if a ``hard'' polytope $P$ is the linear projection of a face of another polytope $P'$, then $\xc(P') \geq \xc(P)$.
This way, the ``hardness'' of the correlation polytope can be translated to many other polytopes using a reduction
(in fact, in many cases, the usual ${\bf NP}$-hardness reduction can be used);
see \cite{ExtensionComplexity-of-Knapsack-Polytope-PokuttaVanVyve-ORL13,ExtensionComplexity-of-combinatorial-polytopes-AvisTiwaryICALP13} for some examples.

A completely independent line of research was given by Chan, Lee, Raghavendra and 
Steurer~\cite{LPs-for-CSPs-ChanLeeRaghavendraSteurer-FOCS2013} who use techniques 
from Fourier analysis to show that for constraint satisfaction problems, known integrality gaps for the 
Sherali-Adams LP translate to lower bounds for any LP of a certain size. For example they show that no LP of size
$n^{O(\log n / \log \log n)}$ can approximate MaxCut better than $2-\varepsilon$. This is particularly interesting
as in contrast the gap of a polynomial size SDP relaxation is around $1.13$, as was shown by Goemans and Williamson~\cite{MaxCut-GoemansWilliamsonJACM95}.

However, all those polytopes model  ${\bf{NP}}$-hard problems and naturally, no complete description of their
facets is known (and no efficiently separable description is possible if ${\bf NP} \neq {\bf coNP}$). 
So what about nicely structured combinatorial polytopes 
that admit polynomial time algorithms to optimize linear functions? 
The most prominent example here is the \emph{perfect matching polytope} $P_{PM}$, which is the convex hull of all characteristic
vectors of perfect matchings in a complete $n$-node graph $G=(V,E)$. 
A well-known work of Edmonds~\cite{matchingPolytop-Edmonds1965} shows that 
apart from requiring non-negativity, the degree-constraints plus the \emph{odd-cut inequalities} are enough 
for an inequality description. In other words, we can write
\begin{eqnarray*}
P_{PM}(n) &=& \textrm{conv}\{ \chi_M \in \setR^{E}  \mid M \subseteq E\textrm{ is a perfect matching} \} \\
&=& \{ x \in \setR^E \mid x(\delta(v)) = 1 \; \forall v \in V; \; x(\delta(U)) \geq 1 \; \forall U \subseteq V: \; |U| \textrm{ odd}; \; x_e \geq 0 \; \forall e \in E \}.
\end{eqnarray*}
Here, $\delta(U)$ denotes the set of edges that have exactly one endpoint in $U$. 
We omit the number $n$ of nodes if they are clear from the context.
Note that there are only $n$ degree constraints and $O(n^2)$ non-negativity constraints, but $2^{\Omega(n)}$
odd set inequalities. Any linear function can be optimized over $P_{PM}$ in 
strongly polynomial time using Edmonds' algorithm~\cite{matchingPolytop-Edmonds1965}. 
Moreover, given any point $x \notin P_{PM}$, a violating inequality
can be found in polynomial time via the equivalence of optimization and separation or 
using Gomory-Hu trees, see Padberg and Rao~\cite{MinimumOddCuts-PadbergRao1982}.
There are compact formulations for $P_{PM}$ for special graph classes~\cite{CompactLPforPerfectMatchingOnBoundedGenusSurfacesGerards1991}
and every \emph{active cone} of $P_{PM}$ admits a compact formulation~\cite{VenturaEisenbrandORL03}.
Moreover, the best known upper bound on the extension complexity in general graphs is 
$\textrm{poly}(n) \cdot 2^{n/2}$~\cite{ExtendedForm-and-Rand-Protocolls-FaenzaFioriniGrappeTiwary-ISCO12}, 
which follows from the fact that   $\textrm{poly}(n) \cdot 2^{n/2}$ many randomly taken complete bipartite
graphs cover all matchings and that the convex hull of the union of polytopes 
can be described with an extended formulation whose size is basically the sum of the individual 
number of inequalities~\cite{DisjunctiveProgramming-Balas85}.
Moreover,
\begin{eqnarray}
  P_{M} &=& \textrm{conv}\{ \chi_M \in \setR^E \mid M \subseteq E\textrm{ is a matching} \}  \label{eq:MatchingPolytope} \\
 &=& \{ x \in \setR^E \mid x(\delta(v)) \leq 1 \; \forall v \in V; \; x(E(U)) \leq \tfrac{|U|-1}{2} \; \forall U \subseteq V : |U| \textrm{ odd}; \; x_e \geq 0 \; \forall e \in E  \} \nonumber
\end{eqnarray}
is the convex hull of all matchings in $G$ (not just the perfect ones). Here, $E(U)$ denotes the edges
running inside of $U$.
Since $P_{PM}$ is a \emph{face} of $P_M$, we have $\xc(P_{PM}) \leq \xc(P_M)$. 
But one can also prove that $\xc(P_{M}(n)) \leq \xc(P_{PM}(2n))$\footnote{Simply take a complete graph $G = (V,E)$ on $2n$ nodes and a subset $U \subseteq V$ of $|U| = n$ nodes. Then if we take 
all perfect matchings $M$ in $G$, then $M \cap E(U)$ gives all matchings in $U$. This construction implies that $P_{M}(n)$ is the linear projection of a face of $P_{PM}(2n)$.}.
For a detailed discussion of the matching polytope we refer to the 
book of Schrijver~\cite{CombinatorialOptimizationABC-Schrijver}.

\subsection{Our contribution}

Despite of all those nice structural properties we show:
\begin{theorem} \label{thm:MainTheorem}
For all even $n$, the extension complexity of the perfect matching polytope in the complete $n$-node graph is $2^{\Omega(n)}$.
\end{theorem}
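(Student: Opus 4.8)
The strategy is to use the Yannakakis factorization theorem, which states that $\xc(P_{PM})$ equals the nonnegative rank of any slack matrix of $P_{PM}$. So the plan is: (i) pick a good submatrix of the slack matrix, (ii) lower-bound its nonnegative rank. For (i), I would index rows by odd-set inequalities $x(\delta(U)) \geq 1$ and columns by perfect matchings $M$; the slack entry is $|M \cap \delta(U)| - 1$, i.e.\ (number of edges of $M$ crossing the cut $U$) $-1$. This is $0$ exactly when $M$ crosses $U$ exactly once. I want a sub-family of sets $U$ and of matchings $M$ that makes this matrix look like a known communication-complexity hard matrix — ideally a blowup of the \emph{unique disjointness} matrix, following the template of Fiorini et al.~\cite{TSP-lower-bound-Fiorini-et-al-STOC2012}.

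The main obstacle — and the reason matching resisted this approach for two years after the correlation polytope fell — is that the rectangle covering bound (i.e.\ the support-based bound on nonnegative rank, equivalently the nondeterministic communication complexity of the support) is known to be only \emph{polynomial} for the matching polytope; Yannakakis essentially observed this. So one cannot simply reduce to a hard support and quote a communication lower bound. The proof must exploit the \emph{numerical values} of the slack matrix, not just its zero/nonzero pattern. Concretely, I expect the heart of the argument to be a "fooling-set"-style lower bound on nonnegative rank that is robust to the slack values: given a hypothetical nonnegative factorization $S = \sum_{i=1}^r R_i$ into rank-one nonnegative matrices ("rectangles"), I want to show that each rectangle $R_i$ can cover only an exponentially small fraction of the combinatorially meaningful structure, forcing $r = 2^{\Omega(n)}$. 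The technical engine for this should be a \emph{hyperplane/counting argument}: show that the set of $(U,M)$ pairs where the slack is controlled by a single rectangle must satisfy a strong structural restriction (something like: the matchings involved must avoid a fixed large set of edges, or the sets $U$ must be confined to a subcube), and then count how many such restricted pieces are needed. This is where I anticipate needing a genuinely new combinatorial inequality rather than an off-the-shelf communication-complexity lemma.

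In more detail, the steps I would carry out: \textbf{Step 1.} Set up Yannakakis's theorem and reduce Theorem~\ref{thm:MainTheorem} to showing $\rk_+(S) = 2^{\Omega(n)}$ for the odd-cut-vs-perfect-matching slack matrix $S$. \textbf{Step 2.} Restrict attention to a structured family: fix a ground set of $n$ vertices, consider odd sets $U$ of a fixed size (say roughly $n/4$ or built from a fixed partition into small blocks), and perfect matchings built compatibly, so that $|M\cap\delta(U)|$ takes a controlled small range of values; arrange that the "slack $=0$" event encodes something like a perfect-matching / parity constraint on a derived smaller instance. \textbf{Step 3.} The core lemma: prove that in any nonnegative factorization, the number of rectangles must be exponential. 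The cleanest route is an averaging argument — put a probability distribution on a carefully chosen set of $0$-entries $(U,M)$ of $S$, and show that any single nonnegative rank-one matrix $R_i$ that is $\leq S$ entrywise can have mass at most $2^{-\Omega(n)}$ under this distribution on the $0$-set, using the fact that off the $0$-set the entries of $S$ are positive and bounded and that $R_i$ being rank-one forces a product structure incompatible with the combinatorics of matchings on a cut. \textbf{Step 4.} Conclude $r = 2^{\Omega(n)}$, hence $\xc(P_{PM}(n)) = 2^{\Omega(n)}$; the TSP consequence follows from the known reduction quoted in the introduction. The delicate point throughout is Step 3: making the "rectangle is useless" estimate work requires using the actual magnitudes of slacks (e.g.\ that a rectangle forced to be zero on a large combinatorial sub-structure cannot then be large elsewhere), which is precisely the ingredient that goes beyond the rectangle covering bound.
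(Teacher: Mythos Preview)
Your setup (Steps 1, 2, 4) is right, and you correctly diagnose the obstacle: the rectangle covering bound is polynomial for matching, so magnitudes must enter. But Step~3 as written has a real gap. You propose to put a distribution on the \emph{zero} entries of $S$ and bound each $R_i$'s mass there. That is trivially zero: if $R_i\le S$ entrywise and $S_{UM}=0$, then $R_i(U,M)=0$. The constraint ``$R_i$ vanishes on the zero pattern'' is exactly the rectangle-covering constraint you already agreed is too weak; you have not yet said what \emph{quantitative} consequence you extract from it on the positive entries, and this is precisely the missing idea.

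What the paper does is different in a concrete way. It uses the hyperplane separation bound: choose a matrix $W$ with $\langle W,S\rangle=1$ and show that $\langle W,R\rangle\le 2^{-\Omega(n)}$ for every $0/1$ rectangle $R$; then $\xc(P_{PM})\ge 1/(\|S\|_\infty\cdot 2^{-\Omega(n)})$. The choice of $W$ is the key: it puts weight $-\infty$ on the $1$-intersection entries (forcing any surviving rectangle to avoid them), positive weight on the $3$-intersection entries, and a small negative weight on the $k$-intersection entries. So the real inequality to prove is not about zero entries at all, but a comparison between two \emph{positive}-slack strata: for any rectangle $\mathcal R$ with $\mu_1(\mathcal R)=0$,
\[
\mu_3(\mathcal R)\ \le\ \frac{O(1)}{k^2}\,\mu_k(\mathcal R)\ +\ 2^{-\Omega(n)},
\]
where $\mu_\ell$ is the uniform measure on $\{(U,M):|M\cap\delta(U)|=\ell\}$. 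This says that a rectangle avoiding the $1$-cut pairs must over-cover the $k$-cut pairs relative to the $3$-cut pairs by a factor $\Theta(k^2)$. That is exactly the over-covering phenomenon you gesture at in your last sentence, but the precise ``$3$ versus $k$'' comparison, and the quadratic $1/k^2$, are the content. The proof of this inequality is itself nontrivial: it goes through a Razborov-style random-\emph{partition} decomposition of $V$ into blocks $A_1,\dots,A_m,C,D,B_1,\dots,B_m$, so that cuts and matchings compatible with a partition interact only through $C\cup D$; one then splits partitions into \emph{good}, \emph{small}, and \emph{bad}, handles good partitions by a short combinatorial argument (two good $3$-matchings in $C\times D$ must overlap in $\ge 2$ edges, whence the $1/k^2$), and bounds the bad contribution by an entropy argument showing that large subfamilies of cuts/matchings are nearly unbiased on most coordinates. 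None of this machinery is visible in your Step~3; until you specify the two-measure comparison and the partition/entropy mechanism, the plan does not close.
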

This answers a question that was open at least since the paper of 
Yannakakis~\cite{ExpressingCOproblemsAsLPs-Yannakakis1991}. The previously best known lower bound
was $\Omega(n^2)$~\cite{CombBoundsOnExtendedFormulations-FioriniKaibelEtAl-DiscMath13}.
As argued above, this also implies that $\xc(P_M) \geq 2^{\Omega(n)}$.

Yannakakis' paper~\cite{ExpressingCOproblemsAsLPs-Yannakakis1991} also describes a linear projection
from a face of the TSP polytope in an $O(n)$-node graph to the perfect matching polytope in an $n$-node
graph. This immediately implies a lower bound for TSP as well, which improves on the 
$2^{\Omega(\sqrt{n})}$ bound due to~\cite{TSP-lower-bound-Fiorini-et-al-STOC2012}.
\begin{corollary}
For all $n$, the convex hull $P_{\textrm{TSP}}$ of the characteristic vectors of all Hamiltonian cycles in a complete $n$-node
graph has extension complexity $2^{\Omega(n)}$.
\end{corollary}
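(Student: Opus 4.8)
The plan is to obtain the corollary directly from Theorem~\ref{thm:MainTheorem} via the classical reduction of Yannakakis~\cite{ExpressingCOproblemsAsLPs-Yannakakis1991}, using only the monotonicity principle recalled in the introduction: if a polytope $P$ is the linear projection of a face of a polytope $P'$, then $\xc(P') \ge \xc(P)$.

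First I would make the reduction explicit. Given an even integer $m$, Yannakakis builds a complete graph on $N = cm$ nodes, for an absolute constant $c$, by substituting for each of the $m$ original nodes a gadget of $O(1)$ nodes; he then takes the face $F$ of $P_{\textrm{TSP}}(N)$ on which every Hamiltonian cycle is forced to traverse each gadget in one of two canonical ways, and writes down the linear map $\pi$ that reads off, for each original edge $e$, whether the two gadgets attached to $e$ are ``joined'' by the tour. The substance of the reduction is the identity $\pi(F) = P_{PM}(m)$: every vertex of $F$ is sent to the characteristic vector of a perfect matching of the $m$-node graph, and every perfect matching arises this way. I would verify this correspondence and, crucially, that each gadget has constantly many nodes, so that indeed $N = \Theta(m)$.

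Granting this, the conclusion is a short chain of inequalities. Since $F$ is a face of $P_{\textrm{TSP}}(N)$ and $P_{PM}(m) = \pi(F)$ is a linear projection of $F$, the monotonicity principle gives $\xc\big(P_{\textrm{TSP}}(N)\big) \ge \xc(F) \ge \xc\big(P_{PM}(m)\big)$, and by Theorem~\ref{thm:MainTheorem} the right-hand side is $2^{\Omega(m)} = 2^{\Omega(N)}$. To cover an arbitrary node count $n$ rather than only the arithmetic progression $\{cm : m \text{ even}\}$, I would invoke that $\xc\big(P_{\textrm{TSP}}(\cdot)\big)$ is nondecreasing in the number of nodes (one checks that $P_{\textrm{TSP}}(k)$ is the linear image of a face of $P_{\textrm{TSP}}(k+1)$, obtained by merging two consecutive tour-nodes): picking the largest even $m$ with $cm \le n$ yields $m = \Theta(n)$ and hence $\xc\big(P_{\textrm{TSP}}(n)\big) \ge \xc\big(P_{\textrm{TSP}}(cm)\big) \ge 2^{\Omega(n)}$.

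The only genuine obstacle is that the whole argument rests on Theorem~\ref{thm:MainTheorem}; the reduction itself is classical, and the remaining work is confined to re-checking Yannakakis' gadget correspondence and confirming that the blow-up $N = \Theta(m)$ is only a constant factor, so that the exponential bound is transferred intact and not diluted to $2^{\Omega(\sqrt{n})}$ or worse.
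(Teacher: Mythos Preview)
Your proposal is correct and follows exactly the approach the paper takes: the paper simply cites Yannakakis' reduction showing that $P_{PM}(m)$ is a linear projection of a face of $P_{\textrm{TSP}}(O(m))$ and says the corollary follows immediately from Theorem~\ref{thm:MainTheorem} via the monotonicity principle $\xc(P') \ge \xc(P)$. You spell out details the paper leaves implicit (the gadget blow-up being only a constant factor, and the passage from the arithmetic progression $\{cm\}$ to arbitrary $n$), but the argument is the same.
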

Also this bound is tight up to constant factors in the exponent.
After the publication of the conference version of this paper, 
Braun and Pokutta~\cite{DBLP:journals/corr/BraunP14} extended our arguments to show that any polytope $K$
with $P_M \subseteq K \subseteq (1 + \frac{1}{2n}) P_M$ must still have 
extension complexity $2^{\Omega(n)}$, implying that there is no ``FPTAS-style''
approximate extended formulation for matching. We provide a simple reduction
which extends their result over the whole parameter range of $\varepsilon$ (see
Section~\ref{sec:Inapproximability}).
It seems this has not been observed before. 
\begin{corollary} 
Let $K$ be a polytope with $P_M \subseteq K \subseteq (1+\varepsilon) P_M$. 
Then $\xc(K) \geq 2^{\Omega(\min\{ 1/\varepsilon,n\})}$.
\end{corollary}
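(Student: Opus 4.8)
The plan is to deduce this from the robust lower bound of Braun and Pokutta~\cite{DBLP:journals/corr/BraunP14} --- namely that any polytope $K'$ with $P_M(m) \subseteq K' \subseteq (1+\tfrac{1}{2m})P_M(m)$ has $\xc(K') = 2^{\Omega(m)}$ --- by applying it at a scale $m$ tuned to $\varepsilon$. Concretely, I would let $m$ be the largest even integer with $m \le n$ and $m \le \tfrac{1}{2\varepsilon}$, so that $m = \Theta(\min\{1/\varepsilon, n\})$ (if this quantity is below an absolute constant the claimed bound is vacuous). It then suffices to produce a polytope $K'$ with $\xc(K') \le \xc(K)$ that is sandwiched between $P_M(m)$ and $(1+\tfrac{1}{2m})P_M(m)$.

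The construction of $K'$ is by taking a coordinate section of $K$. Fix a set $U$ of $m$ vertices of the complete $n$-node graph and let $L = \{ x \in \setR^E : x_e = 0 \text{ for every edge } e \notin E(U) \}$ be the linear subspace supported on the edges inside $U$. I would set $K' := K \cap L$. Intersecting a polytope with a subspace does not increase its extension complexity: if $\pi(Q) = K$ then $\pi$ maps $Q \cap \pi^{-1}(L)$ onto $K \cap L$, and $Q \cap \pi^{-1}(L)$ is obtained from $Q$ by adding equations only, hence has no more facets than $Q$; thus $\xc(K') \le \xc(K)$.

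It remains to identify the section. Since characteristic vectors of matchings are nonnegative, any convex combination $\sum_M \lambda_M \chi_M$ of matchings of $K_n$ that vanishes on every edge not inside $U$ must have $\lambda_M = 0$ whenever $M$ uses such an edge; hence it is a convex combination of matchings supported inside $U$, and, identifying $E(U)$ with the edge set of $K_m$, one gets $P_M(n) \cap L = P_M(m)$. Moreover $L$ is a linear subspace, so scaling about the origin commutes with intersecting with $L$, giving $(1+\varepsilon)P_M(n) \cap L = (1+\varepsilon)P_M(m)$. Intersecting the hypothesis $P_M(n) \subseteq K \subseteq (1+\varepsilon)P_M(n)$ with $L$ therefore yields
\[
  P_M(m) \ \subseteq\ K' \ \subseteq\ (1+\varepsilon)P_M(m)\ \subseteq\ \big(1+\tfrac{1}{2m}\big)P_M(m),
\]
the last step because $\varepsilon \le \tfrac{1}{2m}$ by the choice of $m$. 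Applying the Braun--Pokutta bound to $K'$ gives $\xc(K) \ge \xc(K') = 2^{\Omega(m)} = 2^{\Omega(\min\{1/\varepsilon, n\})}$, which is the claim; when $1/\varepsilon \ge n$ one takes $m = n$ (or $n-1$) and recovers Theorem~\ref{thm:MainTheorem}.

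There is no serious obstacle here once the Braun--Pokutta theorem is used as a black box; the only things to get right are the two identities for the section $(\,\cdot\,)\cap L$ and the arithmetic of choosing $m$. If instead one insisted on a self-contained argument, the genuine difficulty would be reproving their $\tfrac{1}{2n}$-robust lower bound for $P_M$, which is where the actual hardness resides.
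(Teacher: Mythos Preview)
Your proof is correct and is essentially the same as the paper's: both restrict to an induced subgraph on $\Theta(1/\varepsilon)$ vertices by setting all other edge variables to zero, observe that this section does not increase the extension complexity, check that the resulting polytope still sandwiches the smaller matching polytope up to the same factor $1+\varepsilon$, and then invoke the Braun--Pokutta bound on the smaller instance. The only cosmetic difference is that the paper writes the section explicitly via the extended formulation $Q$ while you phrase it as intersecting $K$ with the coordinate subspace $L$; these are the same operation.
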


\section{Our approach}

Formally, the \emph{extension complexity} $\xc(P)$ is the smallest number of facets
of a (higher-dimensional) polyhedron $Q$ such that there is a linear projection $\pi$
with $\pi(Q) = P$. This definition seems to ignore the dimension, but one can always eliminate a non-trivial 
lineality space from $Q$ and make $Q$ full-dimensional, and then the dimension of $Q$ is bounded by the number of inequalities anyway.
Before we continue our discussion of the matching polytope, consider a general polytope 
$P$ and let $x_1,\ldots,x_v$ be a list of its vertices. Moreover, let $P = \conv\{x_1,\ldots,x_v\} = \{ x \in \setR^n \mid Ax \leq b\}$
be any inequality description, 
say with $f$ inequalities. A crucial concept in 
extended formulations is the \emph{slack matrix} $S \in \setR^{f \times v}_{\geq 0}$ which is defined by
$S_{ij} = b_i - A_ix_j$. Moreover, the \emph{non-negative rank} of a matrix is
\[
 \rk_+(S) = \min\{ r \mid \exists U \in \setR_{\geq 0}^{f× r}, V \in \setR_{\geq 0}^{r× v}: S = UV \}.
\]
Recall that if the non-negativity condition is dropped, we recover the usual rank from linear algebra.
The connection between extension complexity and non-negative rank is expressed by the following theorem
(we reprove the statement here to be fully self-contained):
\begin{theorem}[Yannakakis~\cite{ExpressingCOproblemsAsLPs-Yannakakis1991}] \label{thm:Yannakakis}
Let $P$ be a polytope\footnote{For technical reasons we will always assume that the dimension of $P$ is at least $1$.} with vertices $\{ x_1,\ldots,x_v\}$,  inequality
description $P=\{ x \in \setR^n \mid Ax \leq b\}$ and corresponding slack matrix $S$.
Then  $\xc(P) = \rk_+(S)$. 
\end{theorem}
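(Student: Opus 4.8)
The plan is to prove the two inequalities $\xc(P) \le \rk_+(S)$ and $\xc(P) \ge \rk_+(S)$ separately, in each direction translating explicitly between a small extension of $P$ and a small nonnegative factorization of $S$.

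For $\xc(P) \le \rk_+(S)$ I would fix a nonnegative factorization $S = UV$ with $U \in \setR_{\ge 0}^{f \times r}$, $V \in \setR_{\ge 0}^{r \times v}$ and $r = \rk_+(S)$, and consider the polyhedron $Q = \{(x,y) \in \setR^n \times \setR^r : Ax + Uy = b,\ y \ge 0\}$ together with the projection $\pi(x,y) = x$. One checks $\pi(Q) = P$ as follows: if $(x,y) \in Q$ then $Ax = b - Uy \le b$ since $U,y \ge 0$, so $x \in P$; conversely, writing the $j$-th column of $V$ as $v_j \ge 0$ we get $Ax_j + U v_j = A x_j + S_{\cdot j} = b$, so $(x_j,v_j) \in Q$, and taking convex combinations lifts all of $P = \conv\{x_1,\dots,x_v\}$ into $Q$. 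Since $Q$ is cut out by the $r$ inequalities $y \ge 0$ together with equations, it has at most $r$ facets, whence $\xc(P) \le r = \rk_+(S)$.

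For the reverse inequality I would start from an arbitrary polyhedron $Q = \{y \in \setR^d : Ey \le f\}$ with $r$ facets and a linear map $\pi$ with $\pi(Q) = P$, and produce a nonnegative factorization of $S$ of inner dimension $r$. First reduce to $Q$ pointed: its lineality space must lie in $\ker\pi$ (otherwise $\pi(Q) = P$ would contain a line, contradicting boundedness of a polytope), so it can be quotiented out without changing the facet count or the projection. Then, for each vertex $x_j$ pick a preimage $y^j \in Q$ with $\pi(y^j) = x_j$, and set $V_{kj} := f_k - E_k y^j \ge 0$. For each inequality $A_i x \le b_i$ of the description, the affine function $y \mapsto b_i - A_i\pi(y)$ is nonnegative on the nonempty polyhedron $Q$, so by the affine form of Farkas' lemma it equals $\sum_{k=1}^r u^i_k(f_k - E_k y) + \mu_i$ for suitable $u^i \ge 0$ and $\mu_i \ge 0$; evaluating at $y = y^j$ gives $S_{ij} = \sum_k u^i_k V_{kj} + \mu_i$.

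The one delicate point — and the only place the hypothesis $\dim P \ge 1$ is needed — is disposing of the stray additive constant $\mu_i$. When $A_i x \le b_i$ is facet-defining it is tight at some vertex $x_{j(i)}$, so $S_{i,j(i)} = 0$ forces $\mu_i = 0$, and then $S = UV$ with $U_{ik} = u^i_k$ gives $\rk_+(S) \le r$; for a description containing redundant inequalities one first observes that the all-ones row $\mathbf 1^\top$ lies in the nonnegative row span of $V$ — equivalently, that the constant function $1$ is a nonnegative combination of the functionals $f_k - E_k(\cdot)$ — which holds precisely because $Q$ is pointed with $\dim Q \ge \dim P \ge 1$, and then one absorbs each $\mu_i\mathbf 1^\top$ into $U$. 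I expect this bookkeeping around the constant term, together with the reduction to a pointed (essentially bounded) $Q$, to be the main technical nuisance; everything else is the routine back-and-forth between the factorization $S = UV$ and the extension $\{Ax + Uy = b,\ y \ge 0\}$.
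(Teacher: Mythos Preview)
Your approach matches the paper's almost exactly: both use the extension $\{(x,y): Ax + Uy = b,\ y \geq 0\}$ for the direction $\xc(P) \leq \rk_+(S)$, and for the reverse direction both pick lifts of the vertices and invoke LP duality/Farkas to write each $b_i - a_i x$ as a nonnegative combination of the slacks of $Q$. The paper, incidentally, simply asserts that each valid inequality $a_i x \leq b_i$ is an \emph{exact} conic combination of the constraints of $Q$ (i.e.\ your $\mu_i = 0$), relegating the justification to the footnote about $\dim P \geq 1$; you are more explicit in isolating the additive constant.

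However, your justification for absorbing $\mu_i$ has a gap. The claim that ``$Q$ pointed with $\dim Q \geq 1$'' forces the constant function $1$ into the nonnegative cone of the slack functionals $f_k - E_k(\cdot)$ is false: take $Q = \{(y_1,y_2) : y_1 \geq 0,\ y_2 \geq 0,\ y_1 + y_2 \geq 1\}$, which is pointed and $2$-dimensional, yet no nonnegative combination of $y_1,\ y_2,\ y_1+y_2-1$ is a positive constant. What actually makes the absorption work is not pointedness of $Q$ but that $\pi(Q) = P$ is \emph{bounded} with $\dim P \geq 1$: pick any $c$ with $\max_{x \in P} c^\top x > \min_{x \in P} c^\top x$; both linear programs $\max\{\pm c^\top \pi(y) : Ey \leq f\}$ are then finite, and summing their optimal dual solutions $u,u' \geq 0$ gives $E^\top(u+u') = 0$ and $f^\top(u+u') = \max_P c^\top x - \min_P c^\top x > 0$, so $(u+u')^\top(f - Ey)$ is identically a positive constant on $Q$. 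With this correction your argument goes through --- and the preliminary reduction to pointed $Q$ becomes unnecessary.
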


\begin{proof}
Let $A_1,\ldots,A_f$ be the rows of matrix $A$. 
We begin with showing that $r := \rk_+(S) \Rightarrow \xc(P) \leq r$.

So, suppose that we have a non-negative factorization $S = UV$ 
with $U \in \setR^{f \times r}_{\geq 0}$ and $V \in \setR^{r \times v}_{\geq 0}$.
We claim that $Q := \{ (x,y) \in \setR^{n+r} \mid Ax + Uy = b; \; y \geq \bm{0}\}$ is a linear extension and the 
projection $\pi$ with $\pi(x,y) = x$ satisfies that $\pi(Q) = P$; 
in other words, we claim that $P = \{ x \in \setR^n \mid \exists y \in \setR_{\geq 0}^r: Ax + Uy = b\}$. To see this, take a vertex $x_j$ of $P$, then we can choose 
the witness $y := V^j$ and have $(x_j,y) \in Q$ as $A_ix_j + U_iV^j = A_ix_j + S_{ij} = b_i$.
On the other hand, if $x \notin P$, then there is some constraint $i$ 
with $A_ix > b_i$ and no matter what $y \geq \bm{0}$ is chosen, 
we always have $A_ix + U_iy \geq A_ix > b_i$.

For the second part, we have to prove that $r := \xc(P) \Rightarrow \rk_+(S) \leq r$.
Hence, suppose that we have a linear extension $Q = \{ (x,y) \in \setR^{n+k} \mid Bx + Cy \leq d\}$ 
with $r$ inequalities
and a linear projection $\pi$ so that $\pi(Q) = P$. After a linear transformation, 
we may assume that $\pi(x,y) = x$, that means $\pi$ is just the projection on the
$x$-variables. We need to come up with vectors $u_i,v_j \in \setR^r_{\geq 0}$
so that for each constraint $i$ and each vertex $x_j$ one has $\left<u_i,v_j\right> = S_{ij}$.
For each point $x_j$, fix a \emph{lift} $(x_j,y_j) \in Q$ 
and choose $v_j := d - Bx_j - Cy_j \in \setR_{\geq 0}^r$ as the vector of slacks that the lift has
w.r.t. $Q$. 
\begin{figure}
\begin{center}
\psset{xunit=2.0cm,yunit=1.4cm}
\begin{pspicture}(-0.4,-0.1)(2.0,2.8)
\placeIIID{0}{0}{1.5}{\pnode(0,0){q1}} \placeIIID{0}{0}{0}{\pnode(0,0){p1}}
\placeIIID{0}{1}{1.5}{\pnode(0,0){q2}} \placeIIID{0}{1}{0}{\pnode(0,0){p2}}
\placeIIID{1.25}{0.25}{2}{\pnode(0,0){q3}} \placeIIID{1.25}{0.25}{0}{\pnode(0,0){p3}}
\placeIIID{1.25}{1.25}{2}{\pnode(0,0){q4}} \placeIIID{1.25}{1.25}{0}{\pnode(0,0){p4}}
\placeIIID{1}{0}{1}{\pnode(0,0){q5}} \placeIIID{1}{0}{0}{\pnode(0,0){p5}}
\placeIIID{1}{1}{1}{\pnode(0,0){q6}} \placeIIID{1}{1}{0}{\pnode(0,0){p6}}
\pspolygon[linewidth=0.75pt,fillstyle=solid,fillcolor=lightgray](p1)(p2)(p4)(p3)(p5)
\pspolygon[linewidth=0.75pt,fillstyle=solid,fillcolor=lightgray,opacity=0.5](q1)(q2)(q6)(q5)
\pspolygon[linewidth=0.75pt,fillstyle=solid,fillcolor=lightgray,opacity=0.5](q1)(q3)(q5)
\pspolygon[linewidth=0.75pt,fillstyle=solid,fillcolor=lightgray,opacity=0.5](q3)(q4)(q6)(q5)
\ncline[linestyle=dashed,linewidth=0.5pt]{p1}{q1}
\ncline[linestyle=dashed,linewidth=0.5pt]{p2}{q2}
\ncline[linestyle=dashed,linewidth=0.5pt]{p3}{q3}
\ncline[linestyle=dashed,linewidth=0.5pt]{p4}{q4}
\ncline[linestyle=dashed,linewidth=0.5pt]{p5}{q5}
\ncline[linestyle=dashed,linewidth=0.5pt]{p6}{q6}
\pspolygon[fillstyle=solid,fillcolor=lightgray,opacity=0.7](q1)(q2)(q4)(q3)
\nput[labelsep=30pt]{180}{q6}{$Q$}
\ncline[linewidth=2pt,linecolor=blue]{p1}{p2} \naput[labelsep=0pt,npos=0.6]{$\blue{A_ix+\bm{0}y \leq b_i}$}
\placeIIID{0}{0.5}{0}{\pnode(0,0){f1}}
\placeIIID{-0.5}{0.5}{0}{\pnode(0,0){f2}}
\ncline[linewidth=1.5pt,linecolor=blue,arrowsize=5pt]{->}{f1}{f2}
\placeIIID{0.25}{0.55}{1.6}{\pnode(0,0){qf1head}}
\placeIIID{0.1}{0.55}{2.1}{\pnode(0,0){qf1tail}}
\ncline[arrowsize=5pt,linewidth=1pt]{->}{qf1head}{qf1tail}
\placeIIID{0.25}{0.45}{1.3}{\pnode(0,0){qf2head}}
\placeIIID{0.19}{0.49}{1.1}{\pnode(0,0){qf2mid}}
\placeIIID{0.1}{0.55}{0.8}{\pnode(0,0){qf2tail}}
\ncline[arrowsize=5pt,linewidth=1pt,linestyle=solid,linecolor=gray]{-}{qf2head}{qf2mid}
\ncline[arrowsize=5pt,linewidth=1pt]{->}{qf2mid}{qf2tail}
\psdots[linewidth=1.5pt](q1)(q2)(q3)(q4)(q5)(q6)
\psdots[linewidth=1.5pt](p1)(p2)(p3)(p4)(p5)(p6)
\ncline[linewidth=2pt]{q1}{q2} 
\nput[labelsep=4pt]{-90}{p5}{$\textcolor{red!50!black}{x_j}$}\psdots[linewidth=1.5pt,linecolor=red!50!black](p5)
\nput[labelsep=2pt]{-90}{q5}{$\textcolor{red!50!black}{(x_j,y_j)}$}
\psdots[linewidth=1.5pt,linecolor=red!50!black](q5)
\rput[c](0.75,0.25){$P$}
\end{pspicture}
\caption{Visualization of Yannakakis' Theorem.}
\end{center}
\end{figure}
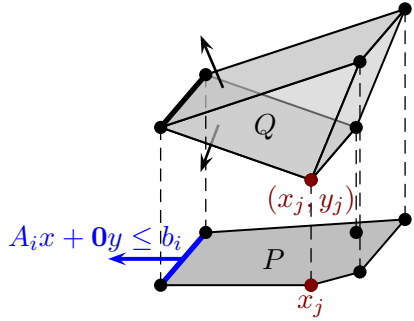
By LP duality we know that each constraint $A_ix + \bm{0}y \leq b_i$ can be derived
as a conic combination of the system $Bx + Cy \leq d$.
In other words, there is a vector $u_i \in \setR^r_{\geq 0}$ so that
\[
  u_i^T \begin{pmatrix} B \\ C \\ d \end{pmatrix} = \begin{pmatrix} A_i \\ \bm{0} \\ b_i \end{pmatrix}
\]
Now multiplying gives that 
\begin{eqnarray*}
 \left<u_i,v_j\right> &=& \left<u_i,d - Bx_j - Cy_j\right> \\
&=& \underbrace{u_i^Td}_{=b_i} - \underbrace{u_i^TB}_{=A_i}x_j - \underbrace{u_i^TC}_{=\bm{0}} y_j =  b_i - A_ix_j = S_{ij}. 
\end{eqnarray*}
\end{proof}
In particular, Theorem~\ref{thm:Yannakakis} implies that instead of lower bounding the
geometric quantity $\xc(P)$ for a polytope $P$ with slack matrix $S$, it fully suffices to 
find a lower bound for the algebraic quantity $\textrm{rk}_{+}(S)$.
A potential way of lower bounding $\rk_+(S)$ was already pointed
out in the classical paper of Yannakakis and is known as \emph{rectangle covering lower bound}: 
Suppose that $r = \rk_+(S)$ and $S = UV$ with $U,V \geq \bm{0}$. Then
\[
 \textrm{supp}(S) = \{ (i,j) \mid S_{ij} > 0\} = \bigcup_{\ell=1}^r  \left(\{ i \mid U_{i\ell} > 0\} \times \{ j \mid V_{\ell j} > 0 \}\right)
\]
is a covering of the support of $S$ with $r$ rectangles.
In fact,  Fiorini et al.~\cite{TSP-lower-bound-Fiorini-et-al-STOC2012} show that the number of rectangles necessary
for such a covering of the slack-matrix of the correlation polytope is exponential, which in turn lower bounds the
extension complexity. More precisely, this is done by considering a cleverly chosen submatrix of the slack-matrix and then applying Razborov~\cite{Disjointness-Razborov90} as a blackbox.

So, let us discuss the situation for the perfect matching polytope. Since the number of degree constraints
and non-negativity inequalities is polynomial anyway, we consider the part of the slack matrix that is
induced by the odd set inequalities. In other words, we consider the matrix $S$ with
\[
   S_{UM} = |M \cap \delta(U)| - 1  \quad \forall M \subseteq E\textrm{ perfect matching} \quad \forall U \subseteq V: |U| \textrm{ odd}.
\]
The first natural approach would be to check whether the rectangle covering lower bound is superpolynomial. Unfortunately, 
this is not the case, as was already observed in~\cite{ExpressingCOproblemsAsLPs-Yannakakis1991}. 
To see this, take any pair $e_1,e_2 \in E$ of non-adjacent edges and choose 
\[
  \mathcal{M}_{e_1,e_2} := \{ M \mid e_1,e_2 \in M\}
\quad \textrm{and} \quad \mathcal{U}_{e_1,e_2} := \{ U \mid e_1,e_2 \in \delta(U) \},
\]
then we obtain $O(n^4)$ many rectangles of the form $\mathcal{U}_{e_1,e_2} \times \mathcal{M}_{e_1,e_2}$.
First of all, we have $S_{UM} \geq |\{ e_1,e_2\}| - 1 \geq 1$ for each $U \in \mathcal{U}_{e_1,e_2}$ and $M \in \mathcal{M}_{e_1,e_2}$, hence
the rectangles contain only entries $(U,M)$ that have positive slack. But every entry $(U,M)$ with $S_{UM}\geq 1$ is 
also contained in at least one such rectangle. To be precise, if $S_{UM} = k$ and $\delta(U) \cap M = \{ e_1,\ldots,e_{k+1}\}$, then
the entry $(U,M)$ lies in ${k+1 \choose 2}$ rectangles. So the approach with the rectangle covering bound does not work.

On the other hand, 
considering the rectangle covering as a sum of $O(n^4)$ many $0/1$ rank-1 matrices also does not provide a valid non-negative 
factorization of $S$. The reason is that an entry with $S_{UM}=k$ is contained in $\Theta(k^2)$ many rectangles instead of
just $k$ many, thus entries with large slack are \emph{over-covered}. Moreover, we see no way of rescaling the rectangles 
in order to fix the problem. This raises the naive question: 
\begin{quote}
\emph{Maybe every covering of $S$ with polynomially many rectangles must over-cover entries with large slack}?
\end{quote}
Surprisingly, it turns out that the answer is ``\emph{yes}''!

To make this more formal, we will use the \emph{hyperplane separation lower bound} suggested by 
Fiorini~\cite{FioriniCorruptionBound}. 
This bound has been known to experts but has not 
appeared explicitly in the literature, hence we state it here in generality and include a proof.
For matrices $S,W \in \setR^{f \times v}$, we will write $\left<S,W\right> = \sum_{i=1}^f \sum_{j=1}^v S_{ij}\cdot W_{ij}$
as their \emph{Frobenius inner product}. Intuitively, the hyperplane separation bound says that if
we can find a linear function $W$ that gives a large value for the slack-matrix $S$, but only 
small values on any rectangle, then the extension complexity is large.  
\begin{lemma}[Hyperplane separation lower bound~\cite{FioriniCorruptionBound}\label{lem:FioriniCorruptionBound}]
Let $S \in \setR_{\geq 0}^{f \times v}$ be the slack-matrix of any polytope $P$ 
and let  $W \in \setR^{f \times v}$ be any matrix. Then
\[
  \xc(P) \geq \frac{\left<W,S\right>}{\| S \|_{\infty} \cdot \alpha}  
\]
with $
 \alpha := \max\{  \left<W,R\right> \mid R \in \{ 0,1\}^{f \times v} \textrm{ rank-1 matrix}\}$.
\end{lemma}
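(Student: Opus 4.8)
The plan is to deduce the bound directly from Yannakakis' Theorem (Theorem~\ref{thm:Yannakakis}) together with a ``rounding'' argument that reduces arbitrary nonnegative rank-one matrices to $0/1$ rank-one matrices. Write $r := \xc(P) = \rk_+(S)$ and fix a nonnegative factorization $S = \sum_{\ell=1}^r u_\ell v_\ell^\top$ with $u_\ell \in \setR_{\geq 0}^f$ and $v_\ell \in \setR_{\geq 0}^v$. Since the Frobenius inner product is linear, $\langle W, S\rangle = \sum_{\ell=1}^r \langle W, u_\ell v_\ell^\top\rangle$, so it suffices to show that each term is at most $\|S\|_\infty \cdot \alpha$; summing then gives $\langle W,S\rangle \leq r\,\|S\|_\infty\,\alpha$, which rearranges to the claim (we may assume $\langle W, S\rangle > 0$, otherwise there is nothing to prove, and then in particular $\alpha > 0$).

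The first ingredient is a domination bound: because every summand $u_{\ell'} v_{\ell'}^\top$ is entrywise nonnegative and $S$ is a slack matrix (hence entrywise nonnegative), we have $0 \leq u_\ell v_\ell^\top \leq S$ entrywise for every $\ell$, and in particular $\|u_\ell v_\ell^\top\|_\infty \leq \|S\|_\infty$. The second, and main, ingredient is the claim that for any $u \in \setR_{\geq 0}^f$, $v \in \setR_{\geq 0}^v$ one has $\langle W, uv^\top\rangle \leq \|uv^\top\|_\infty \cdot \alpha$. To prove it, rescale to $\hat u := u / \|u\|_\infty \in [0,1]^f$ and $\hat v := v/\|v\|_\infty \in [0,1]^v$ (if $u$ or $v$ is zero the claim is trivial), so that $uv^\top = \|uv^\top\|_\infty \cdot \hat u \hat v^\top$ and it remains to show $\langle W, \hat u\hat v^\top\rangle \leq \alpha$. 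Here I would maximize $\langle W, ab^\top\rangle = \sum_i a_i\,(Wb)_i$ over $a \in [0,1]^f$, $b \in [0,1]^v$ in two stages: for fixed $b$ the expression is linear in $a$ and is maximized by setting $a_i = 1$ exactly when $(Wb)_i > 0$, hence by some $a \in \{0,1\}^f$; then for that fixed $a$ the expression $\sum_j b_j\,(a^\top W)_j$ is linear in $b$ and is maximized by some $b \in \{0,1\}^v$. Thus the maximum over rank-one matrices with entries in $[0,1]$ is attained at a $0/1$ rank-one matrix and equals $\alpha$, so $\langle W, \hat u \hat v^\top\rangle \leq \alpha$, proving the claim.

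Combining the two ingredients, $\langle W, u_\ell v_\ell^\top\rangle \leq \|u_\ell v_\ell^\top\|_\infty\,\alpha \leq \|S\|_\infty\,\alpha$ for each $\ell$, and summing over $\ell = 1,\dots,r$ finishes the proof. The only genuinely delicate point is the reduction from $[0,1]$-valued to $0/1$-valued rank-one matrices: one has to check that the two-stage coordinatewise optimization is legitimate (linearity in each block of variables, and that the maxima are actually attained), and that rescaling $uv^\top$ interacts correctly with the signs of the entries of $W$ — this is precisely why it matters that $\alpha$ is a maximum over \emph{all} $0/1$ rank-one matrices, including the zero matrix, so that $\alpha \geq 0$ and the inequality $\langle W, \hat u \hat v^\top\rangle \leq \alpha$ is meaningful rather than vacuous. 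Everything else is just linearity of $\langle\cdot,\cdot\rangle$ and the entrywise nonnegativity built into the definition of a slack matrix.
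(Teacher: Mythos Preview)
Your proof is correct and follows essentially the same route as the paper: decompose $S$ into $r=\rk_+(S)$ nonnegative rank-one summands, bound each $\langle W,u_\ell v_\ell^\top\rangle$ by $\|u_\ell v_\ell^\top\|_\infty\cdot\alpha \leq \|S\|_\infty\cdot\alpha$ via the rescaling-to-$[0,1]$ and block-linear optimization argument, and sum. The paper's version is organized slightly differently (it first proves the $[0,1]$-to-$\{0,1\}$ reduction as a standalone claim, then applies it), but the ideas and steps are identical.
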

\begin{proof}
First, note that the assumption provides that even for any \emph{fractional} rank-1 matrix $R \in [0,1]^{f \times v}$ one has $\left<W,R\right> \leq \alpha$. 
To see this, take an arbitrary rank-1 matrix $R \in [0,1]^{f \times v}$ and write it as $R = xy^T$ with vectors $x$ and $y$. 
After scaling one 
can assume that $x \in [0,1]^f$ and $y \in [0,1]^v$.
Now suppose that our $R = xy^T$ is an optimal solution to
\[
   \max\Big\{ \sum_{i=1}^f \sum_{j=1}^v W_{ij}\cdot x_iy_j \mid x_i,y_j \in [0,1] \; \forall i \in [f] \; \forall j \in [v] \Big\}
\]
and suppose $R$ is not binary, say because  $x \notin \{ 0,1\}^v$. If we fix $y$, this optimization problem is
linear in $x$ and there is always an $x' \in \{ 0,1\}^v$ that is also optimal. Similarly, also $y$ can be made binary
so that the optimum solution to this LP is a rectangle.  
Geometrically speaking we have just proven that $\textrm{conv}\{ R \in [0,1]^{f \times v}: \textrm{rank}(R) \leq 1\} = \textrm{conv}\{ R \in \{ 0,1\}^{f \times v} : \textrm{rank}(R) \leq 1\}$, even though the set of matrices of rank at most $1$
is not a convex set itself.

By Theorem~\ref{thm:Yannakakis} by have $\xc(P) = \rk_{+}(S)$.  
Now abbreviate $r = \rk_{+}(S)$, then there are $r$ rank-1 matrices $R_1,\ldots,R_r$ with $S = \sum_{i=1}^r R_i$. 
For each $i \in [r]$, we know that $R_i$ is non-negative and hence $\frac{1}{\|R_i\|_{\infty}} R_i$ is a matrix 
with entries in $[0,1]$. We obtain
\[
 \left<W,S\right> = \sum_{i=1}^r \|R_i\|_{\infty} \cdot \underbrace{\left<W,\frac{R_i}{\|R_i\|_{\infty}}\right>}_{\leq \alpha} \leq \alpha \cdot \sum_{i=1}^r \underbrace{\|R_i\|_{\infty}}_{\leq \|S\|_{\infty}} \leq \alpha \cdot r \cdot \|S\|_{\infty}.
\]
Rearranging gives the claim.
\end{proof}

Now, let us go back to the perfect matching polytope and see how we can make use of this bound. 
Let $k \geq 3$ be an odd integer constant that we choose later. We consider 
only complete graphs $G=(V,E)$ that have $|V| = n = 3m(k-3) + 2k$ many vertices, for some odd 
integer\footnote{In other words, we show a lower bound only on $\xc(P_{PM}(n))$ for certain $n$. But note that $P_{PM}(n)$ is a face of $P_{PM}(n')$ for $n \leq n'$ and hence 
$\xc(P_{PM}(n)) \leq \xc(P_{PM}(n'))$. Thus the lower bound on the extension complexity indeed holds for every even $n$.} $m$.
Wherever convenient, we will assume that $n$ and $m$ are large enough, compared to $k$. 

Let $\mathcal{M}_{\textrm{all}} := \{ M \subseteq E \mid M\textrm{ is perfect matching}\}$ be the set of all perfect matchings in $G$.
We fix $t := \frac{m+1}{2}(k-3) + 3$, which is an odd integer, and consider the set
$\mathcal{U}_{\textrm{all}} := \{ U \subseteq V \mid |U| = t \}$ of all $t$-node cuts in $G$. 
Let
\[
  Q_{\ell} := \{ (U,M) \in \mathcal{U}_{\textrm{all}} \times \mathcal{M}_{\textrm{all}} \mid |\delta(U) \cap M| = \ell \}
\]
be the set of pairs of cuts and matchings intersecting in $\ell$ edges and let $\mu_{\ell}$ be the \emph{uniform measure} on $Q_{\ell}$.
In the following, a \emph{rectangle} is of the form $\mathcal{R} = \mathcal{U} \times \mathcal{M}$
with $\mathcal{M} \subseteq \mathcal{M}_{\textrm{all}}$ and
$\mathcal{U} \subseteq \mathcal{U}_{\textrm{all}} $. 
Note that for parity reasons $\mu_{2i}(\mathcal{R}) = 0$ for all  $i \in \setZ_{\geq 0}$. 

Now we want to choose a matrix $W \in \setR^{\mathcal{U}_{\textrm{all}} \times \mathcal{M}_{\textrm{all}}}$ for which the hyperplane separation bound provides an exponential lower bound.
We choose
\[
  W_{U,M} = \begin{cases} 
  -\infty &  |\delta(U) \cap M| = 1 \\
  \frac{1}{|Q_{3}|} & |\delta(U) \cap M| = 3 \\
 -\frac{1}{k-1} \cdot \frac{1}{|Q_k|} & |\delta(U) \cap M| = k \\
  0 & \textrm{otherwise}.
\end{cases}
\]
The intuition is that we reward a rectangle for covering an entry in $Q_3$, punish it for covering
entries in $Q_k$ and completely forbid to cover any entry in $Q_1$. 
First, it is not difficult to see that 
\begin{equation} \label{eq:SWbound}
  \left<W,S\right> = 0  + (3-1)\cdot |Q_3| \cdot \frac{1}{|Q_3|} - (k-1) \cdot |Q_k| \cdot \frac{1}{k-1} \cdot \frac{1}{|Q_k|}  = 1.
\end{equation}
Our hope is that any large rectangle 
$\mathcal{R}$ must over-cover  entries with $|\delta(U) \cap M| = k$ and hence $\left<W,\mathcal{R}\right>$ is small. 
In fact, we can prove
\begin{lemma} \label{lem:RWupperBound}
For any large enough odd constant $k$  ($k := 501$ suffices) and any rectangle $\mathcal{R}$ with $\mathcal{R} = \mathcal{U} \times \mathcal{M}$ with $\mathcal{U} \subseteq \mathcal{U}_{\textrm{all}}$ and $\mathcal{M} \subseteq \mathcal{M}_{\textrm{all}}$
one has $\left<W,\mathcal{R}\right> \leq 2^{-\delta n}$ where $\delta := \delta(k) > 0$ is a constant.
\end{lemma}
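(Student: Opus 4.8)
\medskip
\noindent\textbf{Proof plan.}
First I would dispose of the degenerate case: if $\mathcal{R}=\mathcal{U}\times\mathcal{M}$ contains a pair $(U,M)$ with $|\delta(U)\cap M|=1$ then $W_{U,M}=-\infty$, so $\langle W,\mathcal{R}\rangle=-\infty\le 2^{-\delta n}$. Hence assume $\mathcal{R}$ avoids $Q_1$; by the parity remark every $(U,M)\in\mathcal{R}$ then has $|\delta(U)\cap M|\in\{3,5,7,\dots\}$, and since $W$ is supported on $Q_3\cup Q_k$ among these levels,
\[
  \langle W,\mathcal{R}\rangle \;=\; \mu_3(\mathcal{R})-\tfrac{1}{k-1}\,\mu_k(\mathcal{R}),\qquad \mu_\ell(\mathcal{R}):=\frac{|\mathcal{R}\cap Q_\ell|}{|Q_\ell|}.
\]
So the lemma reduces to the \emph{corruption} statement: every $Q_1$-avoiding rectangle satisfies $\mu_3(\mathcal{R})\le\frac{1}{k-1}\mu_k(\mathcal{R})+2^{-\delta n}$ -- a rectangle capturing a non-negligible fraction of $Q_3$ is forced to capture a roughly $(k-1)$-times larger fraction of $Q_k$, i.e.\ to \emph{over-cover} the high-intersection pairs. (Sanity check: the pair-of-edges rectangles $\mathcal{U}_{e_1,e_2}\times\mathcal{M}_{e_1,e_2}$ have $\mu_k/\mu_3=\Theta(k^2)$, so for them $\langle W,\mathcal{R}\rangle<0$ already when $k>6$; the lemma asserts that no cleverer rectangle does essentially better, and pushing the gap all the way down to $2^{-\delta n}$ is what forces $k$ to be a large absolute constant such as $501$.)

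The device for moving mass from $Q_3$ to $Q_k$ is a switch that respects the product structure: given $(U,M)\in Q_\ell$, take an edge $\{f,f'\}$ of $M$ with both endpoints inside $U$ and an edge $\{c,c'\}$ with both endpoints outside, and rematch them across the cut ($\{f,f'\},\{c,c'\}\mapsto\{f,c\},\{f',c'\}$); this keeps $|U|=t$ and $M$ a perfect matching while raising the crossing number by $2$, so $Q_\ell\to Q_{\ell+2}$, and iterating on $(k-3)/2$ disjoint edges gives $Q_3\to Q_k$. The key is that $U$ is untouched, so membership in $\mathcal{U}$ survives for free; the danger is that the rematched matching leaves $\mathcal{M}$. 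A double count on the bipartite ``switching graph'' gives left-degree $\Theta_k(n^{k-3})$ and right-degree $O_k(1)$, and a short binomial identity shows their ratio equals $|Q_k|/|Q_3|$; thus if every switch out of $\mathcal{R}\cap Q_3$ stayed in $\mathcal{R}$ we would already obtain $\mu_k(\mathcal{R})\ge\mu_3(\mathcal{R})$. Upgrading the constant from $1$ to the needed $k-1$ is where the finer structure of $\mathcal{R}$, and the precise value of $k$, enter: a $Q_1$-avoiding rectangle with many $Q_3$-pairs should be (essentially) confined to a sub-rectangle sharing a growing number of crossing edges between $\mathcal{U}$ and $\mathcal{M}$, for which $\mu_k/\mu_3$ is polynomially large in $k$.

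The real obstacle is controlling the coupling between the two factors so that the switches do stay in $\mathcal{M}$, and the only available lever is the hypothesis $\mathcal{R}\cap Q_1=\emptyset$ -- a strong \emph{joint} constraint saying that no $M\in\mathcal{M}$ near-perfectly matches any $U\in\mathcal{U}$. (A one-matching version of the lemma is false -- letting $\mathcal{U}$ be all $Q_3$-cuts of a single matching breaks it -- so the proof must use that all of $\mathcal{M}$ avoids $Q_1$-pairs with all of $\mathcal{U}$ at once.) I would feed this in through the companion $Q_3\!\to\!Q_1$ switch (expel the inside endpoints of two of the three crossing edges and bring in a full outside edge): every $(U,M)\in\mathcal{R}\cap Q_3$ then lies next to $\Theta(n)$ pairs $(U',M)\in Q_1$, all forbidden, and ranging over many $M\in\mathcal{M}$ this rigidity should pin $\mathcal{U}$ -- and dually $\mathcal{M}$ -- down to a family adapted to some block decomposition of $V$ into $\Theta(n)$ constant-size gadgets plus a bounded boundary part (which is, I expect, what the arithmetic $n=3m(k-3)+2k$, $t=\frac{m+1}{2}(k-3)+3$ is engineered to produce). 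Once $\mathcal{R}$ is known to respect the blocks, the rematchings can be carried out block-locally and then remain in $\mathcal{M}$, so the double count goes through gadget by gadget.

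Finally, the bound is exponentially small rather than a mere constant below $1$ because the forced structure is a product over the $\Theta(n)$ gadgets: a rectangle failing the over-covering inequality would have to ``err'' in a constant fraction of the gadgets independently, an event of probability $2^{-\Omega(n)}$ by a Chernoff-type estimate, which supplies both the factor $k-1$ in the main case and the additive slack $2^{-\delta n}$, with $\delta=\delta(k)>0$ and $k\ge 501$ chosen so that the per-gadget gain beats $k-1$. I expect the gadget-structure theorem of the previous paragraph -- extracting global structure from the single hypothesis $\mathcal{R}\cap Q_1=\emptyset$, against a product set $\mathcal{R}$ that defeats any purely local move -- to be by far the hardest step; everything afterward is binomial book-keeping and concentration.
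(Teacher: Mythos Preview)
Your opening reduction is correct: disposing of the $\mu_1(\mathcal{R})>0$ case and rewriting $\langle W,\mathcal{R}\rangle=\mu_3(\mathcal{R})-\frac{1}{k-1}\mu_k(\mathcal{R})$ is exactly how the paper begins, and your target inequality $\mu_3(\mathcal{R})\le \frac{c}{k}\mu_k(\mathcal{R})+2^{-\delta n}$ is the right one (the paper in fact proves the stronger $\mu_3\le\frac{400}{k^2}\mu_k+2^{-\delta m}$).

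The genuine gap is the ``gadget-structure theorem'' you are banking on. You propose to show that the hypothesis $\mathcal{R}\cap Q_1=\emptyset$ forces $\mathcal{U}$ and $\mathcal{M}$ to (approximately) respect some fixed block decomposition of $V$, after which switches can be performed block-locally and stay inside $\mathcal{M}$. No such structure theorem holds, and the paper does not attempt one: an arbitrary $Q_1$-avoiding rectangle need not align with any single partition of $V$, and your $Q_3\to Q_1$ switch argument only produces, for each fixed $M$, $\Theta(n)$ forbidden \emph{cuts} $U'$---it gives no handle on which \emph{matchings} lie in $\mathcal{M}$, which is precisely what your $Q_3\to Q_k$ switches need. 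Your own observation that the one-matching version is false already shows that the $Q_1$-constraint cannot be cashed out on the $\mathcal{U}$-side alone; but your proposal never supplies a mechanism that acts on the $\mathcal{M}$-side.

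What the paper does instead is to \emph{average over} block decompositions rather than pin $\mathcal{R}$ to one. It introduces random partitions $T=(A,C,D,B)$ of $V$ (this is what the arithmetic $n=3m(k-3)+2k$ is for) and rewrites both $\mu_3(\mathcal{R})$ and $\mu_k(\mathcal{R})$ as $\E_T[\,\cdot\,]$ of quantities that, for a \emph{fixed} $T$, depend only on the behaviour of $\mathcal{U}$ and $\mathcal{M}$ on the $2k$ ``core'' nodes $C\cup D$. A pair $(T,H)$ with $H\subseteq C\times D$ a $3$-matching is declared \emph{good} if the conditional distributions of $M\in\mathcal{M}(T)$ and $U\in\mathcal{U}(T)$ are nearly uniform on the core; for good pairs one gets $p^{\textrm{ex}}_{M,T}(H)p^{\textrm{ex}}_{U,T}(H)\le (1+\varepsilon)^3 p^{\textrm{ex}}_{M,T}(F)p^{\textrm{ex}}_{U,T}(F)$ for any $k$-matching $F\supseteq H$, which is your switching inequality for free. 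The single use of $\mu_1(\mathcal{R})=0$ is a short combinatorial lemma: two good $3$-matchings $H,H^*\subseteq F$ must share at least two edges (otherwise one manufactures a matching with an edge inside $C$ and a cut hitting $C$ in exactly $V(H)\cap C$, giving a $Q_1$-pair in $\mathcal{R}$). Hence at most a $O(1/k^2)$ fraction of $H\subseteq F$ are good, which is where the factor beating $1/(k-1)$ comes from. Pairs that are not good are either \emph{small} ($p^{\textrm{ex}}\le 2^{-\delta m}$, negligible) or \emph{bad}; the bad contribution is bounded by $\varepsilon\mu_3(\mathcal{R})$ via an entropy argument showing that for any fixed $(U,M)\in Q_3$, a $1-\varepsilon$ fraction of compatible partitions $T$ are good-or-small. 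The exponential $2^{-\delta n}$ thus comes from the ``small'' bucket, not from a Chernoff bound over gadgets.
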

The proof of this lemma is the hard part and takes the complete next section. 
From the technical point of view,  our proof is a substantial modification of Razborov's 
original rectangle corruption lemma~\cite{Disjointness-Razborov90}.  

Assuming the bound from Lemma~\ref{lem:RWupperBound} we can then apply Lemma~\ref{lem:FioriniCorruptionBound},
and infer that the perfect matching polytope satisfies 
\[
  \xc(P_{PM}) \geq \frac{\left<W,S\right>}{\|S\|_{\infty} \cdot \max\{ \left<W,\mathcal{R}\right> \mid \mathcal{R}\textrm{ rectangle}\}} \geq \frac{1}{n \cdot 2^{-\delta n}} \geq 2^{\Omega(n)}.
\]
Here we use that $\left<W,S\right> = 1$, $\|S\|_{\infty} \leq n$ and that $\left<W,\mathcal{R}\right> \leq 2^{-\delta n}$ for all rectangles $\mathcal{R}$.

\section{The quadratic measure increase\label{sec:QuadraticMeasureIncrease}}

In this section, we provide the proof of the main technical ingredient, Lemma~\ref{lem:RWupperBound}. 
Formally, we will prove the following statement: 
\begin{lemma} \label{lem:QuadraticGrowOfMeasure}
For each odd $k\geq 3$ and for any rectangle $\mathcal{R}$ with $\mu_1(\mathcal{R}) = 0$, one has 
$\mu_3(\mathcal{R}) \leq \frac{400}{k^2} \cdot \mu_k(\mathcal{R}) + 2^{-\delta m}$
where $\delta := \delta(k) > 0$.
\end{lemma}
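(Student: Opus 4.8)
The framework already set up---the hyperplane separation bound with this particular $W$, and the reduction to controlling $\langle W,\mathcal R\rangle$ on rectangles with $\mu_1(\mathcal R)=0$---strongly suggests a \emph{corruption}-style argument in the spirit of Razborov. The plan is to replace the uniform measures $\mu_1,\mu_3,\mu_k$ by an equivalent but highly structured ``gadget'' model: reserve the $2k$ special vertices for a constant-size \emph{core} and split the remaining $3m(k-3)$ vertices into $m$ pairwise disjoint \emph{blocks} $B_1,\dots,B_m$ of $3(k-3)$ vertices each, arranged in a fixed cyclic order. Each block carries a finite list of local \emph{states}, where a state prescribes at once the trace of $U$ on $B_i$, the trace of $M$ on $B_i$, and the $M$-edges joining $B_i$ to its two neighbours in the cycle. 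The states come in two kinds, \emph{passive} and \emph{active}, designed so that (i) both kinds place the same number of $B_i$-vertices into $U$, so that $|U|=t$ holds automatically for every activation pattern (this being the reason for the exact choices $n=3m(k-3)+2k$ and $t=\tfrac{m+1}{2}(k-3)+3$), (ii) a configuration with $j$ active blocks has $|\delta(U)\cap M|=1+2j$, and (iii) once the set of active blocks is fixed, the $m$ blocks are mutually independent. Then $Q_1,Q_3,Q_k$ correspond to exactly $0$, $1$, $\tfrac{k-1}{2}$ active blocks, and---averaging over a uniformly random placement of the core and blocks, which by the symmetry of $K_n$ one checks recovers the uniform measures---it suffices to prove the inequality for the structured model.

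The steps, once the model is in place: (1) translate the hypothesis $\mu_1(\mathcal R)=0$ into the statement that $\mathcal R$ contains no all-passive configuration; (2) set up the \emph{augmenting coupling} that takes a uniformly random element of $Q_3\cap\mathcal R$ (one active block) and activates a uniformly random $\tfrac{k-3}{2}$-subset of its passive blocks, producing an element of $Q_k$; (3) lower-bound the probability that the coupled configuration still lies in $\mathcal R$. Step (3) is where the product structure $\mathbf 1_{\mathcal R}(U,M)=\mathbf 1_{\mathcal U}(U)\,\mathbf 1_{\mathcal M}(M)$ is exploited: since each block touches $U$ and $M$ only locally and the blocks are independent, a chain-rule / hybrid argument over the $m$ blocks shows that a block on which $\mathcal U$ or $\mathcal M$ behaves adversarially costs only a multiplicative factor $1-c_k$ for some constant $c_k>0$, so the total loss beyond a pure counting term is at most $(1-c_k)^{\Theta(m)}=2^{-\delta m}$ with $\delta=\delta(k)>0$. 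The counting term is what contributes the factor $\tfrac{400}{k^2}$: a $Q_k$-configuration has $\tfrac{k-1}{2}$ active blocks, each of which could only have been produced from a pair of its own crossing edges, and it offers $\Theta(k^2)$ such pairs, so the inverse surgery over-counts $Q_k$-configurations by a factor quadratic in $k$---this is precisely the ``over-covering of entries with large slack'' diagnosed informally earlier. Combining (3) with the counting bound yields $\mu_3(\mathcal R)\le\tfrac{400}{k^2}\mu_k(\mathcal R)+2^{-\delta m}$.

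The real work, and the main obstacle, is to realize properties (i)--(iii) by honest gadgets inside the \emph{complete} graph: each block must be freely toggled passive$\leftrightarrow$active without perturbing its neighbours, must always complete to a perfect matching and to a cut of size exactly $t$, must be symmetric enough that the placement-averaged structured measure is genuinely uniform on $Q_\ell$, and yet rigid enough that changing a block's state forces a locally detectable change in both $U$ and $M$---the last point being exactly what makes the $\mathcal U\times\mathcal M$ factorization usable block by block. On top of the construction, quantifying the per-block ``$1-c_k$'' loss is the genuine analogue of Razborov's rectangle-corruption estimate and will require an anticoncentration bound for the hypergeometric-type counts governing how $M$ and $U$ meet inside a block; the explicit constants ($k=501$, and the $400$) should then fall out of making these estimates quantitative. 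I expect the bookkeeping needed to preserve perfect-matching-ness and the exact cut size throughout all the surgery to be the most delicate part of the write-up.
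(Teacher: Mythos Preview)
Your plan differs structurally from the paper's and, as written, has a real gap at the most important step: obtaining the $O(1/k^2)$ factor. The paper does \emph{not} distribute the crossing edges over $m$ toggled blocks. Instead it uses a partition $T=(A,C,D,B)$ in which all of $\delta(U)\cap M$ is forced to lie in the constant-size core $C\times D$; the $A_i$-blocks are either wholly in $U$ or wholly out, and the $B_i$-blocks are always outside $U$, so none of them can carry a crossing edge. The coupling between $\mu_3$ and $\mu_k$ is then simply ``replace the $3$-matching $H\subseteq C\times D$ by a $k$-matching $F\supseteq H$ in the same core,'' and the $1/k^2$ comes from a combinatorial fact (the paper's Lemma~\ref{lem:FractionPositiveContrGoodPartitions}): if $(T,H)$ and $(T,H^*)$ are both ``good'' then $|H\cap H^*|\ge 2$, because otherwise the $U$-goodness of one and the $M$-goodness of the other together manufacture a matching with an edge inside the cut, contradicting $\mu_1(\mathcal R)=0$. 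Hence at most $3k$ of the $\binom{k}{3}$ three-subsets of any $F$ are good, giving the $O(1/k^2)$ fraction. The entropy/pseudorandomness argument is used only to show that the \emph{bad} partitions contribute at most $\varepsilon\,\mu_3(\mathcal R)$, which gets absorbed by rearranging.

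In your scheme the crossings are spread over $(k-1)/2$ active blocks, and the natural ``inverse surgery'' is to de-activate all but one block, which gives only $(k-1)/2=\Theta(k)$ preimages per $Q_k$-configuration, not $\Theta(k^2)$. Your appeal to ``$\Theta(k^2)$ pairs of crossing edges'' is borrowing the intuition from the edge-pair cover in the introduction, but that mechanism is incompatible with the block coupling you describe: a pair of crossing edges does not determine which single block was originally active. More fundamentally, you translate $\mu_1(\mathcal R)=0$ into ``no all-passive configuration lies in $\mathcal R$,'' but this hypothesis is about the $j=0$ level and gives you no leverage on the passage from $j=1$ to $j=(k-1)/2$; a straight Razborov hybrid over $(k-3)/2$ activations yields at best a constant (or $C^k$) factor, not $1/k^2$. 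The paper's key idea---concentrating all crossings in a $2k$-vertex core so that $\mu_1=0$ becomes a pairwise-overlap constraint on good $3$-matchings---is exactly what produces the quadratic gain, and it is missing from your plan.
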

We verify that this indeed implies Lemma~\ref{lem:RWupperBound}. Consider a rectangle $\mathcal{R}$
and assume that $\mu_1(\mathcal{R}) = 0$ since otherwise $\left<W,\mathcal{R}\right> = -\infty$. Then
\[
  \left<W,\mathcal{R}\right> = \mu_3(\mathcal{R}) - \frac{1}{k-1} \mu_k(\mathcal{R}) \stackrel{\textrm{Lem.~\ref{lem:QuadraticGrowOfMeasure}}}{\leq} \underbrace{\Big(\frac{400}{k^2} - \frac{1}{k-1}\Big)}_{\leq 0} \mu_k(\mathcal{R}) + 2^{-\delta m} \leq 2^{-\delta m}
\]
where we choose $k$ as a large enough constant (e.g. $k=501$) and recall that $m$ is linear in $n$. 

\subsection{The concept of partitions}

The main trick that Razborov used in his classical paper~\cite{Disjointness-Razborov90} to show a 
relation between certain measures was to
argue that his inequality holds for most \emph{random partitions}, and that the contribution of the
remaining partitions where it does not hold is negligible. 
In fact, we want to use the same rough idea and translate it to the setting of odd cuts and matchings. 
However, our concept of partitions is significantly more involved.
For the remainder of Section~\ref{sec:QuadraticMeasureIncrease}, we fix a rectangle $\mathcal{R} = \mathcal{U} \times \mathcal{M}$ with $\mu_1(\mathcal{R}) = 0$.

A \emph{partition} is a tuple $T = (A=A_1 \dot{\cup} \ldots \dot{\cup} A_m,C,D,B = B_1 \dot{\cup} \ldots \dot{\cup} B_{m})$ with $V = A \dot{\cup} C \dot{\cup} D \dot{\cup} B$ and the following properties:
\begin{itemize}
\item  $A \subseteq V$ is a set of $|A| = m(k-3)$ nodes that is partitioned into blocks $A = A_1 \dot{\cup} \ldots \dot{\cup} A_m$
with $|A_i| = k-3$ nodes each.
\item $C \subseteq V$ is a set of $k$ nodes.
\item $D \subseteq V$ is a set of $k$ nodes.
\item $B = B_1 \dot{\cup} \ldots \dot{\cup} B_{m}$ with $B \subseteq V$ is a partition of the remaining nodes 
so that $|B_i| = 2(k-3)$.
\end{itemize}
Here, the symbol ``$\dot{\bigcup}$'' indicates a union of disjoint set.
 For a node-set $U$, let $E(U) := \{ (u,v) \in E \mid u,v \in U\}$ be the edges lying inside of $U$. We abbreviate 
$E(T) := \bigcup_{i=1}^m E(A_i) \cup E(C \cup D) \cup \bigcup_{i=1}^m E(B_i)$ as the edges associated with the partition $T$, see Figure~\ref{fig:EdgesET}.
\begin{figure}
\begin{center}
\psset{unit=0.7cm}
\begin{pspicture}(0,-0.5)(14,7.5)
\drawRect{fillcolor=red!30!white,fillstyle=solid,linearc=0.1}{0}{0}{6}{7} 
\drawRect{fillcolor=green!30!gray,fillstyle=solid,linearc=0.1}{6.2}{0.0}{0.6}{7} 
\drawRect{fillcolor=blue!30!white,fillstyle=solid,linearc=0.1}{8.0}{0.0}{6}{7} 
\drawRect{fillcolor=orange!50!white,fillstyle=solid,linearc=0.1}{7.2}{0.0}{0.6}{7} 
\drawRect{fillcolor=blue!50!white,fillstyle=solid,linearc=0.1}{8.2}{0.2}{1.6}{6.6} %
\drawRect{fillcolor=blue!50!white,fillstyle=solid,linearc=0.1}{10.2}{0.2}{1.6}{6.6} %
\drawRect{fillcolor=blue!50!white,fillstyle=solid,linearc=0.1}{12.2}{0.2}{1.6}{6.6} %
\rput[c](2.5,7.5){\Large{\darkgray{$A$}}}
\rput[c](6.5,7.5){\Large{\darkgray{$C$}}}
\rput[c](7.5,7.5){\Large{\darkgray{$D$}}}
\rput[c](11.5,7.5){\Large{\darkgray{$B$}}}
\rput[c](9,5.5){\Large{\darkgray{$B_1$}}}
\rput[c](11,5.5){\Large{\darkgray{$\ldots$}}}
\rput[c](13,5.5){\Large{\darkgray{$B_{m}$}}}
\drawRect{fillstyle=solid,fillcolor=red!50!white,linearc=0.1}{0.2}{0.7}{1.6}{1.6}
\drawRect{fillstyle=solid,fillcolor=red!50!white,linearc=0.1}{0.2}{2.7}{1.6}{1.6}
\drawRect{fillstyle=solid,fillcolor=red!50!white,linearc=0.1}{0.2}{4.7}{1.6}{1.6}
\drawRect{fillstyle=solid,fillcolor=red!50!white,linearc=0.1}{2.2}{0.7}{1.6}{1.6}
\drawRect{fillstyle=solid,fillcolor=red!50!white,linearc=0.1}{2.2}{2.7}{1.6}{1.6}
\drawRect{fillstyle=solid,fillcolor=red!50!white,linearc=0.1}{2.2}{4.7}{1.6}{1.6}
\drawRect{fillstyle=solid,fillcolor=red!50!white,linearc=0.1}{4.2}{0.7}{1.6}{1.6}
\drawRect{fillstyle=solid,fillcolor=red!50!white,linearc=0.1}{4.2}{2.7}{1.6}{1.6}
\drawRect{fillstyle=solid,fillcolor=red!50!white,linearc=0.1}{4.2}{4.7}{1.6}{1.6}
\multido{\N=0.5+2.0}{3}{ 
  \multido{\n=1.0+2.0}{3}{%
     \rput[c](\N,\n){%
        \cnode*(0,0){2.5pt}{A}%
        \cnode*(1,0){2.5pt}{B}%
        \cnode*(0,1){2.5pt}{C}%
        \cnode*(1,1){2.5pt}{D}%
        \ncline[linewidth=0.5pt,linecolor=black]{A}{B}%
       \ncline[linewidth=0.5pt,linecolor=black]{A}{C}%
       \ncline[linewidth=0.5pt,linecolor=black]{A}{D}%
       \ncline[linewidth=0.5pt,linecolor=black]{B}{C}%
       \ncline[linewidth=0.5pt,linecolor=black]{B}{D}%
       \ncline[linewidth=0.5pt,linecolor=black]{C}{D}%
     }
  }
}
\rput[c](1,5.5){\Large{\darkgray{$A_1$}}}
\rput[c](3,5.5){\Large{\darkgray{$A_2$}}}
\rput[c](5,5.5){\Large{\darkgray{$A_3$}}}
\rput[c](3,1.5){\Large{\darkgray{$\ldots$}}}
\rput[c](5,1.5){\Large{\darkgray{$A_m$}}}
\psset{linewidth=0.5pt,linecolor=black} 
\multido{\N=8.5+2.0}{3}{
 \rput[c](\N,2){%
  \cnode*(0,3){2.5pt}{A1}  \cnode*(1,3){2.5pt}{B1}%
  \cnode*(0,2){2.5pt}{A2}  \cnode*(1,2){2.5pt}{B2}%
  \cnode*(0,1){2.5pt}{A3}  \cnode*(1,1){2.5pt}{B3}%
  \cnode*(0,0){2.5pt}{A4}  \cnode*(1,0){2.5pt}{B4}%
  \ncline{A1}{B1}%
  \ncline{A1}{B2}%
  \ncline{A1}{B3}%
  \ncline{A1}{B4}%
  \ncline{A2}{B1}%
  \ncline{A2}{B2}%
  \ncline{A2}{B3}%
  \ncline{A2}{B4}%
  \ncline{A3}{B1}%
  \ncline{A3}{B2}%
  \ncline{A3}{B3}%
  \ncline{A3}{B4}%
  \ncline{A4}{B1}%
  \ncline{A4}{B2}%
  \ncline{A4}{B3}%
  \ncline{A4}{B4}%
  \ncline{A1}{A4}%
  \ncline{B1}{B4}%
 }
}
\multido{\N=0.5+1.0}{7}{%
  \multido{\n=0.5+1.0}{7}{%
     \cnode*(6.5,\N){2.5pt}{A}%
     \cnode*(7.5,\n){2.5pt}{B}%
     \ncline{A}{B}%
  }
}
\psline(6.5,0.5)(6.5,6.5)
\psline(7.5,0.5)(7.5,6.5)
\end{pspicture}
\caption{Visualization of a partition $T$ with all edges $E(T)$.\label{fig:EdgesET}}
\end{center}
\end{figure}
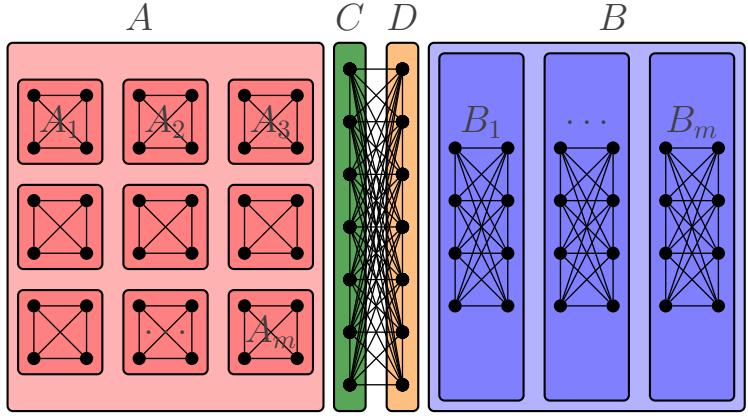

\begin{figure}
\begin{center}
\psset{unit=0.7cm}
\begin{pspicture}(0,-3.5)(14,7.5)
\drawRect{fillcolor=red!30!white,fillstyle=solid,linearc=0.1}{0}{0}{6}{7} 
\drawRect{fillcolor=green!30!gray,fillstyle=solid,linearc=0.1}{6.2}{0.0}{0.6}{7} 
\drawRect{fillcolor=blue!30!white,fillstyle=solid,linearc=0.1}{8.0}{0.0}{6}{7} 
\drawRect{fillcolor=orange!50!white,fillstyle=solid,linearc=0.1}{7.2}{0.0}{0.6}{7} 
\drawRect{fillcolor=blue!50!white,fillstyle=solid,linearc=0.1}{8.2}{0.2}{1.6}{6.6} %
\drawRect{fillcolor=blue!50!white,fillstyle=solid,linearc=0.1}{10.2}{0.2}{1.6}{6.6} %
\drawRect{fillcolor=blue!50!white,fillstyle=solid,linearc=0.1}{12.2}{0.2}{1.6}{6.6} %
\rput[c](6.5,2.0){\Large{\darkgray{$C$}}}
\rput[c](7.5,4.0){\Large{\darkgray{$D$}}}
\rput[c](9,3.5){\Large{\darkgray{$B_1$}}}
\rput[c](11,3.5){\Large{\darkgray{$\ldots$}}}
\rput[c](13,3.5){\Large{\darkgray{$B_{m}$}}}
\rput[c](6.5,7.3){\Large{\darkgray{$U$}}}
\drawRect{fillstyle=solid,fillcolor=red!50!white,linearc=0.1}{0.2}{0.7}{1.6}{1.6}
\drawRect{fillstyle=solid,fillcolor=red!50!white,linearc=0.1}{0.2}{2.7}{1.6}{1.6}
\drawRect{fillstyle=solid,fillcolor=red!50!white,linearc=0.1}{0.2}{4.7}{1.6}{1.6}
\drawRect{fillstyle=solid,fillcolor=red!50!white,linearc=0.1}{2.2}{0.7}{1.6}{1.6}
\drawRect{fillstyle=solid,fillcolor=red!50!white,linearc=0.1}{2.2}{2.7}{1.6}{1.6}
\drawRect{fillstyle=solid,fillcolor=red!50!white,linearc=0.1}{2.2}{4.7}{1.6}{1.6}
\drawRect{fillstyle=solid,fillcolor=red!50!white,linearc=0.1}{4.2}{0.7}{1.6}{1.6}
\drawRect{fillstyle=solid,fillcolor=red!50!white,linearc=0.1}{4.2}{2.7}{1.6}{1.6}
\drawRect{fillstyle=solid,fillcolor=red!50!white,linearc=0.1}{4.2}{4.7}{1.6}{1.6}
\pspolygon[fillstyle=vlines,hatchcolor=gray,linecolor=darkgray,linewidth=1.5pt,linearc=0.1](0.1,6.6)(6,6.6)(6,6.9)(7,6.9)(7,4.1)(6,4.1)(6,4.5)(4,4.5)(4,2.5)(2,2.5)(2,4.5)(0.1,4.5) 
\rput[c](1,5.5){\Large{\darkgray{$A_1$}}}
\rput[c](3,5.5){\Large{\darkgray{$A_2$}}}
\rput[c](5,5.5){\Large{\darkgray{$A_3$}}}
\rput[c](3,1.5){\Large{\darkgray{$\ldots$}}}
\rput[c](5,1.5){\Large{\darkgray{$A_m$}}}
\cnode*(0.5,1.0){2.5pt}{F11} \cnode*(1.5,1.0){2.5pt}{F12} 
\cnode*(0.5,2.0){2.5pt}{F21} \cnode*(1.5,2.0){2.5pt}{F22} \ncline[linewidth=1pt]{F11}{F21} \ncline[linewidth=1pt]{F12}{F22}
\cnode*(0.5,3.0){2.5pt}{F31} \cnode*(1.5,3.0){2.5pt}{F32} \ncline[linewidth=1pt]{F31}{F32}
\cnode*(0.5,4.0){2.5pt}{F41} \cnode*(1.5,4.0){2.5pt}{F42} \ncline[linewidth=1pt]{F41}{F42}
\cnode*(0.5,5.0){2.5pt}{F51} \cnode*(1.5,5.0){2.5pt}{F52} \ncline[linewidth=1pt]{F51}{F52}
\cnode*(0.5,6.0){2.5pt}{F61} \cnode*(1.5,6.0){2.5pt}{F62} \ncline[linewidth=1pt]{F61}{F62}
\cnode*(2.5,1.0){2.5pt}{F71} \cnode*(3.5,1.0){2.5pt}{F72} \ncline[linewidth=1pt]{F71}{F72}
\cnode*(2.5,2.0){2.5pt}{F81} \cnode*(3.5,2.0){2.5pt}{F82} \ncline[linewidth=1pt]{F81}{F82}
\cnode*(2.5,3.0){2.5pt}{F91} \cnode*(3.5,3.0){2.5pt}{F92} 
\cnode*(2.5,4.0){2.5pt}{F101} \cnode*(3.5,4.0){2.5pt}{F102} \ncline[linewidth=1pt]{F91}{F102} \ncline[linewidth=1pt]{F101}{F92}
\cnode*(2.5,5.0){2.5pt}{F111} \cnode*(3.5,5.0){2.5pt}{F112} \ncline[linewidth=1pt]{F111}{F112}
\cnode*(2.5,6.0){2.5pt}{F121} \cnode*(3.5,6.0){2.5pt}{F122} \ncline[linewidth=1pt]{F121}{F122}
\cnode*(4.5,1.0){2.5pt}{F131} \cnode*(5.5,1.0){2.5pt}{F132} \ncline[linewidth=1pt]{F131}{F132}
\cnode*(4.5,2.0){2.5pt}{F141} \cnode*(5.5,2.0){2.5pt}{F142} \ncline[linewidth=1pt]{F141}{F142}
\cnode*(4.5,3.0){2.5pt}{F151} \cnode*(5.5,3.0){2.5pt}{F152} \ncline[linewidth=1pt]{F151}{F152}
\cnode*(4.5,4.0){2.5pt}{F161} \cnode*(5.5,4.0){2.5pt}{F162} \ncline[linewidth=1pt]{F161}{F162}
\cnode*(4.5,5.0){2.5pt}{F171} \cnode*(5.5,5.0){2.5pt}{F172} \ncline[linewidth=1pt]{F171}{F172}
\cnode*(4.5,6.0){2.5pt}{F181} \cnode*(5.5,6.0){2.5pt}{F182} \ncline[linewidth=1pt]{F181}{F182}
\cnode*(6.5,6.5){2.5pt}{C1} 
\cnode*(6.5,5.5){2.5pt}{C2} 
\cnode*(6.5,4.5){2.5pt}{C3} 
\cnode*(6.5,3.5){2.5pt}{C4} 
\cnode*(6.5,2.5){2.5pt}{C5} 
\cnode*(6.5,1.5){2.5pt}{C6} 
\cnode*(6.5,0.5){2.5pt}{C7}
\cnode*(7.5,6.5){2.5pt}{BS1}
\cnode*(7.5,5.5){2.5pt}{BS2}
\cnode*(7.5,4.5){2.5pt}{BS3}
\cnode*(7.5,3.5){2.5pt}{BS4}
\cnode*(7.5,2.5){2.5pt}{BS5}
\cnode*(7.5,1.5){2.5pt}{BS6}
\cnode*(7.5,0.5){2.5pt}{BS7}
\ncline[linewidth=2pt]{C1}{BS1}
\ncline[linewidth=2pt]{C2}{BS2} \naput[labelsep=2pt]{$H$}
\ncline[linewidth=2pt]{C3}{BS3}
\ncline[linewidth=1pt]{C4}{C5}
\ncline[linewidth=1pt]{C6}{C7}
\ncarc[linewidth=1pt]{BS4}{BS7} 
\ncline[linewidth=1pt]{BS5}{BS6}
\cnode*(8.5,5){2.5pt}{B11} \cnode*(8.5,4){2.5pt}{B12} \cnode*(8.5,3){2.5pt}{B13} \cnode*(8.5,2){2.5pt}{B14}
\cnode*(9.5,5){2.5pt}{B21} \cnode*(9.5,4){2.5pt}{B22} \cnode*(9.5,3){2.5pt}{B23} \cnode*(9.5,2){2.5pt}{B24}
\cnode*(10.5,5){2.5pt}{B31} \cnode*(10.5,4){2.5pt}{B32} \cnode*(10.5,3){2.5pt}{B33} \cnode*(10.5,2){2.5pt}{B34}
\cnode*(11.5,5){2.5pt}{B41} \cnode*(11.5,4){2.5pt}{B42} \cnode*(11.5,3){2.5pt}{B43} \cnode*(11.5,2){2.5pt}{B44}
\cnode*(12.5,5){2.5pt}{B51} \cnode*(12.5,4){2.5pt}{B52} \cnode*(12.5,3){2.5pt}{B53} \cnode*(12.5,2){2.5pt}{B54}
\cnode*(13.5,5){2.5pt}{B61} \cnode*(13.5,4){2.5pt}{B62} \cnode*(13.5,3){2.5pt}{B63} \cnode*(13.5,2){2.5pt}{B64}
\ncline[linewidth=1pt]{B11}{B21}
\ncline[linewidth=1pt]{B12}{B22}
\ncline[linewidth=1pt]{B13}{B24}
\ncline[linewidth=1pt]{B14}{B23}
\ncline[linewidth=1pt]{B31}{B42}
\ncline[linewidth=1pt]{B32}{B41}
\ncline[linewidth=1pt]{B33}{B34}
\ncline[linewidth=1pt]{B43}{B44}
\ncline[linewidth=1pt]{B51}{B61}
\ncline[linewidth=1pt]{B52}{B62}
\ncline[linewidth=1pt]{B53}{B64}
\ncline[linewidth=1pt]{B54}{B63}
\psbrace[ref=1C,rot=90,nodesepB=5pt,braceWidthInner=4pt,braceWidthOuter=4pt](0,0)(2,0){$k-3$}
\psbrace[ref=1C,rot=90,nodesepB=5pt,braceWidthInner=4pt,braceWidthOuter=4pt](6,0)(7,0){$k$}
\psbrace[ref=1C,rot=90,nodesepB=5pt,braceWidthInner=4pt,braceWidthOuter=4pt](7,0)(8,0){$k$}
\psbrace[ref=1C,rot=90,nodesepB=5pt,braceWidthInner=4pt,braceWidthOuter=4pt](12,0)(14,0){$2(k-3)$}
\psbrace[ref=1C,rot=90,nodesepB=18pt](0,-1)(6,-1){$\begin{array}{c} m\textrm{ blocks, each }k-3\textrm{ nodes} \\ |A| = m(k-3) \textrm{ nodes} \end{array}$}
\psbrace[ref=1C,rot=90,nodesepB=18pt](8,-1)(14,-1){$\begin{array}{c} m\textrm{ blocks, each }2(k-3)\textrm{ nodes} \\ |B| = m\cdot 2(k-3) \textrm{ nodes} \end{array}$}
\end{pspicture}
\caption{Visualization of a partition $T$ together with one matching $M\in \mathcal{M}(T)$ and one cut $U \in \mathcal{U}(T)$.\label{fig:ExampleCutAndMatching}}
\end{center}
\end{figure}
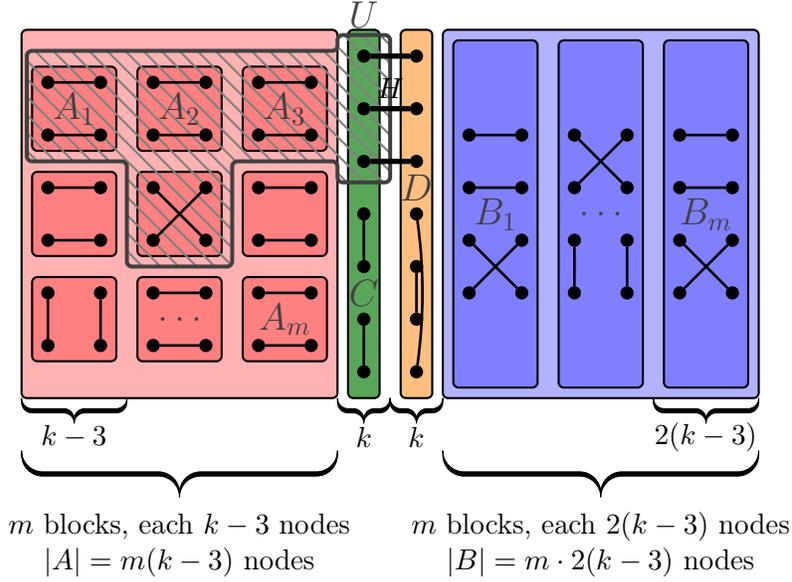

We say that 
\[
\mathcal{M}_{\textrm{all}}(T) := \{ M \in \mathcal{M}_{\textrm{all}} \mid M \subseteq E(T) \}
\]
are all perfect matchings that respect the partition $T$.
In other words, the matchings in $\mathcal{M}_{\textrm{all}}(T)$ have only edges running inside $A_i$ or $B_i$ or inside $C \cup D$.
Similarly, we say that 
\[
  \mathcal{U}(T) := \{ U \in \mathcal{U}_{\textrm{all}} \mid U \subseteq A \cup C\textrm{ with } |U \cap A_i| \in \{ 0,|A_i|\} \; \forall i \in [m] \}
\]
are all the $t$-node cuts that respect the partition (see Figure~\ref{fig:ExampleCutAndMatching}). 
In other words, those cuts are fully contained in $A \cup C$ and for each $i$, they contain either all or none of the nodes $A_i$. 
Moreover, let 
\[
\mathcal{M}(T) := \mathcal{M} \cap \mathcal{M}_{\textrm{all}}(T) = \{ M \in \mathcal{M} \mid M \subseteq E(T) \}
\]
and 
\[
\mathcal{U}(T) := \mathcal{U} \cap \mathcal{U}_{\textrm{all}}(T) =\{ U \in \mathcal{U} \mid U \subseteq A \cup C; \; |U \cap A_i| \in \{ 0,|A_i|\} \; \forall i \in [m]\}
\] 
be the subsets containing all 
 matchings and cuts from our rectangle $\mathcal{R}$ that respect the
partition. 
The advantage of such partitions $T$ is that if we take a matching $M \in \mathcal{M}_{\textrm{all}}(T)$
and a cut $U \in \mathcal{U}_{\textrm{all}}(T)$, then the intersection $\delta(U) \cap M$ can only contain edges in $E(C \cup D)$
(in fact, it contains an odd number between $1$ and $k$ edges). 

\subsection{Generating the distributions  $\mu_3$ and $\mu_k$}

The key trick is that the measures $\mu_3(\mathcal{R})$ and $\mu_k(\mathcal{R})$ can be nicely 
compared for the rectangles $\mathcal{U}(T) \times \mathcal{M}(T)$ that are induced by each partition $T$. 
Hence, we consider an
alternative way to generate uniform members of $Q_3$ and $Q_k$.  To fix some notation, we say that $H$ is an \emph{$\ell$-matching}
if $H$ is a matching with exactly $\ell$ edges. The nodes incident to edges $H$ are denoted by $V(H)$.

For a matching $H \subseteq E(C \cup D)$, we define $p_{\mathcal{M},T}(H) := \Pr_{M \sim \mathcal{M}_{\textrm{all}}(T)}[M \in \mathcal{M} \mid H \subseteq M]$ as the chance that a random extension of $H$ to a perfect matching respecting the partition lies in the 
rectangle. Here, $M \sim \mathcal{M}_{\textrm{all}}(T)$ indicates that $M$ is a \emph{uniformly} drawn matching from $\mathcal{M}_{\textrm{all}}(T)$.
If $H \subseteq C \times D$, then we also define
 $p_{\mathcal{M},T}^{\textrm{ex}}(H) := \Pr_{M \sim \mathcal{M}_{\textrm{all}}(T)}[M \in \mathcal{M} \mid M \cap (C \times D) = H]$ as the probability given that $H$ is the {\bf ex}clusive set of 
edges that runs between $C$ and $D$. 
For $c \subseteq C$, let $p_{\mathcal{U},T}(c) := \Pr_{U \sim \mathcal{U}_{\textrm{all}}(T)}[U \in \mathcal{U} \mid c \subseteq U]$
be the chance that a random extension of $c$ to a cut $U$ respecting the partition $T$ lies in the rectangle. 
Again, we also define $p_{\mathcal{U},T}^{\textrm{ex}}(c) := \Pr_{U \sim \mathcal{U}_{\textrm{all}}(T)}[U \in \mathcal{U} \mid U \cap C = c]$
as the chance, given that $c$ are the only nodes in $C$. 
By a slight abuse of notation, we denote $p_{\mathcal{U},T}^{\textrm{ex}}(H) := p_{\mathcal{U},T}^{\textrm{ex}}(V(H) \cap C)$ for
a matching $H \subseteq C \times D$. Recall that all cuts $U \in \mathcal{U}_{\textrm{all}}$ have size $|U| = t$;  
this implies that $|U| - 3$ is a multiple of $k-3$, and hence $p_{\mathcal{U},T}(c) > 0$ only if $|c| \in \{ 3,k\}$.

For the sake of a clearer notation, in the following we will use the symbol $H$ always for a 3-matching $H \subseteq C \times D$
and the symbol $F$ will always be used for a $k$-matching that we take either as $F \subseteq E(C \cup D)$ or as $F \subseteq C \times D$.
In the remainder of this paper, whenever 
we write $\E_T[\ldots]$, then $T$ is a uniform random partition and if we write $\E_{|H| = 3}[\ldots]$, then
$H$ is a uniformly picked 3-matching in the complete bipartite graph between $C$ and $D$ (always assuming that
the partition $T$ has been selected before).
The concept of partitions can be used to generate uniform entries from $Q_3$ and $Q_k$  
in the following way: 
\begin{itemize}
\item \emph{Generating a uniform random entry $(U,M) \in Q_3$:} Pick a random partition $T$. 
Pick a random 3-matching $H \subseteq C \times D$. Then randomly extend $H$ to a matching $M \in \mathcal{M}_{\textrm{all}}(T)$ with $M \cap (C \times D) = H$
and to a cut $U \in \mathcal{U}_{\textrm{all}}(T)$ with $\delta(U) \cap (C \times D) = H$. 
Hence\footnote{We should also argue that $(U,M)$ is indeed a 
uniform random element from $Q_3$. First, note that by construction always 
$(U,M) \in Q_3$. Secondly, $Q_3$ is symmetric under permuting nodes. Also the produced pair $(U,M)$ is symmetric under permuting nodes. That gives the claim.}
\[
  \mu_3(\mathcal{R}) = \E_T\Big[ \E_{|H| = 3}[ p_{\mathcal{M},T}^{\textrm{ex}}(H) \cdot p_{\mathcal{U},T}^{\textrm{ex}}(H)]\Big]
\]
See again Figure~\ref{fig:ExampleCutAndMatching}. 
\item \emph{Generating a uniform random entry $(U,M) \in Q_k$:} Pick a random partition $T$. Pick a random $k$-matching $F$
in the bipartite graph $C \times D$. 
Then randomly extend $F$ to a matching $M \in \mathcal{M}_{\textrm{all}}(T)$ with $M \cap (C \times D) = F$ and to a 
cut $U \in \mathcal{U}_{\textrm{all}}(T)$ with $\delta(U) \cap (C \times D) = F$. Hence
\[
 \mu_k(\mathcal{R}) = \E_T\Big[ \E_{|F| = k} [ p_{\mathcal{M},T}^{\textrm{ex}}(F) \cdot p_{\mathcal{U},T}^{\textrm{ex}}(F) ] \Big]
\]
\end{itemize}
Note that for a $k$-matching $F \subseteq C \times D$, we anyway have $p_{\mathcal{M},T}(F) = p_{\mathcal{M},T}^{\textrm{ex}}(F)$ and $p_{\mathcal{U},T}(F) = p_{\mathcal{U},T}^{\textrm{ex}}(F)$.

\subsection{The notion of good pairs}

An important definition is the one of \emph{good} pairs, which are those pairs $(T,H)$ for 
which we can easily show that their contribution to $\mu_3(\mathcal{R})$ is only a $O(\frac{1}{k^2})$-fraction of the contribution
to $\mu_k(\mathcal{R})$. In the following, $\varepsilon>0$ denotes a small enough constant that we determine later
(in fact $\varepsilon=\frac{1}{8}$ will suffice). 
\begin{definition}[\emph{$\mathcal{M}$-good}]
Let $T$ be a partition and  $H \subseteq C \times D$ be a 3-matching. 
The pair $(T,H)$ is called  $\mathcal{M}$-\emph{good} if 
$0<\frac{1}{1+\varepsilon} p_{\mathcal{M},T}(H) \leq p_{\mathcal{M},T}(F) \leq (1+\varepsilon)p_{\mathcal{M},T}(H)$
for all $k$-matchings $F$ with $H \subseteq F \subseteq E(C \cup D)$. 
\end{definition}
While we prefer this as a formal definition, there is a more intuitive one: imagine we draw a random 
matching $M$ from $\{ M' \in \mathcal{M}(T) \mid H \subseteq M' \}$. 
If $(T,H)$ is $\mathcal{M}$-good, then the induced random matching
$M \cap E((C \cup D) \backslash V(H))$ is \emph{$\varepsilon$-close}\footnote{We will not further use the notion of $\varepsilon$-closeness. However, 
a proper definition would be to call two distributions $\varepsilon$-close if all atomic events have the same probability up to a multiplicative $1+\varepsilon$ factor.} to the distribution of a \emph{uniform} random matching on the node-set $(C \cup D) \backslash V(H)$.
The arguments based on conditional probability can be seen in Cor.~\ref{cor:Pseudo1} in Section~\ref{sec:PseudoRandomBehavior}.
In particular this will imply that for a good pair $(T,H)$, all edges in $E((C \cup D) \backslash V(H))$ are contained in at least one matching in $\mathcal{M}(T)$.
We have only used $p_{\mathcal{M},T}$ in the definition of $\mathcal{M}$-goodness and not $p_{\mathcal{M},T}^{\textrm{ex}}$. But this is easy to extend:
\begin{lemma} \label{lem:pEx-vs-p}
Let $T$ be a partition and $H \subseteq C \times D$ be a 3-matching.  
If $(T,H)$ is $\mathcal{M}$-good, then 
$\frac{1}{1+\varepsilon} p_{\mathcal{M},T}(H) \leq p_{\mathcal{M},T}^{\textrm{ex}}(H) \leq (1+\varepsilon) p_{\mathcal{M},T}(H)$.
\end{lemma}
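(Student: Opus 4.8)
The plan is to exhibit $p_{M,T}^{\textrm{ex}}(H)$ as a uniform average of the quantities $p_{M,T}(F)$ over a nonempty family of $k$-matchings $F$ with $H\subseteq F\subseteq E(C\cup D)$, and then to invoke the trivial fact that an average of numbers each lying in the interval $\big[\tfrac{1}{1+\varepsilon}\,p_{M,T}(H),\ (1+\varepsilon)\,p_{M,T}(H)\big]$ again lies in that interval — and membership in that interval is exactly what the $M$-goodness hypothesis asserts about each such $p_{M,T}(F)$.

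To carry this out I would set $C':=C\setminus V(H)$ and $D':=D\setminus V(H)$; since $H\subseteq C\times D$ is a $3$-matching, $|C'|=|D'|=k-3$ (both even) and $|C'\dot{\cup}D'|=2(k-3)$. Let $\Omega_0$ be the collection of perfect matchings of the graph on $C'\dot{\cup}D'$ that are of the form (perfect matching on $C'$) $\dot{\cup}$ (perfect matching on $D'$); it is nonempty because $|C'|,|D'|$ are even. For $N\in\Omega_0$ the matching $H\dot{\cup}N$ has $3+(k-3)=k$ edges, satisfies $H\subseteq H\dot{\cup}N\subseteq E(C\cup D)$, and is in fact a perfect matching of $C\cup D$. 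The identity I would prove is
\[
 p_{M,T}^{\textrm{ex}}(H)\;=\;\frac{1}{|\Omega_0|}\sum_{N\in\Omega_0}p_{M,T}(H\dot{\cup}N),
\]
using that a uniform $M\in\mathcal{M}_{\textrm{all}}(T)$ is an independent union of a uniform perfect matching inside each $A_i$, inside each $B_i$, and inside $C\cup D$, and that (for such $M$) every edge of $M$ in $\delta(C)$ is an edge between $C$ and $D$. Hence the event $\{M\cap\delta(C)=H\}$ is exactly the event that the $C\cup D$-part of $M$ equals $H\dot{\cup}N$ for some $N\in\Omega_0$, with $N$ then uniform on $\Omega_0$; and conditioning the $C\cup D$-part to equal the perfect matching $H\dot{\cup}N$ is the same as conditioning on $H\dot{\cup}N\subseteq M$, whose conditional success probability is $p_{M,T}(H\dot{\cup}N)$ by definition. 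Summing over $N$ gives the identity.

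With the identity in hand the conclusion is immediate: each $H\dot{\cup}N$ with $N\in\Omega_0$ is a $k$-matching with $H\subseteq H\dot{\cup}N\subseteq E(C\cup D)$, so $M$-goodness gives $\tfrac{1}{1+\varepsilon}p_{M,T}(H)\le p_{M,T}(H\dot{\cup}N)\le(1+\varepsilon)p_{M,T}(H)$ (and $p_{M,T}(H)>0$); averaging over the nonempty set $\Omega_0$ preserves these bounds and yields the claim. The only real obstacle is the bookkeeping in the identity — verifying that conditioning on ``$M\cap\delta(C)=H$'' really does amount to a uniformly random choice of $N\in\Omega_0$ for the $C\cup D$-part of $M$ with the block-matchings left untouched; once that is set up carefully, the remaining inequality is one line. (The same reasoning with $\Omega_0$ replaced by \emph{all} perfect matchings of $C'\dot{\cup}D'$ would also give $p_{M,T}(H)=\E_N[p_{M,T}(H\dot{\cup}N)]$, which is the ``$\varepsilon$-close to uniform'' picture described after the definition, but it is not needed here.)
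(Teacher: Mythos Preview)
Your proof is correct and is essentially the same argument as the paper's. The paper writes $p_{M,T}^{\textrm{ex}}(H)=\E_F[p_{M,T}(F)]$ where $F$ is a uniform $k$-matching from $\{F'\subseteq E(C\cup D)\mid \delta(C)\cap F'=H\}$; this set is exactly your $\{H\dot{\cup}N:N\in\Omega_0\}$, so your averaging identity and the paper's expectation are literally the same statement, after which both apply the $M$-goodness bounds termwise.
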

\begin{proof}
Fix $H$ and take a uniform random $k$-matching $F \sim \{ F' \subseteq E(C \cup D) \mid |F'| = k\textrm{ and }(C \times D) \cap F' = H\}$.
Then we can write 
$p_{\mathcal{M},T}^{\textrm{ex}}(H) = \E_{F}[ p_{\mathcal{M},T}(F) ] \leq (1+\varepsilon) \cdot p_{\mathcal{M},T}(H)$ 
(analogously for the lower bound). 
\end{proof}

\begin{definition}[\emph{$\mathcal{U}$-good}]
Let $T$ be a partition and  $H \subseteq C \times D$ be a 3-matching. 
The pair  $(T,H)$ is called $\mathcal{U}$-\emph{good} if  
$0<\frac{1}{1+\varepsilon} p_{\mathcal{U},T}^{\textrm{ex}}(H) \leq p_{\mathcal{U},T}^{\textrm{ex}}(C) \leq (1+\varepsilon)p_{\mathcal{U},T}^{\textrm{ex}}(H)$.
\end{definition}
Again, an alternative characterization for $(T,H)$ with $c = V(H) \cap C$ being $\mathcal{U}$-good 
is the following: if we draw a random cut $U \sim \{ U' \in \mathcal{U}(T) \mid c \subseteq U'\}$, 
then a $(\frac{1}{2} \pm \Theta(\varepsilon))$-fraction has $U \cap C = c$ and the other
$(\frac{1}{2} \pm \Theta(\varepsilon))$-fraction has $U \cap C = C$ (recall that every cut $U \in \mathcal{U}(T)$ has
$|U \cap C| \in \{ 3,k\}$). Once more, this characterization can be derived using conditional probabilities, 
see Cor.~\ref{cor:Pseudo2}.

If $(T,H)$ is both $\mathcal{M}$-good and $\mathcal{U}$-good, then it is called \emph{good}. 
There is another type of pairs that will not cause any problems for our analysis: 
If either $p_{\mathcal{M},T}^{\textrm{ex}}(H) \leq 2^{-\delta m}$ or $p_{\mathcal{U},T}^{\textrm{ex}}(H) \leq 2^{-\delta m}$, then we say that $(T,H)$
is \emph{small}. Intuitively, small pairs will only contribute $2^{-\delta m}$ to $\mu_3(\mathcal{R})$ anyway. 

Unfortunately, those two categories will not cover all cases. Hence, if a pair $(T,H)$ is neither 
good nor small, then we call it \emph{bad}. 
We will use the 0/1 indicator variables ${\tt GOOD}(T,H)$, ${\tt{SMALL}}(T,H)$ and ${\tt BAD}(T,H)$ for the corresponding events.
This allows to split the measure $\mu_3(\mathcal{R})$ into 3 parts
\begin{eqnarray*}
 \mu_3(\mathcal{R}) &=& \E_T \Big[ \E_{|H| = 3} \Big[ p_{\mathcal{M},T}^{\textrm{ex}}(H) \cdot p_{\mathcal{U},T}^{\textrm{ex}}(H) \Big] \Big] \\
 &\leq& \E_T \Big[ \E_{|H| = 3} \Big[ \left( {\tt GOOD}(T,H) + {\tt SMALL}(T,H) + {\tt BAD}(T,H) \right) \cdot p_{\mathcal{M},T}^{\textrm{ex}}(H) \cdot p_{\mathcal{U},T}^{\textrm{ex}}(H) \Big] \Big] \\
 &\stackrel{(*)}{\leq}& \underbrace{\frac{200}{k^2} \cdot\mu_k(\mathcal{R})}_{\textrm{for good pairs}} + 
\underbrace{ 2^{-\delta m}}_{\textrm{for }\textrm{small pairs}} +
\underbrace{ \varepsilon \cdot \mu_3(\mathcal{R}) }_{\textrm{for }\textrm{bad pairs}}
\end{eqnarray*}
for some constant $\delta := \delta(k,\varepsilon) > 0$. 
Rearranging terms and choosing $\varepsilon \leq \frac{1}{2}$ gives the claim of Lemma~\ref{lem:QuadraticGrowOfMeasure}. 
We will spend the rest of Section~\ref{sec:QuadraticMeasureIncrease} justifying the three inequalities that we used in $(*)$.

The estimate that is easy to see is the one concerning the contribution of small pairs. 
We have
\[
 \E_T \Big[ \E_{|H|=3}\Big[  {\tt SMALL}(T,H) \cdot p_{\mathcal{M},T}^{\textrm{ex}}(H) \cdot p_{\mathcal{U},T}^{\textrm{ex}}(H) \Big] \Big] \leq 2^{-\delta m}
\]
because each time that ${\tt SMALL}(T,H) = 1$, we have by definition $p_{\mathcal{M},T}^{\textrm{ex}}(H) \cdot p_{\mathcal{U},T}^{\textrm{ex}}(H) \leq 2^{-\delta m}$.


\subsection{Contribution of good pairs}

We continue with bounding the contribution of good pairs, i.e. we will show that 
\begin{equation} \label{eq:contributionGoodPartitions}
  \E_T \Big[ \E_{|H| = 3} \Big[ {\tt GOOD}(T,H) \cdot p_{\mathcal{M},T}^{\textrm{ex}}(H) \cdot p_{\mathcal{U},T}^{\textrm{ex}}(H) \Big] \Big] \leq \frac{200}{k^2} \cdot \mu_k(\mathcal{R}).
\end{equation}
for $\varepsilon > 0$ small enough. The reason, why one should expect a $\frac{1}{k^2}$ term is based on the  insight that only a $O(\frac{1}{k^2})$
fraction of 3-matchings $H$ can actually give a \emph{positive} contribution to 
the LHS of \eqref{eq:contributionGoodPartitions}. The reason is that if we had good pairs $(T,H)$
and $(T,H^*)$ where $H$ and $H^*$ do not share 2 edges, then we could find a slack-0 entry in $\mathcal{R}$.
In fact, this subsection contains the core arguments, \emph{why} the matching polytope has no compact LP
representation. It is also the part where we make use of the combinatorial properties of
matchings and cuts.


\begin{lemma} \label{lem:FractionPositiveContrGoodPartitions}
For any partition $T$ and any $k$-matching $F \subseteq C \times D$, one has $\Pr_{H \sim {F \choose 3}}[ {\tt GOOD}(T,H) =1 ] \leq \frac{100}{k^2}$.
\end{lemma}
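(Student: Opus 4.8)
The statement asserts that, conditioned on any fixed partition $T$ and any fixed $k$-matching $F \subseteq C \times D$, a uniformly random $3$-sub-matching $H \in \binom{F}{3}$ is $M$-good-and-$U$-good with probability at most $100/k^2$. Since ${\tt GOOD}(T,H) = {\tt GOOD}_M(T,H) \cdot {\tt GOOD}_U(T,H)$ (goodness requires both), it suffices to bound the probability of $U$-goodness alone by, say, $100/k^2$; I would not even need to invoke $M$-goodness. So the plan is to isolate a \emph{combinatorial} reason why very few of the $\binom{k}{3}$ triples $c = V(H) \cap C$ can be $U$-good. The key object is the function $c \mapsto p_{U,T}^{\textrm{ex}}(c)$ on $3$-subsets of $C$ versus the single number $p_{U,T}^{\textrm{ex}}(C)$: the pair is $U$-good precisely when $p_{U,T}^{\textrm{ex}}(C)$ and $p_{U,T}^{\textrm{ex}}(c)$ agree up to a factor $1+\varepsilon$ (and both are positive). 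Here $V(F) \cap C$ is a \emph{fixed} $k$-subset of $C$ and $c$ ranges over its $3$-subsets, so this is an extremal question about one $k$-element index set.

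\textbf{Key steps.} First, I would set $c^{\star} := V(F)\cap C$, a fixed $k$-element set (since $F$ is a $k$-matching in $C\times D$, it saturates $k$ of the $k$ nodes of $C$, i.e.\ all of them — so in fact $c^\star = C$; this is worth double-checking but very likely, since $|C| = k$). If $c^\star = C$, the constraint $U$-good becomes a statement purely about $3$-subsets $c \subseteq C$ and about $p_{U,T}^{\textrm{ex}}(C)$ itself. Second, I would use the interpretation sketched in the text just after the definition of $U$-good: for $U$-good $(T,H)$, a $\bigl(\tfrac12 \pm \Theta(\varepsilon)\bigr)$-fraction of cuts in $\{U' \in \mathcal{U}(T) : c \subseteq U'\}$ have $U' \cap C = c$. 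Rewriting $p_{U,T}^{\textrm{ex}}(c)$ and $p_{U,T}^{\textrm{ex}}(C)$ in terms of $p_{U,T}(c)$ and the counts of cuts, $U$-goodness forces a near-balance between ``extend $c$ using no more of $C$'' and ``extend $c$ using all of $C$''. Third — and this is where the quadratic savings enters — I would exploit a monotonicity/averaging structure on subsets: define $g(c) := p_{U,T}^{\textrm{ex}}(c)$ for $3$-subsets and relate the average of $g$ over all $\binom{k}{3}$ triples, or over a chain/nested family, to $p_{U,T}^{\textrm{ex}}(C)$. The point is that $p_{U,T}(c)$ is a conditional probability of landing in $\mathcal U$, and as we enlarge $c$ toward $C$ these probabilities are linked; only $O(1)$ ``levels'' (or an $O(1/k^2)$-fraction of triples) can sit in the narrow multiplicative window $[\frac{1}{1+\varepsilon}, 1+\varepsilon]$ around the single value $p_{U,T}^{\textrm{ex}}(C)$, because the number of ways to choose which $|U\cap A_i|$ blocks among the $m$ blocks are ``on'' depends sharply on $|c|$: a cut with $|U\cap C| = 3$ must put $t-3$ nodes into $A$-blocks, i.e.\ $(t-3)/(k-3)$ full blocks, whereas $|U\cap C|=k$ forces $(t-k)/(k-3)$ full blocks, and the binomial $\binom{m}{(t-|c|)/(k-3)}$ changes by a factor that is polynomially large in $m$ unless the two block-counts are within $O(1)$ — combined with the fact that $t$ was chosen so that exactly the midpoint $(m+1)/2$ of blocks is used when $|c|=3$. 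So the $U$-good window can only be hit by triples $c$ whose ``profile'' matches a very restricted target, and a counting of such triples inside a fixed $k$-set yields the $100/k^2$ bound (the constant $100$ absorbing the $\Theta(\varepsilon)$ slack with $\varepsilon = 1/8$).

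\textbf{Main obstacle.} The hard part is making precise the claim that only an $O(1/k^2)$-fraction of the $\binom{k}{3}$ triples can make $p_{U,T}^{\textrm{ex}}(c)$ multiplicatively close to $p_{U,T}^{\textrm{ex}}(C)$. This is genuinely a statement about the \emph{rectangle} $\mathcal{U}$ (an arbitrary set of cuts), not just about the ambient uniform measure, so one cannot simply compute binomials; one must argue that for $(T,H)$ to be $U$-good the set $\mathcal U$ must ``look balanced'' on the relevant sub-cube, and then show balance on too many triples is self-contradictory. I expect the argument to go: suppose more than a $100/k^2$-fraction of triples $c\subseteq c^\star$ are $U$-good; each such $c$ certifies that about half the $c$-extensions in $\mathcal U(T)$ stop at $c$ and about half go all the way to $C$; summing / averaging these certificates over the many $c$'s, and using inclusion between the events $\{U\cap C = c\}$ for different $c$, forces $p_{U,T}^{\textrm{ex}}(C)$ to be simultaneously too large and too small — a contradiction once $k$ is large enough relative to $\varepsilon$. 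I would expect the proof in the paper to package this via a clean extremal lemma about the function $c\mapsto p_{U,T}(c)/p_{U,T}(c')$ along nested pairs $c \subseteq c'$, so that the $1/k^2$ is literally $\binom{3}{?}/\binom{k}{?}$-style counting once the target profile is pinned down; replicating that cleanly is the crux.
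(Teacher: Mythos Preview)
Your approach has a genuine gap. The claim ``it suffices to bound the probability of $U$-goodness alone by $100/k^2$'' is simply false: take $\mathcal{U} = \mathcal{U}_{\textrm{all}}$. Then $p_{U,T}^{\textrm{ex}}(c) = 1$ for every admissible $c$ (in particular for every $3$-subset $c \subseteq C$ and for $c=C$), so \emph{every} pair $(T,H)$ is $U$-good and the fraction is $1$, not $O(1/k^2)$. More generally, there is no extremal obstruction preventing the map $c \mapsto p_{U,T}^{\textrm{ex}}(c)$ from being essentially constant over all $3$-subsets of $C$; the binomial counts you mention are already factored out by the conditioning in the definition of $p_{U,T}^{\textrm{ex}}$. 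So the ``monotonicity/averaging'' scheme you outline cannot work, and the discussion of block-counts is a red herring.

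What the paper actually does is completely different and hinges on the interaction of $M$-goodness, $U$-goodness, and the rectangle constraint $\mu_1(\mathcal{R}) = 0$ (which you never invoke). The key claim is: if $H, H^* \subseteq F$ are \emph{both} good, then $|H \cap H^*| \geq 2$. Indeed, if $|H \cap H^*| \leq 1$ then $H$ has two edges not in $H^*$, hence two distinct $C$-endpoints $u,v \notin V(H^*)$. Extend $H^*$ to a $k$-matching $F^* \subseteq E(C\cup D)$ containing the edge $(u,v)$. Because $(T,H^*)$ is $M$-good, $p_{M,T}(F^*) > 0$, so there is an $M \in \mathcal{M}(T)$ with $(u,v) \in M$. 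Because $(T,H)$ is $U$-good, $p_{U,T}^{\textrm{ex}}(H) > 0$, so there is a $U \in \mathcal{U}(T)$ with $U \cap C = V(H)\cap C \ni u,v$. For this pair, $(u,v)$ lies inside $U$, whence $|\delta(U)\cap M| = 1$, contradicting $\mu_1(\mathcal{R})=0$. Once you know any two good triples in $F$ overlap in at least two edges, fix one good $H^*$ and observe that $\Pr_{H \sim \binom{F}{3}}[\,|H\cap H^*|\geq 2\,] \leq 3k/\binom{k}{3} \leq 100/k^2$. That is the entire argument; the $1/k^2$ comes from elementary counting, not from any delicate balance of the $p^{\textrm{ex}}$ values.
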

\begin{proof}
Consider pairs $(T,H)$ and $(T,H^*)$ with $H,H^* \subseteq F$ that are both good.
We claim that then $|H \cap H^*| \geq 2$. For the sake of contradiction suppose that $|H \cap H^*| \leq 1$.
Then there are distinct nodes $u,v \in V(H \backslash H^*) \cap C$.
Now arbitrarily extend $H^*$ to a $k$-matching $F^*$ with $H^* \cup \{ (u,v) \} \subseteq F^* \subseteq E(C \cup D)$.
\begin{center}
\psset{unit=0.7cm}
\begin{pspicture}(0,0)(3,6.5)
\drawRect{linearc=0.1,linewidth=1pt,linecolor=black,fillcolor=green!30!gray,fillstyle=solid}{0}{0}{1}{6}
\drawRect{linearc=0.1,linewidth=1pt,linecolor=black,fillcolor=orange!50!white,fillstyle=solid}{2}{0}{1}{6}
\psline[linewidth=1pt,linestyle=solid,linecolor=black,fillstyle=solid,fillcolor=darkgray,opacity=0.7,linearc=0.2](-1,4)(0.8,4)(0.8,1)(-1,1) \rput[c](-0.5,2.5){$U$}
\cnode*(0.5,5.5){2.5pt}{c1}
\cnode*(0.5,4.5){2.5pt}{c2}
\cnode*(0.5,3.5){2.5pt}{c3}
\cnode*(0.5,2.5){2.5pt}{c4}
\cnode*(0.5,1.5){2.5pt}{c5}
\cnode*(2.5,5.5){2.5pt}{d1}
\cnode*(2.5,4.5){2.5pt}{d2}
\cnode*(2.5,3.5){2.5pt}{d3}
\cnode*(2.5,2.5){2.5pt}{d4}
\cnode*(2.5,1.5){2.5pt}{d5}
\ncline[linecolor=blue,linewidth=1.5pt]{c1}{d1} \nbput[labelsep=2pt]{\blue{$H^*$}}
\ncline[linecolor=blue,linewidth=1.5pt]{c2}{d2}
\ncarc[linecolor=blue,linewidth=1.5pt,arcangle=20]{c3}{d3}
\ncarc[linecolor=red,linewidth=1.0pt,arcangle=-20]{c3}{d3}
\ncline[linecolor=red,linewidth=1.0pt]{c4}{d4} \nbput[labelsep=2pt]{\red{$H$}}
\ncline[linecolor=red,linewidth=1.0pt]{c5}{d5}
\rput[c](0.5,0.5){$\vdots$}
\rput[c](2.5,0.5){$\vdots$}
\nput[labelsep=2pt]{90}{c4}{$u$}
\nput[labelsep=2pt]{-90}{c5}{$v$}
\ncline[linestyle=dotted,linecolor=black]{c4}{c5}
\rput[c](0.5,6.5){$C$}
\rput[c](2.5,6.5){$D$}
\end{pspicture}
\end{center}
By assumption  $(T,H^*)$ is good, hence $p_{\mathcal{M},T}(F^*) \geq \frac{1}{1+\varepsilon} p_{\mathcal{M},T}(H^*) > 0$. In other words, 
there exists a matching $M \in \mathcal{M}(T)$ with $(u,v) \in M$.
But also $(T,H)$ is good and hence $p_{\mathcal{U},T}^{\textrm{ex}}(H) > 0$, which implies that there 
is a cut $U \in \mathcal{U}(T)$ so that $U \cap C = V(H) \cap C$. 
Then $(u,v)$ runs inside of $U$ and hence $|\delta(U) \cap M| = 1$, which is a contradiction to $\mu_1(\mathcal{R}) = 0$.
Thus good pairs must indeed overlap in at least 2 edges. 

Now fix an $H^*$ so that $(T,H^*)$ is good (if there is none, there is nothing to show). 
Then
 \[
\Pr_{H \sim {F \choose 3}}[{\tt GOOD}(T,H) = 1] \leq \Pr_{H \sim {F \choose 3}}[ |H \cap H^*| \geq 2] \leq \frac{3k}{{k \choose 3}} \leq \frac{100}{k^2}.
\] 
This settles the claim.
\end{proof}


Now we can easily relate the contribution of the good pairs with the quantity $\mu_k(\mathcal{R})$. In particular 
we use that by definition, for a good pair $(T,H)$ and any $k$-matching $F$ with $H \subseteq F \subseteq C \times D$
one has both  $p_{\mathcal{M},T}^{\textrm{ex}}(H) \leq (1+\varepsilon)p_{\mathcal{M},T}(H) \leq (1+\varepsilon)^2 p_{\mathcal{M},T}^{\textrm{ex}}(F)$ (see Lemma~\ref{lem:pEx-vs-p})  
and $p_{\mathcal{U},T}^{\textrm{ex}}(H) \leq (1+\varepsilon) p_{\mathcal{U},T}^{\textrm{ex}}(C) = (1+\varepsilon) p_{\mathcal{U},T}^{\textrm{ex}}(F)$. The inequality in \eqref{eq:contributionGoodPartitions} then follows from
\begin{eqnarray*}
& & \E_T\Big[  \E_{|H| = 3}[ {\tt GOOD}(T,H) \cdot p_{\mathcal{M},T}^{\textrm{ex}}(H) \cdot p_{\mathcal{U},T}^{\textrm{ex}}(H) ] \Big] \\
&=& \E_T \Big[ \E_{|F| = k} \Big[ \E_{H \sim {F \choose 3}} \Big[ {\tt GOOD}(T,H) \cdot 
  \underbrace{p_{\mathcal{M},T}^{\textrm{ex}}(H)}_{\leq (1+\varepsilon)^2 p_{\mathcal{M},T}^{\textrm{ex}}(F)} \cdot \underbrace{p_{\mathcal{U},T}^{\textrm{ex}}(H)}_{\leq (1+\varepsilon) p_{\mathcal{U},T}^{\textrm{ex}}(F)} \Big] \Big] \Big] \\
&\leq& 2 \cdot \underbrace{\E_T \Big[ \E_{|F| = k} \Big[ p_{\mathcal{M},T}^{\textrm{ex}}(F) \cdot p_{\mathcal{U},T}^{\textrm{ex}}(F)}_{=\mu_k(\mathcal{R})} \cdot \underbrace{\E_{H \sim {F \choose 3}} [{\tt GOOD}(T,H)]}_{\leq 100/k^2} \Big] \Big] \\
&\leq& \frac{200}{k^2} \cdot \mu_k(\mathcal{R}).
\end{eqnarray*}
Here we assume that $\varepsilon \leq \frac{1}{8}$.

\subsection{The pseudo-random behavior of large sets\label{sec:PseudoRandomBehavior}}

It remains to bound the contribution of bad pairs. 
Before we continue with that, 
we want to describe a general phenomenon concerning the distribution of large set families.
To give an example, suppose you have a set family $X \subseteq 2^{[m]}$ with $|X| \geq 2^{(1-\varepsilon)m}$ for a small
enough constant $\varepsilon>0$. Then  $99\%$ of indices $i \in \{ 1,\ldots,m\}$ will be
in $50\% \pm 1\%$ of sets in the family $X$. 

Well, we need a slightly more general statement which we will prove using 
an \emph{entropy counting argument}. Recall that for a random variable $y$ over $\{1,\ldots,k\}$, the 
\emph{entropy} is defined by $H(y) := \sum_{j=1}^k \Pr[y=j] \cdot \log_2 \frac{1}{\Pr[y=j]}$.
Moreover, the entropy is maximized if $y$ is drawn from the \emph{uniform distribution}; in that case we have
$H(y) = \log_2(k)$. A useful property is that entropy is \emph{sub-additive}. For example
if $y = (y_1,\ldots,y_m)$ is a random vector, then $H(y) \leq \sum_{i=1}^m H(y_i)$. In the following, if we write $y \sim Y$,
then $y$ is a uniformly drawn random element from $Y$.
We need a crucial lemma that appeared already in a less general form in Razborov's paper~\cite{Disjointness-Razborov90}:
\begin{lemma} \label{lem:PseudoRandomBehavior}
For all $\varepsilon > 0$ and $q \in \setN$, there is a constant $\delta := \delta(\varepsilon,q) > 0$ so that the following is true: 
Take finite sets $X_1,\ldots,X_m$ with $1 \leq |X_i| \leq q$ for $i=1,\ldots,m$ and denote $X := X_1 \times \ldots \times X_m$.
Let $Y \subseteq X$ be a subset of size $|Y| \geq 2^{-\delta m}|X|$. An index $i \in [m]$ is called \emph{$\varepsilon$-unbiased}, if 
\[
  \frac{1}{1+\varepsilon} \cdot \frac{1}{|X_i|} \leq \Pr_{y \sim Y}[y_i = j] \leq (1+\varepsilon) \cdot \frac{1}{|X_i|} \quad \forall j \in X_i.
\] 
Then at most $\varepsilon m$ many indices will be $\varepsilon$-biased.
\end{lemma}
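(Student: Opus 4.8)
The plan is an \emph{entropy counting argument}, in the spirit of Razborov's original lemma. Fix $\delta$ (to be chosen at the very end as a function of $\varepsilon$ and $q$ only) and let $y$ be a uniform random element of $Y$; write $y_i$ for its $i$-th coordinate and $u^{(i)}$ for the uniform distribution on $X_i$. Since $y$ is uniform on a set of size $|Y| \geq 2^{-\delta m}|X| = 2^{-\delta m}\prod_i |X_i|$, its binary entropy satisfies
\[
  H(y) = \log_2|Y| \;\geq\; \sum_{i=1}^m \log_2|X_i| \;-\; \delta m .
\]
On the other hand, sub-additivity of entropy gives $H(y) \leq \sum_{i=1}^m H(y_i)$, and $H(y_i) \leq \log_2|X_i|$ always, so
\[
  \sum_{i=1}^m \bigl( \log_2|X_i| - H(y_i) \bigr) \;\leq\; \delta m .
\]
Thus the total ``entropy deficit'' of the marginals is at most $\delta m$, and it suffices to show that every $\varepsilon$-biased coordinate accounts for at least a fixed positive amount of it.

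The heart of the argument is therefore the single-coordinate claim: there is a constant $\gamma = \gamma(\varepsilon,q) > 0$ such that if $i$ is $\varepsilon$-biased then $\log_2|X_i| - H(y_i) \geq \gamma$. Letting $p$ denote the distribution of $y_i$ on $X_i$, one has the identity $\log_2|X_i| - H(p) = D\bigl(p \,\|\, u^{(i)}\bigr)$, the Kullback--Leibler divergence, which is non-negative and equals $0$ only when $p = u^{(i)}$. If $i$ is $\varepsilon$-biased, then some $j \in X_i$ has $\bigl| p_j - \tfrac{1}{|X_i|}\bigr| \geq \tfrac{\varepsilon}{1+\varepsilon}\cdot\tfrac{1}{|X_i|} \geq \tfrac{\varepsilon}{(1+\varepsilon)q}$, so the total variation distance between $p$ and $u^{(i)}$ is at least $\tfrac{\varepsilon}{2(1+\varepsilon)q}$; Pinsker's inequality then yields $D(p\,\|\,u^{(i)}) \geq \gamma$ with $\gamma$ an explicit positive constant depending only on $\varepsilon$ and $q$ (of order $\varepsilon^2/q^2$). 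Alternatively one can avoid Pinsker and argue by compactness: on each of the finitely many possible alphabets of size at most $q$, entropy is continuous and has the uniform distribution as its unique maximizer, hence is bounded away from the maximum on the closed set of $\varepsilon$-biased distributions.

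Finally, if $B \subseteq [m]$ is the set of $\varepsilon$-biased indices, then
\[
  \gamma\,|B| \;\leq\; \sum_{i \in B}\bigl(\log_2|X_i| - H(y_i)\bigr) \;\leq\; \sum_{i=1}^m\bigl(\log_2|X_i| - H(y_i)\bigr) \;\leq\; \delta m ,
\]
so $|B| \leq (\delta/\gamma)\,m$; choosing $\delta := \varepsilon\gamma$ (which depends only on $\varepsilon$ and $q$) gives $|B| \leq \varepsilon m$, as required.

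The one genuinely substantive step is the single-coordinate entropy-gap claim --- converting a bounded multiplicative deviation from uniformity in one entry into a constant additive loss of entropy, which is precisely where the hypothesis $|X_i| \leq q$ is needed (without it the deficit can tend to $0$ as the alphabet grows). Everything else --- lower-bounding $H(y)$ from $|Y|$, sub-additivity, and the final averaging --- is routine bookkeeping.
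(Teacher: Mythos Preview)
Your proof is correct and follows essentially the same entropy-counting argument as the paper: lower-bound $H(y)=\log_2|Y|$, apply sub-additivity, and use that each $\varepsilon$-biased coordinate contributes a fixed positive entropy deficit $\gamma(\varepsilon,q)$. The paper establishes the single-coordinate gap purely via the continuity/compactness argument you mention as an alternative, while you additionally supply the explicit Pinsker route; otherwise the two proofs are the same.
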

\begin{proof}
We consider the random variable $y \sim Y$ and fix an index $i$. 
The entropy of the $i$th coordinate is
\begin{equation}  \label{eq:EntropyOfSingleCoord}
  H(y_i)  = \sum_{j \in X_i} \Pr[y_i=j] \cdot \log_2 \left(\frac{1}{\Pr[y_i=j]}\right) \leq \log_2(|X_i|).
\end{equation}
This bound follows from Jensen's inequality and the observation that the $\log_2$ function is concave.
In fact, the $\log_2$ function is \emph{strictly} concave, which means that the inequality in 
\eqref{eq:EntropyOfSingleCoord} is tight only if 
 $\Pr[y_i=j] = \frac{1}{|X_i|}$ for all outcomes $j \in X_i$. 
In particular, if $i$ is biased in the sense  of the definition above, 
then  $H(y_i) < \log_2(|X_i|)$. The entropy function $H$ is continuous, hence for compactness
reasons there has to be a constant $c := c(\varepsilon,q)>0$ so that $H(y_i) \leq \log_2(|X_i|) - c$ 
holds for each $\varepsilon$-biased index $i$.

Now, we assume for the sake of contradiction that there are $\varepsilon m$ indices that
are $\varepsilon$-biased. Then we can bound the entropy of $y \sim Y$ by
\[
 \log_2(|Y|) = H(y) \stackrel{\textrm{subadd.}}{\leq} \sum_{i=1}^m H(y_i)  \leq \sum_{i=1}^m \log_2(|X_i|) - c\varepsilon m = \log_2(|X|) - c\varepsilon m 
\]
Rearranging yields  $\frac{|Y|}{|X|} \leq 2^{-c \varepsilon m}$, which contradicts the assumption
if we choose $\delta < c\varepsilon$.
\end{proof}

There is an equivalent way of stating ``unbiasedness'' that is closer to our definition
of good pairs: 
\begin{corollary} \label{cor:Pseudo1}
Suppose we have sets $Y \subseteq X = X_1 \times \ldots \times X_m$. For each index $i$ that 
is 
 $\varepsilon$-unbiased and each $j \in X_i$ one has
\[
 \frac{1}{1+\varepsilon} \Pr_{y \sim X}[y \in Y] \leq \Pr_{y \sim X}[y \in Y \mid y_i=j] \leq (1+\varepsilon) \Pr_{y \sim X}[y \in Y].
\]
\end{corollary}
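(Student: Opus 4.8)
The plan is to derive this directly from the definition of $\varepsilon$-unbiasedness via Bayes' rule, using nothing beyond the product structure of $X$: since $X = X_1 \times \cdots \times X_m$, a uniform draw $y \sim X$ has independent coordinates with marginal $\Pr_{y\sim X}[y_i = j] = 1/|X_i|$ for each $j \in X_i$. Note also that $\Pr_{y\sim X}[y\in Y] = |Y|/|X| > 0$ because $|Y| \geq 2^{-\delta m}|X| > 0$, so every conditional probability appearing below is well defined.

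First I would rewrite the conditional probability using the definition of conditional probability together with the fact that the marginal of $y_i$ under $y \sim X$ is uniform on $X_i$:
\[
\Pr_{y\sim X}[y\in Y \mid y_i = j] \;=\; \frac{\Pr_{y\sim X}[y\in Y \text{ and } y_i = j]}{\Pr_{y\sim X}[y_i = j]} \;=\; |X_i| \cdot \Pr_{y\sim X}[y\in Y \text{ and } y_i = j].
\]
Next, observe that $\Pr_{y\sim X}[y\in Y \text{ and } y_i = j] = \Pr_{y\sim Y}[y_i = j]\cdot \Pr_{y\sim X}[y\in Y]$, simply because conditioning a uniform draw from $X$ on the event $y \in Y$ produces a uniform draw from $Y$. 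Combining the two identities gives
\[
\Pr_{y\sim X}[y\in Y \mid y_i = j] \;=\; \bigl(|X_i|\cdot \Pr_{y\sim Y}[y_i = j]\bigr)\cdot \Pr_{y\sim X}[y\in Y].
\]

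Finally I would invoke the hypothesis that $i$ is $\varepsilon$-unbiased, which is precisely the statement that $\frac{1}{1+\varepsilon} \leq |X_i|\cdot\Pr_{y\sim Y}[y_i = j] \leq 1+\varepsilon$ for every $j \in X_i$. Substituting this two-sided bound on the bracketed factor into the displayed identity yields
\[
\frac{1}{1+\varepsilon}\Pr_{y\sim X}[y\in Y] \;\leq\; \Pr_{y\sim X}[y\in Y \mid y_i = j] \;\leq\; (1+\varepsilon)\Pr_{y\sim X}[y\in Y],
\]
which is the asserted inequality. I expect no genuine obstacle here: the corollary is essentially a reformulation of the definition of unbiasedness after clearing the uniform marginal $1/|X_i|$, and the only point that requires a moment's care is that it is exactly the product hypothesis $X = X_1 \times \cdots \times X_m$ (and nothing else about $X$) that makes the marginal of $y_i$ under $y \sim X$ equal to $1/|X_i|$.
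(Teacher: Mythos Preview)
Your proof is correct and follows essentially the same approach as the paper: both arguments rewrite $\Pr_{y\sim X}[y\in Y \mid y_i=j]$ via Bayes' rule as $\Pr_{y\sim Y}[y_i=j]\cdot \Pr_{y\sim X}[y\in Y]/\Pr_{y\sim X}[y_i=j]$, use the product structure to identify the denominator with $1/|X_i|$, and then plug in the two-sided bound from the definition of $\varepsilon$-unbiasedness. You unpack the Bayes step a bit more explicitly than the paper does, but the content is identical.
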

\begin{proof}
We simply rewrite the conditional probability as
\[
  \Pr_{y \sim X}[y \in Y \mid y_i = j] = \underbrace{\Pr_{y \sim Y}[y_i=j]}_{\leq (1+\varepsilon)/|X_i|} \cdot \frac{\Pr_{y \sim X}[y \in Y]}{\underbrace{\Pr_{y \sim X}[y_i=j]}_{=1/|X_i|}}
 \leq (1+\varepsilon) \cdot \Pr_{y \sim X}[y \in Y].
\]
using Bayes' Theorem\footnote{Recall that \emph{Bayes' Theorem} says that
$\Pr[A \mid B] = \frac{\Pr[A \cap B]}{\Pr[B]} = \frac{\Pr[A \cap B]}{\Pr[A]} \cdot \frac{\Pr[A]}{\Pr[B]} = \Pr[B \mid A] \cdot \frac{\Pr[A]}{\Pr[B]}$}. The other direction is analogous. 
\end{proof}
We will apply Lemma~\ref{lem:PseudoRandomBehavior} twice in our proof: once for cuts and once for matchings. 
The cuts all have the same size, so the probability distribution for cuts is not a product distribution.
Hence, we need a slight modification:
\begin{corollary} \label{cor:Pseudo2}
Let $Y \subseteq Z \subseteq X = \{ 0,1\}^{m}$ be sets with  $m$ even and $Z := \{ y \in \{ 0,1\}^m \mid \|y\|_1 = \frac{m}{2}\}$.  
Then for each index $i$ that is $\varepsilon$-unbiased with respect to $y \sim Y$ and each $j \in X_i$ one has
\[
 \frac{1}{1+\varepsilon} \Pr_{y \sim Z}[y \in Y] \leq \Pr_{y \sim Z}[y \in Y \mid y_i=j] \leq (1+\varepsilon) \Pr_{y \sim Z}[y \in Y]
\]
\end{corollary}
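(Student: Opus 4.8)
The plan is to reuse the Bayes'-theorem argument from the proof of Corollary~\ref{cor:Pseudo1} almost verbatim; the only genuinely new input is that replacing the product cube $X=\{0,1\}^m$ by the middle Hamming slice $Z$ does not destroy the perfect balance of an individual coordinate. So the key preliminary observation I would establish is that for every $i\in[m]$ and every $j\in\{0,1\}$ one has $\Pr_{y\sim Z}[y_i=j]=\frac12$. This holds because $Z$ is invariant under permutations of the $m$ coordinates, so $\Pr_{y\sim Z}[y_i=1]$ is the same for all $i$, while $\sum_{i=1}^m \Pr_{y\sim Z}[y_i=1]=\E_{y\sim Z}[\|y\|_1]=\frac m2$ forces each of these equal summands to be $\frac12$ (equivalently, $\binom{m-1}{m/2-1}/\binom{m}{m/2}=\frac12$).

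Granted this, I would fix an $\varepsilon$-unbiased index $i$ and a value $j\in X_i=\{0,1\}$ and rewrite the conditional probability exactly as in Corollary~\ref{cor:Pseudo1}, but with $Z$ playing the role of the reference distribution:
\[
  \Pr_{y\sim Z}[y\in Y\mid y_i=j]
  \;=\; \Pr_{y\sim Y}[y_i=j]\cdot\frac{\Pr_{y\sim Z}[y\in Y]}{\Pr_{y\sim Z}[y_i=j]}
  \;=\; 2\,\Pr_{y\sim Y}[y_i=j]\cdot\Pr_{y\sim Z}[y\in Y],
\]
the last step being the preliminary observation. The definition of $\varepsilon$-unbiasedness from Lemma~\ref{lem:PseudoRandomBehavior}, applied with the target value $\frac{1}{|X_i|}=\frac12$, states precisely that $\frac{1}{2(1+\varepsilon)}\le \Pr_{y\sim Y}[y_i=j]\le\frac{1+\varepsilon}{2}$, so $2\,\Pr_{y\sim Y}[y_i=j]\in[\frac{1}{1+\varepsilon},\,1+\varepsilon]$, which is exactly the claimed two-sided bound; the lower bound is the same computation read in the opposite direction.

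I do not expect a real obstacle here; the one point that requires care is purely normalization bookkeeping, namely confirming that the quantity $\frac{1}{|X_i|}$ hard-wired into the definition of $\varepsilon$-unbiasedness (which was phrased for product distributions) still coincides with the true marginal $\Pr_{y\sim Z}[y_i=j]$ once the reference distribution is the slice $Z$ — and the preliminary observation is exactly that confirmation. For completeness I would also note (it is not part of the statement, but is how the corollary gets used, since the cuts in $\mathcal{U}(T)$ all have the same size and therefore form a slice rather than a full cube) that this dovetails with Lemma~\ref{lem:PseudoRandomBehavior}: as $|Z|=\binom{m}{m/2}\ge 2^m/(m+1)$ differs from $|X|=2^m$ only by a polynomial factor, any $Y\subseteq Z$ with $|Y|\ge 2^{-\delta m}|Z|$ also satisfies $|Y|\ge 2^{-\delta' m}|X|$ for a slightly smaller constant $\delta'>0$, so at most $\varepsilon m$ indices can fail to be $\varepsilon$-unbiased.
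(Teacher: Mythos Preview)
Your proposal is correct and follows essentially the same approach as the paper: the paper's entire proof is the one-line remark that the estimate from Corollary~\ref{cor:Pseudo1} goes through verbatim because $\Pr_{y\sim Z}[y_i=j]=\tfrac12$ still holds on the slice, which is exactly the Bayes' computation you spell out. Your additional observations about why $\varepsilon$-unbiasedness (defined via $1/|X_i|$) remains the right notion, and about the polynomial gap between $|Z|$ and $|X|$ when invoking Lemma~\ref{lem:PseudoRandomBehavior}, are accurate and in fact make explicit what the paper leaves implicit.
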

\begin{proof}
We can use the same estimate as in Cor.~\ref{cor:Pseudo1}, and observe that still $\Pr_{y \sim Z}[y_i=j] = \frac{1}{2}$ for $j \in \{ 0,1\}$
even though the uniform distribution over $Z$ is not a product distribution set.
\end{proof}

\subsection{Contribution of bad pairs}

We finally continue with proving that the contribution of the bad pairs is bounded\footnote{We should mention that the analysis for the bad pairs differs from the STOC'14 version.}.
Recall that our goal will be to show that
\begin{equation} \label{eq:ContrBadPart1}
  \E_T \Big[ \E_{|H|=3} \Big[ {\tt BAD}(T,H) \cdot p_{\mathcal{M},T}^{\textrm{ex}}(H) \cdot p_{\mathcal{U},T}^{\textrm{ex}}(H) \Big] \Big] \stackrel{!}{\leq} \varepsilon \cdot \mu_3(\mathcal{R}).
\end{equation}
First, observe that the left hand side of \eqref{eq:ContrBadPart1} 
can be stated in the following more explicit form
\begin{equation} \label{eq:ContrBadPart2}
 \E_{\substack{T=(A,C,D,B) \\ H \subseteq C \times D: |H| = 3}} \Big[ \E_{\substack{U \sim \mathcal{U}_{\textrm{all}}(T): U \cap C = V(H) \cap C \\ M \sim \mathcal{M}_{\textrm{all}}(T): M \cap (C \times D) = H}} \Big[ {\tt BAD}(T,H) \cdot \bm{1}_{(U,M) \in \mathcal{R}}\Big] \Big]
\end{equation}
Here $1_{\mathcal{E}}$ denotes the indicator variable for an event $\mathcal{E}$.
In \eqref{eq:ContrBadPart2} we pick first the pair $(T,H)$ and then $(U,M)$. 
Now we want to switch the expectations. 
For notational convenience, define 
\[
  \mathcal{P}(U,M) := \left\{ T=(A,C,D,B) \mid \begin{array}{c} U \in \mathcal{U}_{\textrm{all}}(T) \textrm{ with }  U \cap C = V(H) \cap C \textrm{ and} \\   M \in \mathcal{M}_{\textrm{all}}(T) \textrm{ with }  M \cap (C \times D) = H \end{array} \right\}
\]
as all the partitions that are compatible with the pair $(U,M) \in Q_3$, where $H := \delta(U) \cap M$. Then switching the expectation gives
\[
 \eqref{eq:ContrBadPart2} = \E_{\substack{(U,M) \in Q_3 \\ H := \delta(U) \cap M}} \Big[ \bm{1}_{(U,M) \in \mathcal{R}} \cdot \underbrace{\E_{T \sim \mathcal{P}(U,M)}[ {\tt BAD}(T,H)]}_{\textrm{claim: }\leq \varepsilon} \Big] \stackrel{!}{\leq} \varepsilon \cdot \E_{(U,M) \in Q_3}[ \bm{1}_{(U,M) \in \mathcal{R}}] = \varepsilon \cdot \mu_3(R).
\]
Implicitly we have been using here that for symmetry reasons the cardinality $\mathcal{P}(U,M)$ is the same for all $(U,M) \in Q_3$.
As indicated in the above formula, it suffices to prove that for each pair $(U,M) \in Q_3$, 
only an $\varepsilon$-fraction of compatible partitions $T$ can have the pair $(T,H)$
being bad where $H$ is uniquely defined by $(U,M)$. 
This then implies \eqref{eq:ContrBadPart1}.
Formally, we will prove:
\begin{lemma}  \label{lem:UMbadWithProbEpsilon}
Fix any $(U,M) \in Q_3$ and abbreviate $H := \delta(U) \cap M$. 
Then
\[
  \Pr_{T \sim \mathcal{P}(U,M)}[{\tt BAD}(T,H) = 1] \leq \varepsilon.
\]
\end{lemma}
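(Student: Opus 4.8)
The plan is to prove the stronger local statement: over a uniformly random partition $T\sim\mathcal{P}(U,M)$, each of the two ways in which $(T,H)$ can fail to be either good or small occurs with probability at most $\varepsilon/2$. First I would unfold the definitions. Since a bad pair is by definition neither good nor small, and ``good'' means ``$M$-good and $U$-good'' while ``small'' means ``$p_{M,T}^{\textrm{ex}}(H)\le 2^{-\delta m}$ or $p_{U,T}^{\textrm{ex}}(H)\le 2^{-\delta m}$'', we get
\[
  {\tt BAD}(T,H)=1 \;\Longrightarrow\; \big(\neg(M\text{-good})\text{ and }p_{M,T}^{\textrm{ex}}(H)>2^{-\delta m}\big)\ \text{ or }\ \big(\neg(U\text{-good})\text{ and }p_{U,T}^{\textrm{ex}}(H)>2^{-\delta m}\big),
\]
so by the union bound it suffices to bound the probability of each disjunct by $\varepsilon/2$. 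Before invoking the pseudo-random machinery I would also record the easy bridges $p_{M,T}(H)\ge c(k)\cdot p_{M,T}^{\textrm{ex}}(H)$ and $p_{U,T}(c)\ge c(k)\cdot p_{U,T}^{\textrm{ex}}(c)$ for a constant $c(k)>0$ (a uniform extension of $H$ to a perfect matching on $C\cup D$ has no further edges between $C$ and $D$, respectively a uniform cut containing $c$ meets $C$ in exactly $c$, with constant probability); together with Lemma~\ref{lem:pEx-vs-p} these let me move freely between $p$ and $p^{\textrm{ex}}$, at the cost of a harmless shrinking of the final $\delta$.

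For the matching disjunct, I would use the product structure of $\mathcal{M}_{\textrm{all}}(T)$: conditioned on $H\subseteq M'$, a uniform $M'\in\mathcal{M}_{\textrm{all}}(T)$ is an \emph{independent} choice of a uniform perfect matching on $C\cup D$ extending $H$, on each $A_i$, and on each $B_i$ --- a product of $2m+1$ coordinates, each with at most $q=q(k)$ outcomes. The event $M'\in\mathcal{M}$ is a subset $Y$ of this product of relative density $p_{M,T}(H)$; when $p_{M,T}(H)>2^{-\delta' m}$, Lemma~\ref{lem:PseudoRandomBehavior} and Corollary~\ref{cor:Pseudo1} show that all but $\varepsilon(2m+1)$ of the coordinates are $\varepsilon$-unbiased, and that conditioning $Y$ on any value of an unbiased coordinate changes its density by a factor at most $1+\varepsilon$. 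Applied to the single ``$C\cup D$-coordinate'' (whose values are exactly the $k$-matchings $F$ with $H\subseteq F\subseteq E(C\cup D)$), the latter says precisely that $p_{M,T}(F)\in[\tfrac1{1+\varepsilon},1+\varepsilon]\cdot p_{M,T}(H)$ for every such $F$, i.e. that $(T,H)$ is $M$-good. Hence it remains to show that, over random $T$, the $C\cup D$-coordinate is among the $\le\varepsilon(2m+1)$ biased ones with probability $\le\varepsilon/2$; here I would use that a uniform compatible $T$ re-randomizes which bundles of $M$-edges play the role of $C\cup D$, of the $A_i$'s and of the $B_i$'s, so that after coarsening the alphabets appropriately the $C\cup D$-coordinate is exchangeable with enough of the remaining $\Theta(m)$ coordinates that it is biased with probability $O(\varepsilon)$, and finally I would pass back to $p_{M,T}^{\textrm{ex}}$ via Lemma~\ref{lem:pEx-vs-p}.

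For the cut disjunct the argument is parallel but runs through Corollary~\ref{cor:Pseudo2}. For fixed $T$, a uniform $U'\in\mathcal{U}_{\textrm{all}}(T)$ with $U'\cap C=c$, where $c:=V(H)\cap C$ is a fixed $3$-set, is exactly the choice of which $\tfrac{m+1}{2}$ of the blocks $A_1,\dots,A_m$ to include, i.e. a uniform $\tfrac{m+1}{2}$-subset of $[m]$; since $\tfrac{m+1}{2}$ differs from $\tfrac m2$ by only a half, a routine adaptation of Corollary~\ref{cor:Pseudo2} gives that, once $p_{U,T}^{\textrm{ex}}(c)>2^{-\delta m}$, all but $\varepsilon m$ of the blocks are $\varepsilon$-unbiased. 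The extra twist compared with the matching case is that $U$-goodness compares $p_{U,T}^{\textrm{ex}}(c)$ with $p_{U,T}^{\textrm{ex}}(C)$, and the latter lives on the $\tfrac{m-1}{2}$-subset slice; to bridge the two slices I would use that a cut $C\cup\bigcup_{i\in S'}A_i$ is obtained from a cut $c\cup A_{i_0}\cup\bigcup_{i\in S'}A_i$ (with $i_0\notin S'$) by swapping the block $A_{i_0}$ for the $(k-3)$-node set $C\setminus c$, together with the fact that for a uniform compatible $T$ the set $C\setminus c$ is exchangeable with the generic $A$-blocks --- so that conditioning on ``$A_{i_0}$ in versus $C\setminus c$ in'' behaves like conditioning on a single unbiased coordinate and yields $p_{U,T}^{\textrm{ex}}(C)\in[\tfrac1{1+O(\varepsilon)},1+O(\varepsilon)]\cdot p_{U,T}^{\textrm{ex}}(c)$. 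The main obstacle --- and the bulk of the remaining work --- is precisely this exchangeability bookkeeping: making rigorous that the distribution of a uniformly random compatible $T$ treats the $C\cup D$-bundle (in the matching case) and the ``extra core'' set $C\setminus c$ (in the cut case) symmetrically enough with the generic blocks, so that ``at most $\varepsilon\cdot(\text{number of coordinates})$ bad coordinates'' turns into ``$(T,H)$ is bad with probability $\le\varepsilon$'', as required.
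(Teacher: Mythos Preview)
Your proposal is essentially correct and follows the same route as the paper: split via the union bound into a matching failure and a cut failure, encode the conditional distributions as product spaces, apply Lemma~\ref{lem:PseudoRandomBehavior} (with Corollaries~\ref{cor:Pseudo1}/\ref{cor:Pseudo2}) to bound the number of biased coordinates, and then use that a random compatible $T$ makes the ``special'' coordinate uniformly random among a pool of $\Theta(m)$ interchangeable blocks.

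The paper packages the exchangeability step more cleanly than you do, and this is worth noting. Rather than fixing $T$ and then arguing post hoc that the $C\cup D$-coordinate (resp.\ the $C\setminus c$ set) is exchangeable with others, the paper draws $T$ in two explicit phases: first a pool of $m+1$ candidate blocks $\tilde B_1,\dots,\tilde B_{m+1}$ (resp.\ $\tilde A_1,\dots,\tilde A_{m+1}$) together with the remaining blocks, all satisfying the compatibility constraints with $(U^*,M^*)$; then a uniform index $i$ to decide which candidate becomes $C\cup D\setminus V(H)$ (resp.\ $C\setminus c$). Crucially the product $X$ and the set $Y$ are determined already by the \emph{first} phase --- they do not change with $i$ --- so Lemma~\ref{lem:PseudoRandomBehavior} is applied once per first-phase outcome, and the second-phase randomness of $i$ directly converts ``at most $\varepsilon m$ biased indices'' into ``probability $\le \varepsilon/2$ that the chosen index is biased''. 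This removes your ``coarsening the alphabets'' and ``bridging two slices'' manoeuvres: in the cut case, once $C\setminus c$ is treated as one of the $m+1$ blocks $\tilde A_i$, both $p_{U,T}^{\textrm{ex}}(c)$ and $p_{U,T}^{\textrm{ex}}(C)$ are conditionings of a \emph{single} set $Y$ on a \emph{single} slice $Z\subseteq\{0,1\}^{m+1}$ (with $m+1$ even), namely on $y_i=0$ versus $y_i=1$, and Corollary~\ref{cor:Pseudo2} applies verbatim. Your approach can be made rigorous, but the paper's two-phase sampling is the tidy way to carry out exactly the ``exchangeability bookkeeping'' you flag as the main remaining work.
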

Recall that we did call a pair $(T,H)$ bad if it is neither small nor good. Now, let us 
distinguish whether $(T,H)$ is bad because of the cut part or because of the matching part. 
We define $(T,H)$ to be \emph{$\mathcal{U}$-bad} if it is neither small nor $\mathcal{U}$-good. Similarly
we call $(T,H)$ $\mathcal{M}$-bad if it is neither small nor $\mathcal{M}$-good. 
In the next subsections we will separately show that for all $(U,M) \in Q_3$ and $H = \delta(U) \cap M$ one has
\[
 \Pr_{T \sim \mathcal{P}(U,M)}[\mathcal{U}{\tt{-BAD}}(T,H)=1] \leq \frac{\varepsilon}{2} \quad \textrm{and} \quad \Pr_{T \sim \mathcal{P}(U,M)}[\mathcal{M}{\tt-BAD}(T,H)=1] \leq \frac{\varepsilon}{2}  
\]
which then concludes Lemma~\ref{lem:UMbadWithProbEpsilon} using the union bound.

\subsubsection{The contribution for $\mathcal{U}$-bad pairs}

First, we show that for any pair $(U,M)$, most compatible partitions $T$ 
have either $(T,H)$ being small or 
$\mathcal{U}$-good. If we have a perfect matching $M$ and a nodeset $W$, then we say that $M$ 
\emph{crosses} $W$ if $M \cap \delta(W) \neq \emptyset$. 
\begin{lemma} \label{lem:BoundOnUBadProb}
Fix any pair $(U^*,M^*) \in Q_3$ and abbreviate $H := \delta(U^*) \cap M^*$. 
Then
\[ \Pr_{T \sim \mathcal{P}(U^*,M^*)}[\mathcal{U}{\tt{-BAD}}(T,H)=1] \leq \frac{\varepsilon}{2}. \]
\end{lemma}
\begin{proof}
We imagine that we generate the random partition
 $T \sim \mathcal{P}(U^*,M^*)$ in two phases. 
In the first phase, we randomly partition the nodes $V \setminus V(H)$ into disjoint blocks
 $B_1,\ldots,B_m $ each of size $2(k-3)$; $k-3$ nodes $D \setminus V(H)$ and blocks $\tilde{A}_1,\ldots,\tilde{A}_{m+1}$, each of size
$k-3$. Here we choose the blocks conditioning on the following events: (1) $M^*$ does not cross any of those blocks; 
(2) $U^* \subseteq V \setminus (B \cup D)$ and (3) $|U^* \cap \tilde{A}_i| \in \{ 0, |\tilde{A}_i|\}$ for all $i=1,\ldots,m+1$. 
Figure~\ref{fig:BoundUBad} gives a visualization.

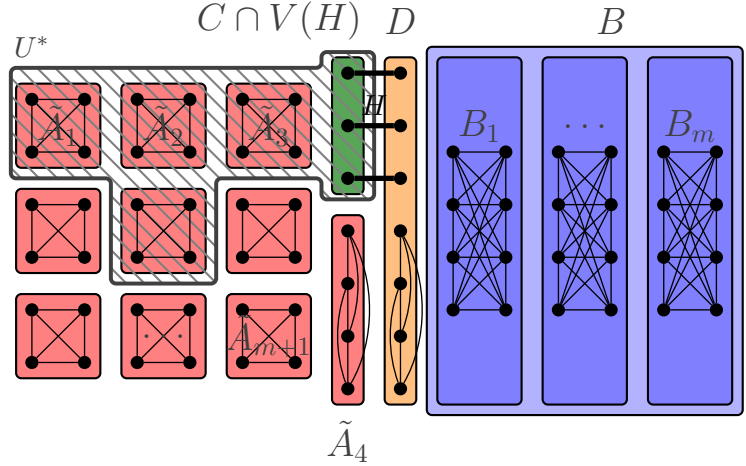
\begin{figure}
\begin{center}
\psset{unit=0.7cm}
\begin{pspicture}(0,-0.5)(14,7.5)
\drawRect{fillcolor=green!30!gray,fillstyle=solid,linearc=0.1}{6.2}{4.2}{0.6}{2.6} 
\drawRect{fillcolor=blue!30!white,fillstyle=solid,linearc=0.1}{8.0}{0.0}{6}{7} 
\drawRect{fillcolor=orange!50!white,fillstyle=solid,linearc=0.1}{7.2}{0.2}{0.6}{6.6} 
\drawRect{fillcolor=blue!50!white,fillstyle=solid,linearc=0.1}{8.2}{0.2}{1.6}{6.6} %
\drawRect{fillcolor=blue!50!white,fillstyle=solid,linearc=0.1}{10.2}{0.2}{1.6}{6.6} %
\drawRect{fillcolor=blue!50!white,fillstyle=solid,linearc=0.1}{12.2}{0.2}{1.6}{6.6} %
\rput[c](5.2,7.5){\Large{\darkgray{$C \cap V(H)$}}}
\rput[c](7.5,7.5){\Large{\darkgray{$D$}}}
\rput[c](11.5,7.5){\Large{\darkgray{$B$}}}
\rput[c](9,5.5){\Large{\darkgray{$B_1$}}}
\rput[c](11,5.5){\Large{\darkgray{$\ldots$}}}
\rput[c](13,5.5){\Large{\darkgray{$B_{m}$}}}
\drawRect{fillstyle=solid,fillcolor=red!50!white,linearc=0.1}{0.2}{0.7}{1.6}{1.6}
\drawRect{fillstyle=solid,fillcolor=red!50!white,linearc=0.1}{0.2}{2.7}{1.6}{1.6}
\drawRect{fillstyle=solid,fillcolor=red!50!white,linearc=0.1}{0.2}{4.7}{1.6}{1.6}
\drawRect{fillstyle=solid,fillcolor=red!50!white,linearc=0.1}{2.2}{0.7}{1.6}{1.6}
\drawRect{fillstyle=solid,fillcolor=red!50!white,linearc=0.1}{2.2}{2.7}{1.6}{1.6}
\drawRect{fillstyle=solid,fillcolor=red!50!white,linearc=0.1}{2.2}{4.7}{1.6}{1.6}
\drawRect{fillstyle=solid,fillcolor=red!50!white,linearc=0.1}{4.2}{0.7}{1.6}{1.6}
\drawRect{fillstyle=solid,fillcolor=red!50!white,linearc=0.1}{4.2}{2.7}{1.6}{1.6}
\drawRect{fillstyle=solid,fillcolor=red!50!white,linearc=0.1}{4.2}{4.7}{1.6}{1.6}
\drawRect{fillstyle=solid,fillcolor=red!50!white,linearc=0.1}{6.2}{0.2}{0.6}{3.6} 
\pspolygon[fillstyle=vlines,hatchcolor=gray,linecolor=darkgray,linewidth=1.5pt,linearc=0.1](0.1,6.6)(6,6.6)(6,6.9)(7,6.9)(7,4.1)(6,4.1)(6,4.5)(4,4.5)(4,2.5)(2,2.5)(2,4.5)(0.1,4.5) \rput[c](0.5,7.0){$\Large{\darkgray{U^*}}$} 
\multido{\N=0.5+2.0}{3}{ 
  \multido{\n=1.0+2.0}{3}{%
     \rput[c](\N,\n){%
        \cnode*(0,0){2.5pt}{A}%
        \cnode*(1,0){2.5pt}{B}%
        \cnode*(0,1){2.5pt}{C}%
        \cnode*(1,1){2.5pt}{D}%
        \ncline[linewidth=0.5pt,linecolor=black]{A}{B}%
       \ncline[linewidth=0.5pt,linecolor=black]{A}{C}%
       \ncline[linewidth=0.5pt,linecolor=black]{A}{D}%
       \ncline[linewidth=0.5pt,linecolor=black]{B}{C}%
       \ncline[linewidth=0.5pt,linecolor=black]{B}{D}%
       \ncline[linewidth=0.5pt,linecolor=black]{C}{D}%
     }
  }
}
\rput[c](1,5.5){\Large{\darkgray{$\tilde{A}_1$}}}
\rput[c](3,5.5){\Large{\darkgray{$\tilde{A}_2$}}}
\rput[c](5,5.5){\Large{\darkgray{$\tilde{A}_3$}}}
\rput[c](6.5,-0.5){\Large{\darkgray{$\tilde{A}_4$}}}
\rput[c](3,1.5){\Large{\darkgray{$\ldots$}}}
\rput[c](5,1.5){\Large{\darkgray{$\tilde{A}_{m+1}$}}}
\psset{linewidth=0.5pt,linecolor=black} 
\multido{\N=8.5+2.0}{3}{
 \rput[c](\N,2){%
  \cnode*(0,3){2.5pt}{A1}  \cnode*(1,3){2.5pt}{B1}%
  \cnode*(0,2){2.5pt}{A2}  \cnode*(1,2){2.5pt}{B2}%
  \cnode*(0,1){2.5pt}{A3}  \cnode*(1,1){2.5pt}{B3}%
  \cnode*(0,0){2.5pt}{A4}  \cnode*(1,0){2.5pt}{B4}%
  \ncline{A1}{B1}%
  \ncline{A1}{B2}%
  \ncline{A1}{B3}%
  \ncline{A1}{B4}%
  \ncline{A2}{B1}%
  \ncline{A2}{B2}%
  \ncline{A2}{B3}%
  \ncline{A2}{B4}%
  \ncline{A3}{B1}%
  \ncline{A3}{B2}%
  \ncline{A3}{B3}%
  \ncline{A3}{B4}%
  \ncline{A4}{B1}%
  \ncline{A4}{B2}%
  \ncline{A4}{B3}%
  \ncline{A4}{B4}%
  \ncline{A1}{A4}%
  \ncline{B1}{B4}%
 }
}
\multido{\N=6.5+1.0}{2}{
  \cnode*(\N,3.5){2.5pt}{A}%
  \cnode*(\N,2.5){2.5pt}{B}%
  \cnode*(\N,1.5){2.5pt}{C}%
  \cnode*(\N,0.5){2.5pt}{D}%
  \ncline{A}{D}%
  \ncarc[arcangle=20]{A}{C}%
  \ncarc[arcangle=-20]{B}{D}%
  \ncarc[arcangle=30]{A}{D}%
}%
\cnode*(6.5,6.5){2.5pt}{c1} \cnode*(7.5,6.5){2.5pt}{d1} \ncline[linewidth=2pt,linecolor=black]{c1}{d1}
\cnode*(6.5,5.5){2.5pt}{c2} \cnode*(7.5,5.5){2.5pt}{d2} \ncline[linewidth=2pt,linecolor=black]{c2}{d2} \naput[labelsep=2pt]{$H$}
\cnode*(6.5,4.5){2.5pt}{c3} \cnode*(7.5,4.5){2.5pt}{d3} \ncline[linewidth=2pt,linecolor=black]{c3}{d3}
\end{pspicture}
\caption{Situation after we decided for $\tilde{A}_1,\ldots,\tilde{A}_{m+1},H,B_1,\ldots,B_{m}$. The matching $M^*$ is a subset of the depicted edges. \label{fig:BoundUBad}}
\end{center}
\end{figure}

Now, consider the indices $J := \{ i \in [m+1] \mid |U^* \cap \tilde{A}_i| = 0\}$ of blocks
not contained in $U^*$. In particular $|J| = \frac{m+1}{2}$. In the second phase, 
we take a random uniform index $i \sim J$  and declare $\tilde{A}_i$ as the missing $k-3$ nodes  $C \setminus V(H)$. 
Finally, we set $\{A_1,\ldots,A_m\} := \{\tilde{A}_1,\ldots,\tilde{A}_{i-1},\tilde{A}_{i+1},\ldots,\tilde{A}_{m+1}\}$ which completes the description of $T$. 
The two stage process can be summarized to
\[
 \Pr_{T \sim \mathcal{P}(U^*,M^*)}[\mathcal{U}{\tt-BAD}(T,H)=1] = \E_{B_1,\ldots,B_m,D,\tilde{A}_1,\ldots,\tilde{A}_{m+1}} \Big[ \E_{i \sim J}[ \mathcal{U}{\tt-BAD}(T,H)] \Big].
\]
Now fix any outcome of $B_1,\ldots,B_m,D,\tilde{A}_1,\ldots,\tilde{A}_{m+1}$ in the first phase; we will  
show that in any case  $\Pr_{i \sim J}[\mathcal{U}{\tt -BAD}(T,H)=1] \leq \frac{\varepsilon}{2}$.
For a vector $y \in \{ 0,1\}^{m+1}$, we define a cut 
\[
  f(y) := (C \cap V(H)) \cup \bigcup_{i: y_i=1} \tilde{A}_i 
\]
and denote $Y := \{ y \in \{ 0,1\}^{m+1} : f(y) \in \mathcal{U} \}$ as well as $Z := \{ y \in \{ 0,1\}^{m+1} : \|y\|_1 = \frac{m+1}{2}\}$ and $X := \{ 0,1\}^{m+1}$.
In other words, the vectors in $Y$ represent all the cuts in the rectangle $\mathcal{R}$
that determine the outcome of $\mathcal{U}{\tt-BAD}(T,H)$ in the second phase.
More precisely, we know that a partition $(T,H)$ will be $\mathcal{U}$-good, if in the second phase we pick an index $i$
so that roughly half of the vectors $y \sim Y$ have $y_i=1$.
Our goal is to use the insight from Section~\ref{sec:PseudoRandomBehavior} to argue that this is the case for most indices. 
We can assume that there is at least one index $i^* \in J$ so that the outcome $(T,H)$
is not small, since otherwise there is nothing to show. 
The partition for that index satisfies $p_{\mathcal{U},T}^{\textrm{ex}}(H) \geq 2^{-\delta m}$ and hence  $|Y| \geq 2^{-\delta m} {m \choose (m+1)/2} \geq 2^{-2\delta m} |X|$ 
for $m$ large enough. 
Now we apply Lemma~\ref{lem:PseudoRandomBehavior} for a small enough $\delta>0$ and obtain that 
at most $\frac{\varepsilon}{4} m$ many indices are $\varepsilon$-biased. 
Then the chance of picking a bad index is at most $\frac{\varepsilon m / 4}{|J|} \leq \frac{\varepsilon}{2}$. Finally, we apply Cor.~\ref{cor:Pseudo2}
to conclude that for
a $(1-\frac{\varepsilon}{2})$-fraction of indices, the corresponding partitions are $\mathcal{U}$-good.
\end{proof}
Admittedly, only now one can fully understand the meaning behind the 
definition of a partition: if we take a partition $(T',H)$, then
the set $C \setminus V(H)$ contains the same number of nodes as any $A_i$.
Assuming that there are enough cuts in $\mathcal{U}' := \{ U \in \mathcal{U}(T') \mid (V(H) \cap C) \subseteq U \}$, 
then we know that for 
most indices $i$, roughly half the cuts of $\mathcal{U}'$ 
will contain $A_i$ --- the other half will not. 
Hence one could imagine to randomly pick an index $i$ and exchange the 
set $C \setminus V(H)$ with  $A_i$ and the emerging pair $(T,H)$ would then be 
$\mathcal{U}$-good with probability $1-\frac{\varepsilon}{2}$. This is a slightly different view on the argument in the proof of Lemma~\ref{lem:BoundOnUBadProb}.

\subsubsection{The contribution for $\mathcal{M}$-bad pairs}

Bounding the chance that a pair $(T,H)$ turns out to be $\mathcal{M}$-bad is fairly similar
to the previous case. Now it will be crucial that $(C \cup D) \setminus V(H)$
has the same size as each block $B_i$ so that we can again play the exchange trick.  
\begin{lemma}
Fix $(U^*,M^*) \in Q_3$ and let $H := \delta(U^*) \cap M^*$. 
Then 
\[
\Pr_{T \sim \mathcal{P}(U^*,M^*)}[\mathcal{M}{\tt{-BAD}}(T,H)=1] \leq \frac{\varepsilon}{2}.
\]
\end{lemma}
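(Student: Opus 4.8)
The plan is to mirror the proof of the preceding lemma, with the roles of matchings and cuts interchanged. As there, I would generate $T\sim\mathcal{P}(U^*,M^*)$ in stages so that, once the bulk of $T$ is fixed, whether $(T,H)$ is $M$-good is decided by a single uniform choice among $\Theta(m)$ candidate blocks, and then conclude by combining Lemma~\ref{lem:PseudoRandomBehavior} with Corollary~\ref{cor:Pseudo1}. Two elementary observations make this possible: the ``flexible'' node set $(C\cup D)\setminus V(H)$ has exactly $2(k-3)$ nodes, the same size as a $B_i$-block; and whether $(T,H)$ is $M$-good depends only on the \emph{set} $C\cup D$ (since $M$-goodness is phrased purely through $\mathcal{M}(T)$ and the edge set $E(C\cup D)$), not on how $C\cup D$ is split into $C$ and $D$.

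Concretely, reveal $T$ as follows. First, fix the blocks $A_1,\dots,A_m$ (with $U^*\setminus V(H)$ the union of exactly $\frac{m+1}{2}$ of them) together with $m+1$ candidate blocks $\tilde{B}_1,\dots,\tilde{B}_{m+1}$ of size $2(k-3)$, conditioning on: $M^*$ crosses no $A_j$; $M^*$ leaves no $\tilde{B}_l$ (consistent whether $\tilde{B}_l$ later becomes a $B$-block or $(C\cup D)\setminus V(H)$, because $M^*$ is matched internally on $C\cup D$); and every $\tilde{B}_l$ is disjoint from $U^*$. Second, pick $i\sim[m+1]$, set $\tilde{B}_i:=(C\cup D)\setminus V(H)$, and let the other $\tilde{B}_l$ become $B_1,\dots,B_m$; every $i$ is admissible precisely because all candidates are already $U^*$-disjoint and $M^*$-free. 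A last, harmless stage splits $\tilde{B}_i$ into $C\setminus V(H)$ and $D\setminus V(H)$ along a uniformly random partition that does not cross $M^*|_{\tilde{B}_i}$, and sets $V(H)\cap C:=V(H)\cap U^*$. A short counting check — each $T\in\mathcal{P}(U^*,M^*)$ arises from the same constant number of outcomes, since the final split always has $\binom{k-3}{(k-3)/2}$ choices — confirms this reproduces the uniform measure.

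With the first stage fixed, set $X:=\prod_{j=1}^m\Sigma_{A_j}\times\prod_{l=1}^{m+1}\Sigma_{\tilde{B}_l}$, where $\Sigma_S$ denotes the (constantly many) perfect matchings of $S$, and $Y:=\{x\in X: H\cup\bigcup_j x_{A_j}\cup\bigcup_l x_{\tilde{B}_l}\in\mathcal{M}\}$; note that $Y$ does not depend on $i$. For each $i$, reading coordinate $\tilde{B}_i$ as the matching on $(C\cup D)\setminus V(H)$ identifies $X$ with $\{M\in\mathcal{M}_{\textrm{all}}(T_i):H\subseteq M\}$ and $Y$ with $\{M\in\mathcal{M}(T_i):H\subseteq M\}$, so $p_{M,T_i}(H)=|Y|/|X|$ and $p_{M,T_i}(F)=\Pr_{x\sim X}[x\in Y\mid x_{\tilde{B}_i}=F']$ for $F=H\cup F'$. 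Hence, by Corollary~\ref{cor:Pseudo1}, if coordinate $\tilde{B}_i$ is $\varepsilon$-unbiased for $Y$ and $|Y|>0$, then $(T_i,H)$ is $M$-good. Now split into two cases. If $|Y|\le 2^{-2\delta m}|X|$, then, since the number of matchings in $\mathcal{M}_{\textrm{all}}(T_i)$ with $M\cap\delta(C)=H$ is a positive-constant multiple of $|X|$, one gets $p_{M,T}^{\textrm{ex}}(H)\le 2^{-\delta m}$ for every admissible final split, so $(T_i,H)$ is ${\tt SMALL}$ for all $i$ and $M{\tt-BAD}(T,H)\equiv 0$. Otherwise $|Y|\ge 2^{-2\delta m}|X|$ and, since every coordinate of $X$ has constant size, Lemma~\ref{lem:PseudoRandomBehavior} applied with a sufficiently small parameter yields that at most $\frac{\varepsilon}{2}(m+1)$ of the $2m+1$ coordinates — in particular of the $\tilde{B}_l$-coordinates — are $\varepsilon$-biased. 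Since $M{\tt-BAD}(T,H)\le\bm{1}[(T_i,H)\text{ not }M\text{-good}]\le\bm{1}[\tilde{B}_i\text{ biased}]$ (dropping the ${\tt SMALL}$ factor only helps), averaging over $i\sim[m+1]$ and then over the first stage gives $\Pr_{T\sim\mathcal{P}(U^*,M^*)}[M{\tt-BAD}(T,H)]\le\frac{\varepsilon}{2}$, which together with the $U$-bad bound and a union bound proves Lemma~\ref{lem:UMbadWithProbEpsilon}.

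I expect the main obstacle to be the first two stages: getting the staged sampling to provably produce the \emph{uniform} measure on $\mathcal{P}(U^*,M^*)$ while keeping $M{\tt-BAD}$ oblivious to the last stage. The clean facts ``$M$-goodness depends only on the set $C\cup D$'' and ``$|(C\cup D)\setminus V(H)|=|B_i|=2(k-3)$'' are exactly what let one pack all partitions $T_i$ into a single product space $X$ with the distinguished coordinate ranging over the $\tilde{B}_l$; once that is set up, the entropy/pseudo-randomness machinery runs just as in the $U$-good case.
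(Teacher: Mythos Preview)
Your proposal is correct and follows essentially the same two-phase sampling plus entropy/pseudo-randomness argument as the paper: reveal $A_1,\dots,A_m$ and $m+1$ candidate $\tilde{B}$-blocks, pick the index $i$ that becomes $(C\cup D)\setminus V(H)$, encode matchings as a product space $X=\prod\Sigma_{A_j}\times\prod\Sigma_{\tilde{B}_l}$, and apply Lemma~\ref{lem:PseudoRandomBehavior}/Corollary~\ref{cor:Pseudo1}. The only cosmetic difference is that the paper pre-splits every $\tilde{B}_l$ into $C_l\dot{\cup}D_l$ in phase one, whereas you defer the split to a final stage and instead use the (correct and slightly cleaner) observation that $M$-goodness depends only on the set $C\cup D$.
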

\begin{proof}
Again, we imagine that we draw the partition $T$ in two phase. 
In the first phase, we pick blocks  $\tilde{B}_1,\ldots,\tilde{B}_{m+1}$ of each $2(k-3)$ nodes disjoint to $U^*$. 
Then we select blocks $A_1,\ldots,A_m$ of $k-3$ nodes each. 
Again we condition that $M^*$ is not crossing any of those blocks $\tilde{B}_i$ or $A_i$. 
Then we randomly partition each $\tilde{B}_i$-block into  $\tilde{B}_i = C_i \dot{\cup} D_i$ with $|C_i| = |D_i| = k-3$
conditioning that $M^* \cap (C_i \times D_i) = \emptyset$ for all $i=1,\ldots,m+1$. 
Figure~\ref{fig:BoundMbad} shows that situation. Note that at this point, we have only determined 3 nodes in each $C$ and $D$.
In the second phase, we select a uniform random index $i \in [m+1]$ and set $C := (C \cap V(H)) \cup C_i$
and $D := (V(H) \cap D) \cup D_i$. Eventually, we set $\{B_1,\ldots,B_m\} := \{ \tilde{B}_1,\ldots,\tilde{B}_{i-1},\tilde{B}_{i+1},\ldots,\tilde{B}_{m+1}\}$ 
which again completes the description of the partition $T$. Observe that for symmetry reasons, the
generated partition $T$ is a uniform element of $\mathcal{P}(U^*,M^*)$.

\begin{figure}
\begin{center}
\psset{unit=0.7cm}
\begin{pspicture}(0,-0.5)(14,7.5)
\drawRect{fillcolor=red!30!white,fillstyle=solid,linearc=0.1}{0}{0}{6}{7} 
\drawRect{fillcolor=green!30!gray,fillstyle=solid,linearc=0.1}{6.2}{4.2}{0.6}{2.6} 
\drawRect{fillcolor=orange!50!white,fillstyle=solid,linearc=0.1}{7.2}{4.2}{0.6}{2.6} 
\drawRect{fillcolor=blue!65!white,fillstyle=solid,linearc=0.1}{6.2}{0.2}{1.6}{3.6} %
\drawRect{fillcolor=blue!60!white,fillstyle=solid,linearc=0.1}{8.2}{0.2}{1.6}{6.6} %
\drawRect{fillcolor=blue!60!white,fillstyle=solid,linearc=0.1}{10.2}{0.2}{1.6}{6.6} %
\drawRect{fillcolor=blue!60!white,fillstyle=solid,linearc=0.1}{12.2}{0.2}{1.6}{6.6} %
\drawRect{fillcolor=blue!35!white,fillstyle=solid,linearc=0.1,linestyle=none}{7.0}{0.2}{0.8}{3.6} %
\pspolygon[fillcolor=blue!35!white,fillstyle=solid,linearc=0.1,linestyle=none](8.2,0.2)(8.2,4.5)(9,4.5)(9,2.5)(9.8,2.5)(9.8,0.2)
\pspolygon[fillcolor=blue!35!white,fillstyle=solid,linearc=0.1,linestyle=none](10.2,2.5)(11,2.5)(11,4.5)(11.8,4.5)(11.8,6.8)(10.2,6.8)
\pspolygon[fillcolor=blue!35!white,fillstyle=solid,linearc=0.1,linestyle=none](12.2,0.2)(12.2,3.5)(13.8,3.5)(13.8,0.2)
\drawRect{fillcolor=blue!65!white,fillstyle=none,linearc=0.1}{6.2}{0.2}{1.6}{3.6} %
\drawRect{fillcolor=blue!60!white,fillstyle=none,linearc=0.1}{8.2}{0.2}{1.6}{6.6} %
\drawRect{fillcolor=blue!60!white,fillstyle=none,linearc=0.1}{10.2}{0.2}{1.6}{6.6} %
\drawRect{fillcolor=blue!60!white,fillstyle=none,linearc=0.1}{12.2}{0.2}{1.6}{6.6} %
\rput[c](2.5,7.5){\Large{\darkgray{$A$}}}
\rput[c](5.2,8.5){\rnode{Clabel}{\Large{\darkgray{$C \cap V(H)$}}}}
\rput[c](8.8,8.5){\rnode{Dlabel}{\Large{\darkgray{$D \cap V(H)$}}}}
\rput[c](7,-0.4){\Large{\darkgray{$\tilde{B}_1$}}}
\rput[c](9,7.5){\Large{\darkgray{$\tilde{B}_2$}}}
\rput[c](11,7.5){\Large{\darkgray{$\ldots$}}}
\rput[c](13,7.5){\Large{\darkgray{$\tilde{B}_{m+1}$}}}
\rput[c](9,6.0){\darkgray{$C_2$}} \rput[c](9,1.0){\darkgray{$D_2$}}
\rput[c](11,6.0){\darkgray{$\ldots$}} \rput[c](11,1.0){\darkgray{$\ldots$}}
\rput[c](13,6.0){\darkgray{$C_{m+1}$}} \rput[c](13,1.0){\darkgray{$D_{m+1}$}}
\drawRect{fillstyle=solid,fillcolor=red!50!white,linearc=0.1}{0.2}{0.7}{1.6}{1.6}
\drawRect{fillstyle=solid,fillcolor=red!50!white,linearc=0.1}{0.2}{2.7}{1.6}{1.6}
\drawRect{fillstyle=solid,fillcolor=red!50!white,linearc=0.1}{0.2}{4.7}{1.6}{1.6}
\drawRect{fillstyle=solid,fillcolor=red!50!white,linearc=0.1}{2.2}{0.7}{1.6}{1.6}
\drawRect{fillstyle=solid,fillcolor=red!50!white,linearc=0.1}{2.2}{2.7}{1.6}{1.6}
\drawRect{fillstyle=solid,fillcolor=red!50!white,linearc=0.1}{2.2}{4.7}{1.6}{1.6}
\drawRect{fillstyle=solid,fillcolor=red!50!white,linearc=0.1}{4.2}{0.7}{1.6}{1.6}
\drawRect{fillstyle=solid,fillcolor=red!50!white,linearc=0.1}{4.2}{2.7}{1.6}{1.6}
\drawRect{fillstyle=solid,fillcolor=red!50!white,linearc=0.1}{4.2}{4.7}{1.6}{1.6}
\pspolygon[fillstyle=vlines,hatchcolor=gray,linecolor=darkgray,linewidth=1.5pt,linearc=0.1](0.1,6.6)(6,6.6)(6,6.9)(7,6.9)(7,4.1)(6,4.1)(6,4.5)(4,4.5)(4,2.5)(2,2.5)(2,4.5)(0.1,4.5) \rput[c](0.5,7.0){\Large{\darkgray{$U^*$}}} 
\multido{\N=0.5+2.0}{3}{ 
  \multido{\n=1.0+2.0}{3}{%
     \rput[c](\N,\n){%
        \cnode*(0,0){2.5pt}{A}%
        \cnode*(1,0){2.5pt}{B}%
        \cnode*(0,1){2.5pt}{C}%
        \cnode*(1,1){2.5pt}{D}%
        \ncline[linewidth=0.5pt,linecolor=black]{A}{B}%
       \ncline[linewidth=0.5pt,linecolor=black]{A}{C}%
       \ncline[linewidth=0.5pt,linecolor=black]{A}{D}%
       \ncline[linewidth=0.5pt,linecolor=black]{B}{C}%
       \ncline[linewidth=0.5pt,linecolor=black]{B}{D}%
       \ncline[linewidth=0.5pt,linecolor=black]{C}{D}%
     }
  }
}
\rput[c](1,5.5){\Large{\darkgray{$A_1$}}}
\rput[c](3,5.5){\Large{\darkgray{$A_2$}}}
\rput[c](5,5.5){\Large{\darkgray{$A_3$}}}
\rput[c](3,1.5){\Large{\darkgray{$\ldots$}}}
\rput[c](5,1.5){\Large{\darkgray{$A_m$}}}
\psset{linewidth=0.5pt,linecolor=black} 
\multido{\N=8.5+2.0}{3}{
 \rput[c](\N,2){%
  \cnode*(0,3){2.5pt}{A1}  \cnode*(1,3){2.5pt}{B1}%
  \cnode*(0,2){2.5pt}{A2}  \cnode*(1,2){2.5pt}{B2}%
  \cnode*(0,1){2.5pt}{A3}  \cnode*(1,1){2.5pt}{B3}%
  \cnode*(0,0){2.5pt}{A4}  \cnode*(1,0){2.5pt}{B4}%
  \ncline{A1}{B1}%
  \ncline{A1}{B2}%
  \ncline{A1}{B3}%
  \ncline{A1}{B4}%
  \ncline{A2}{B1}%
  \ncline{A2}{B2}%
  \ncline{A2}{B3}%
  \ncline{A2}{B4}%
  \ncline{A3}{B1}%
  \ncline{A3}{B2}%
  \ncline{A3}{B3}%
  \ncline{A3}{B4}%
  \ncline{A4}{B1}%
  \ncline{A4}{B2}%
  \ncline{A4}{B3}%
  \ncline{A4}{B4}%
  \ncline{A1}{A4}%
  \ncline{B1}{B4}%
 }
}
\multido{\N=0.5+1.0}{4}{%
  \multido{\n=0.5+1.0}{4}{%
     \cnode*(6.5,\N){2.5pt}{A}%
     \cnode*(7.5,\n){2.5pt}{B}%
     \ncline{A}{B}%
  }
}
\psline(6.5,0.5)(6.5,3.5)
\psline(7.5,0.5)(7.5,3.5)
\cnode*(6.5,6.5){2.5pt}{c1} \cnode*(7.5,6.5){2.5pt}{d1} \ncline[linewidth=2pt,linecolor=black]{c1}{d1}
\cnode*(6.5,5.5){2.5pt}{c2} \cnode*(7.5,5.5){2.5pt}{d2} \ncline[linewidth=2pt,linecolor=black]{c2}{d2} \naput[labelsep=2pt]{$H$}
\cnode*(6.5,4.5){2.5pt}{c3} \cnode*(7.5,4.5){2.5pt}{d3} \ncline[linewidth=2pt,linecolor=black]{c3}{d3}
\pnode(6.5,6.8){C} \ncline[linecolor=darkgray,linewidth=1.5pt,arrowsize=6pt]{<-}{C}{Clabel}
\pnode(7.5,6.8){D} \ncline[linecolor=darkgray,linewidth=1.5pt,arrowsize=6pt]{<-}{D}{Dlabel}
\end{pspicture}
\caption{Situation after we decided for $A_1,\ldots,A_m,H,\tilde{B}_1,\ldots,\tilde{B}_{m+1}$ and $\tilde{B}_i = C_i \dot{\cup} D_i$. Again, $M^*$ is a subset of the depicted edges.  \label{fig:BoundMbad}}
\end{center}
\end{figure}
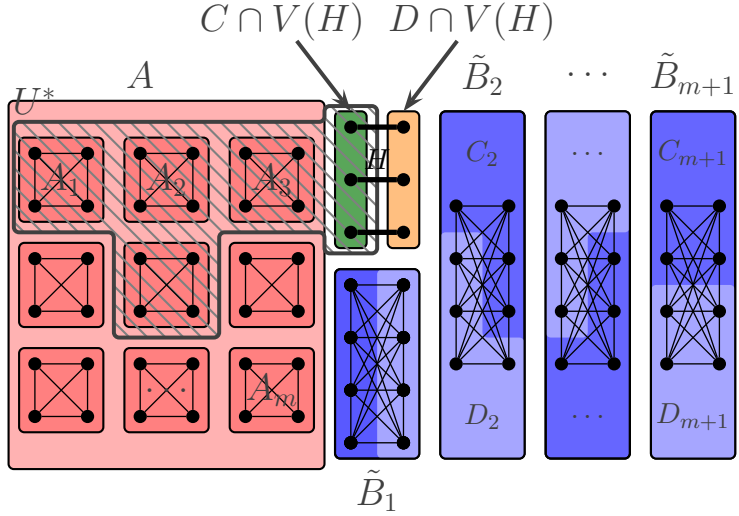

As before, we fix any outcome in the first phase and claim that still 
$\Pr_{i \in [m+1]}[\mathcal{M}{\tt{-BAD}}(T,H)=1] \leq \frac{\varepsilon}{2}$.
Define
\begin{eqnarray*}
  X_i := \{ \textrm{all perfect matchings on }\tilde{B}_i\}  & & \forall i=1,\ldots,m+1  \\
  X_{m+i+1} := \{ \textrm{all perfect matchings on }A_i \} & & \forall i=1,\ldots,m
\end{eqnarray*}
and set $X := X_1 \times \ldots \times X_{2m+1}$. 
We abbreviate $Y := \{ M \in X \mid M \cup H \in \mathcal{M}\}$ as the corresponding
matchings that are in our rectangle $\mathcal{R}$.
Again, we can assume that for some index $i$, the pair $(T,H)$ is not small and hence
$|Y| \geq 2^{-\delta m} \prod_{j \neq i}|X_j| \geq 2^{-2\delta m}|X|$ for $m$ large enough. 

The definition of $\mathcal{M}$-goodness requires that if we draw $M \sim Y$, then the induced
matching that we see on $\tilde{B}_i$ is very close to a uniform random matching. 
Again for a small enough $\delta$, Lemma~\ref{lem:PseudoRandomBehavior}
bounds the number of indices that are $\varepsilon$-biased by $\frac{\varepsilon}{2} m$.
Here $q := \max\{ |X_i| : i=1,\ldots,2m+1\}$ is the number of matchings on $2(k-3)$
nodes, which is some (huge) constant depending on $k$. 
Following Cor.~\ref{cor:Pseudo1}, an unbiased index implies that the corresponding pair
$(T,H)$ will be $\mathcal{M}$-good. This shows the claim.  
\end{proof}
This concludes the proof the Lemma~\ref{lem:QuadraticGrowOfMeasure}
and hence implies our main result, Theorem~\ref{thm:MainTheorem}.

\section{Inapproximability of the matching polytope\label{sec:Inapproximability}}

In this section, we want to discuss the inapproximability for $P_M$, which is 
the convex hull of all matchings in $G$ (not just the perfect ones). 
The polytope $P_M$ has the nice property that it is \emph{monotone}, which means
that for $x \in P_M$ and $\bm{0} \leq y \leq x$ one also has $y \in P_M$. 
We say that a polytope $K$ is a \emph{$(1+\varepsilon)$-approximation} to $P_M$ if
$P_M \subseteq K \subseteq (1+\varepsilon) P_M$ (note that this only makes sense for monotone polytopes). 
This is
equivalent to requiring that for each objective function $c \in \setR_{\geq 0}^E$ one has
\begin{equation} \label{eq:approxPolytopeCriterion}
  \max\{ cx \mid x \in P_M \} \leq \max\{ cx \mid x \in K\} \leq (1+\varepsilon) \cdot  \max\{ cx \mid x \in P_M\},
\end{equation}
where we use again monotonicity.

After the publication of the conference version of this work,
Braun and Pokutta~\cite{DBLP:journals/corr/BraunP14} showed the following
extension:
\begin{theorem}
Let $0<\alpha < 1$. 
If $K$ is a polytope with $P_M \subseteq K \subseteq (1 + \frac{\alpha}{n})P_M$, then $\xc(K) \geq 2^{c n}$ with $c := c(\alpha) > 0$ depending on $\alpha$.
\end{theorem}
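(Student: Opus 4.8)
The plan is to rerun the entire argument of the preceding two sections against an \emph{approximate} slack matrix of $P_M$; the only genuine change is forced by the fact that entries which used to be exactly $0$ now carry a small positive constant. Write $\varepsilon := \frac{\alpha}{n}$, so the facets of $(1+\varepsilon)P_M$ are $x_e \ge 0$, $x(\delta(v)) \le 1+\varepsilon$, and the scaled odd-set inequalities $x(E(U)) \le (1+\varepsilon)\frac{|U|-1}{2}$. Running the ``$\xc(K)=r$ implies $\rk_+(\cdot)\le r$'' half of the proof of Theorem~\ref{thm:Yannakakis} with the inner polytope $P_M$ (columns $=$ its vertices, i.e. all matchings) and the outer polytope $(1+\varepsilon)P_M$ (rows $=$ its facets) produces non-negative vectors $u_U,v_M$ with $\langle u_U,v_M\rangle = \hat b_U - |M\cap E(U)| =: \hat S_{U,M} \ge 0$, where $\hat b_U := \max_{x\in K}x(E(U))$ lies between $\max_M |M\cap E(U)|$ and $(1+\varepsilon)\frac{|U|-1}{2}$; that is, $\hat S$ is obtained from the honest approximate slack matrix $S_{U,M}:=(1+\varepsilon)\frac{|U|-1}{2}-|M\cap E(U)|$ by subtracting from each row a non-negative constant $\theta_U$ small enough to keep $\hat S\ge 0$, and $\rk_+(\hat S)\le\xc(K)$. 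Passing to the submatrix $\hat S'$ indexed by cuts $U$ with $|U|=t:=\frac{m+1}{2}(k-3)+3$ and by perfect matchings $M$ (submatrices do not increase $\rk_+$), and using $|M\cap E(U)|=\frac{|U|-|M\cap\delta(U)|}{2}$ for perfect matchings, the entry of $S$ on $Q_\ell$ becomes $\frac{\ell-1}{2}+\beta$ with $\beta:=\varepsilon\frac{t-1}{2}$. Since $t=\Theta(n)$, $\beta$ is an absolute constant — one computes $\beta\approx\frac{\alpha}{24}$, so $0<\beta<\frac1{10}$ for $\alpha<1$ — and $\theta_U\le\beta$. Thus $\xc(K)\ge\rk_+(\hat S')$, where $\hat S'$ is the exact odd-cut slack matrix of the preceding sections, shifted up by $\beta$ on its support and then shifted down row-wise by $\theta_U\in[0,\beta]$.

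The harm done by the shift is confined to $Q_1$, where $\hat S'$ is now positive, so the value $-\infty$ is no longer admissible there; I would instead take
\[
 \tilde W_{U,M} = \begin{cases} -\rho/|Q_1| & |\delta(U)\cap M| = 1,\\ 1/|Q_3| & |\delta(U)\cap M| = 3,\\ -\tfrac{1}{k-1}\cdot 1/|Q_k| & |\delta(U)\cap M| = k,\\ 0 & \textrm{otherwise,}\end{cases}
\]
with $\rho=\rho(k)$ a large constant fixed below. By the symmetry of the complete graph on its $t$-subsets, every row of $\tilde W$ has the same sum, equal to $\frac{1}{|\mathcal{U}_{\textrm{all}}|}\big({-}\rho+1-\frac{1}{k-1}\big)<0$; since each $\theta_U\ge 0$, the row shifts can only \emph{increase} $\langle\tilde W,\cdot\rangle$, so $\langle\tilde W,\hat S'\rangle\ge\langle\tilde W,S'\rangle = (1+\beta)-\rho\beta-\frac{1}{k-1}\big(\frac{k-1}{2}+\beta\big) = \frac12+\beta\big(1-\rho-\frac1{k-1}\big)$. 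Also $\|\hat S'\|_\infty\le\|S'\|_\infty\le\frac{t-1}{2}+\beta\le n$. Hence Lemma~\ref{lem:FioriniCorruptionBound}, applied to the non-negative matrix $\hat S'$, gives
\[
 \xc(K) \;\ge\; \rk_+(\hat S') \;\ge\; \frac{\langle\tilde W,S'\rangle}{\,n\cdot\max_{\mathcal R}\langle\tilde W,\mathcal R\rangle\,},
\]
and it suffices to make the numerator a positive constant and the rectangle maximum $2^{-\Omega(n)}$.

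For rectangles $\langle\tilde W,\mathcal R\rangle = -\rho\,\mu_1(\mathcal R)+\mu_3(\mathcal R)-\frac1{k-1}\mu_k(\mathcal R)$. The crux is to upgrade Lemma~\ref{lem:QuadraticGrowOfMeasure} to a version valid for \emph{all} rectangles, not just those with $\mu_1(\mathcal R)=0$: for some constants $\gamma=\gamma(k)$ and $\delta=\delta(k)>0$,
\[
 \mu_3(\mathcal R) \;\le\; \frac{400}{k^2}\,\mu_k(\mathcal R) \;+\; \gamma\,\mu_1(\mathcal R) \;+\; 2^{-\delta m} \qquad\textrm{for every rectangle }\mathcal R.
\]
Granting this and putting $\rho:=\gamma$, one gets $\langle\tilde W,\mathcal R\rangle\le\big(\frac{400}{k^2}-\frac1{k-1}\big)\mu_k(\mathcal R)+2^{-\delta m}\le 2^{-\delta m}$ for $k$ large (exactly as in Lemma~\ref{lem:RWupperBound}), while $\langle\tilde W,S'\rangle=\frac12+\beta\big(1-\gamma-\frac1{k-1}\big)$ is a positive absolute constant as long as $\beta<\tfrac{1}{2(\gamma-1)}$ roughly — which, since $\beta\approx\alpha/24$, holds throughout $0<\alpha<1$ provided $\gamma$ can be kept below $13$; if the bookkeeping of the last step yields a larger $\gamma$, one rescales the $Q_1$-penalty and absorbs the loss into a smaller $\delta$. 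Plugging back, $\xc(K)\ge\frac{\textrm{const}(\alpha)}{n\cdot 2^{-\delta m}}=2^{c(\alpha)n}$, since $m$ is linear in $n$.

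It remains to prove the displayed strengthening of Lemma~\ref{lem:QuadraticGrowOfMeasure}, which I expect to be the main obstacle. The idea is to rerun the partition argument of the preceding section almost verbatim, dropping the standing hypothesis $\mu_1(\mathcal R)=0$ and isolating the one place it is truly used — the proof of Lemma~\ref{lem:FractionPositiveContrGoodPartitions}, where two good pairs $(T,H)$ and $(T,H^*)$ lying in a common $k$-matching $F$ with $|H\cap H^*|\le 1$ were shown to force a pair $(U,M)\in\mathcal R$ with $|\delta(U)\cap M|=1$. Without $\mu_1(\mathcal R)=0$ this is not a contradiction, but it says that whenever a positive $\mu_3$-mass of $3$-matchings is simultaneously good and violates the ``pairwise overlap $\ge 2$'' property, that same construction exhibits a positive $\mu_1$-mass inside $\mathcal R$. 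Carrying this charging through the expectations over $T$ and over $F\supseteq H$ replaces the clean $\Pr_{H\sim\binom{F}{3}}[{\tt GOOD}(T,H)=1]\le 100/k^2$ of Lemma~\ref{lem:FractionPositiveContrGoodPartitions} by $100/k^2$ plus a term that, once averaged over $T$ and $F$, is dominated by a constant multiple of $\mu_1(\mathcal R)$; feeding this into the good-partition estimate \eqref{eq:contributionGoodPartitions} while leaving the small- and bad-partition estimates untouched reassembles into the displayed inequality. Apart from this charging step, the whole proof is the cosmetic replacement of the former zeros by the constant $\beta$, plus the re-tuning of absolute constants so that $\langle\tilde W,S'\rangle$ stays bounded away from $0$ over the entire range $0<\alpha<1$.
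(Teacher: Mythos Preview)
The paper does not prove this theorem: it is stated as a result of Braun and Pokutta~\cite{DBLP:journals/corr/BraunP14} and cited without proof, and the paper then uses it as a black box to derive the subsequent corollary. So there is no ``paper's own proof'' to compare against. (Note also the evident typo in the statement: the conclusion should read $\xc(K)\ge 2^{cn}$, not $\xc(P_M)$.)

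Your outline --- pass to an approximate slack matrix with columns the matchings and rows the $(1+\varepsilon)P_M$-facets, replace the $-\infty$ on $Q_1$ by a finite penalty $-\rho/|Q_1|$, and strengthen Lemma~\ref{lem:QuadraticGrowOfMeasure} to $\mu_3(\mathcal R)\le \frac{400}{k^2}\mu_k(\mathcal R)+\gamma\,\mu_1(\mathcal R)+2^{-\delta m}$ valid for \emph{all} rectangles --- is the natural strategy and is, as far as I know, essentially what Braun--Pokutta do. Several of your side computations (the row-sum symmetry argument handling the $\theta_U$ shifts, the value $\langle\tilde W,S'\rangle=\tfrac12+\beta(1-\rho-\tfrac1{k-1})$, the entry $\beta=\varepsilon\,\tfrac{t-1}{2}=\Theta(\alpha)$) are correct and nicely done.

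The genuine gap is in the last paragraph. You reduce everything to the strengthened inequality with some constant $\gamma=\gamma(k)$, but you only sketch the charging argument that is supposed to produce it, and you do not control the size of $\gamma$. This matters: for the hyperplane bound to give anything, you need the numerator $\tfrac12+\beta(1-\rho-\tfrac1{k-1})$ positive with $\rho\ge\gamma$, i.e.\ roughly $\gamma<\tfrac{1}{2\beta}\approx \tfrac{6}{\alpha}$. Your fallback --- ``if the bookkeeping yields a larger $\gamma$, one rescales the $Q_1$-penalty and absorbs the loss into a smaller $\delta$'' --- does not work: $\rho$ cannot be taken smaller than $\gamma$ without destroying the rectangle bound, and if $\gamma$ is large the numerator is simply negative, not merely small. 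Since $k$ has to be taken large (at least $501$) and a naive charging from the construction in Lemma~\ref{lem:FractionPositiveContrGoodPartitions} will pick up $k$-dependent factors, you actually need to argue why $\gamma$ stays bounded (or redesign the weight matrix so that the needed inequality on $\gamma$ is met). That is the real content of the Braun--Pokutta extension, and it is not supplied here.
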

However, their paper leaves it open how large the extension complexity of a  $(1+\varepsilon)$-approximation 
has to be if $\varepsilon \geq \frac{1}{n}$. For the sake of comparison, suppose in \eqref{eq:MatchingPolytope} we would 
take the odd cut inequalities only 
for $|U| \leq \Theta(\frac{1}{\varepsilon})$, then we would obtain a polytope $K$ 
with $P_M  \subseteq K \subseteq (1+\varepsilon) P_M$ which has only
 $n^{\Theta(1/\varepsilon)} = 2^{\Theta(\frac{\log n}{\varepsilon})}$ many facets\footnote{To see that $K \subseteq (1+\varepsilon) P_M$, 
suppose that $x \in \setR_{\geq 0}^E$ satisfies
$x(\delta(v)) \leq 1$ for $v \in V$ and $x(E(U)) \leq \frac{|U|-1}{2}$ for every odd $U \subseteq V$ of size $|U| \leq k$. Already from the degree constraints we know that for any $U$ we have at least $x(E(U)) \leq \frac{|U|}{2}$.
If we define a slightly scaled vector $y := (1-\frac{1}{k}) \cdot x$, then for any odd $U$ of size $|U| > k$ we have $y(E(U)) \leq (1-\frac{1}{k}) \frac{|U|}{2} \leq \frac{|U|-1}{2}$. Hence $y \in P_M$.}.

We want to argue now that the result of Braun and Pokutta~\cite{DBLP:journals/corr/BraunP14}
implies a lower bound for the whole spectrum of $\varepsilon$:
\begin{corollary}
Let $G = (V,E)$ be the complete graph on $n$ nodes and let $P_M$ be the convex hull of 
all matchings. Let $K$ be a polytope with $P_M \subseteq K \subseteq (1+\varepsilon) P_M$. 
Then $\textrm{xc}(K) \geq 2^{\Omega(\min\{ 1/\varepsilon,n\})}$.
\end{corollary}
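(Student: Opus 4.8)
The plan is to derive this corollary from the Braun--Pokutta theorem by a restriction argument: whenever $\varepsilon$ is appreciably larger than $1/n$, one simply shrinks the ground graph. Concretely, a $(1+\varepsilon)$-approximation $K$ of $P_M(n)$ restricted to a complete subgraph on $n' \approx 1/\varepsilon$ nodes becomes a $(1+\tfrac{\alpha}{n'})$-approximation of $P_M(n')$, which is exactly the regime covered by~\cite{DBLP:journals/corr/BraunP14}, and restriction to a subgraph corresponds to passing to a face of $K$, so extension complexity can only drop.

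In detail, I would fix a constant $\alpha\in(0,1)$ (say $\alpha=\tfrac12$) and set $n':=\min\{\lfloor \alpha/\varepsilon\rfloor, n\}$; if $n'\le 1$ then $1/\varepsilon=O(1)$ and the claimed bound $2^{\Omega(\min\{1/\varepsilon,n\})}$ is a constant, so I may assume $n'\ge 2$. Pick $V'\subseteq V$ with $|V'|=n'$ and consider
\[
  F := \{ x\in K \mid x_e = 0 \ \forall e\in E\setminus E(V')\},
\]
which is a face of $K$ because every inequality $x_e\ge 0$ is valid for $K$ (as $K\subseteq (1+\varepsilon)P_M(n)\subseteq \setR_{\ge 0}^E$). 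Identifying $P_M(n')$ with the coordinate face $\{x\in P_M(n)\mid x_e=0\ \forall e\notin E(V')\}$ of $P_M(n)$, the left inclusion of $P_M(n')\subseteq F\subseteq (1+\tfrac{\alpha}{n'})P_M(n')$ is immediate from $P_M(n)\subseteq K$, and for the right inclusion I would use that positive scaling fixes the coordinate subspace $\{x_e=0\ \forall e\notin E(V')\}$ setwise, so
\[
  F \subseteq \{x\in (1+\varepsilon)P_M(n) \mid x_e=0\ \forall e\notin E(V')\} = (1+\varepsilon)\{x\in P_M(n)\mid x_e=0\ \forall e\notin E(V')\} = (1+\varepsilon)P_M(n'),
\]
together with $1+\varepsilon\le 1+\alpha/n'$, which holds by the choice of $n'$. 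Applying the theorem of Braun and Pokutta~\cite{DBLP:journals/corr/BraunP14} on the complete graph with $n'$ nodes to $F$ gives $\xc(F)\ge 2^{c(\alpha)n'}$ for the constant $c(\alpha)>0$ from their statement; since extension complexity does not increase when passing to a face, $\xc(K)\ge \xc(F)\ge 2^{c(\alpha)n'}$. Finally $n'=\min\{\lfloor\alpha/\varepsilon\rfloor,n\}=\Omega(\min\{1/\varepsilon,n\})$ because $\alpha$ is a fixed positive constant, which yields the corollary.

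There is essentially no hard step here: the only content-bearing point is the middle display, i.e.\ checking that the coordinate face of an $\varepsilon$-approximation of $P_M(n)$ is an $\varepsilon$-approximation of $P_M(n')$, and that becomes routine once one notes that $P_M(n')$ embeds as a coordinate face of $P_M(n)$ and that scaling by $1+\varepsilon>0$ commutes with intersecting with coordinate hyperplanes. The remaining care is bookkeeping: handling the trivial regime where $1/\varepsilon$ is $O(1)$, and the boundary case $\lfloor\alpha/\varepsilon\rfloor\ge n$, where one simply takes $V'=V$, $F=K$, and applies~\cite{DBLP:journals/corr/BraunP14} directly. No new structural property of the matching polytope is needed beyond what is already in the excerpt.
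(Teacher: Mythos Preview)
Your argument is correct and is essentially the same as the paper's: both restrict to a complete subgraph on about $1/\varepsilon$ vertices and invoke Braun--Pokutta there. The only cosmetic difference is that the paper constructs the restricted polytope by plugging $x''=\bm{0}$ into an explicit extension $Q$ of $K$, whereas you phrase it as passing to the coordinate face $F=\{x\in K: x_e=0\ \forall e\notin E(V')\}$; these two descriptions define the same polytope (up to dropping the zero coordinates), and your face formulation makes the inequality $\xc(K)\ge\xc(F)$ and the sandwich $P_M(n')\subseteq F\subseteq(1+\varepsilon)P_M(n')$ slightly more transparent.
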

\begin{proof}
Assume that $\varepsilon \geq \frac{1}{2n}$, otherwise the original result of Braun and Pokutta already applies.
We may also assume that $\frac{1}{\varepsilon}$ is an even integer. 
Let us write $P_M(G)$ to emphasize that we talk about the matching polytope for 
graph $G$. 
Suppose that we have a polytope $K = \{ x \in \setR^E : \exists y: (x,y) \in Q\}$ 
with $P_M(G) \subseteq K \subseteq (1 + \varepsilon) \cdot P_M(G)$ so that $Q$ has  $\xc(K)$ many facets. 
Take any set $V' \subseteq V$ of $k := \frac{1}{2\varepsilon} \leq n$ vertices and let $G' = (V',E')$ be the induced subgraph. 
For a vector $x \in \setR^E$, we write $x = (x',x'')$ with $x' \in \setR^{E'}$ and $x'' \in \setR^{E/E'}$.
Then
\[
 K' := \{ x' \in \setR^{E'} \mid \exists y: ((x',\bm{0}),y) \in Q \}
\]
is a polytope with $\xc(K') \leq \xc(K)$. Intuitively, $K'$ emerges from $K$ 
by ``deleting'' variables for edges that are not in $G'$.  
But $K'$ is still an approximation to the matching polytope for $G'$;
formally $P_M(G') \subseteq K' \subseteq (1+\varepsilon) P_M(G')$.
This can be easily seen by checking criterion~\eqref{eq:approxPolytopeCriterion} 
for objective functions  $c$  with $c_e = 0$ for $e \in E\setminus E'$. 
But for this smaller graph, we have that $\varepsilon = \frac{1}{2|V'|}$, hence we can apply the original result of Braun and 
Pokutta to the graph $G'$ and obtain that  $\xc(K') \geq 2^{\Omega(|V'|)}$. This shows the claim. 
\end{proof}
This provides a fairly tight bound on the approximability of the matching polytope whenever $\varepsilon \ll \frac{1}{\log n}$. 
However, the gap between known upper and lower bound remains huge if $\varepsilon$ is a constant.

%

\subsection{Subsequent development}

After a sequence of papers showed lower bounds on the size of linear programs (including \cite{TSP-lower-bound-Fiorini-et-al-STOC2012,LPs-for-CSPs-ChanLeeRaghavendraSteurer-FOCS2013} and this paper), 
the natural next challenge was whether one could also prove lower bounds on the size of
\emph{semidefinite programs}. A recent breakthrough of Lee, Raghavendra and Steurer~\cite{SDPlowerBound-LeeRaghavendraSteurer-STOC2015} 
answers this affirmatively for the correlation polytope, the
cut polytope and approximate versions of constraint satisfaction problems. However, it is still unknown
whether there is a polynomial size SDP for the perfect matching polytope.

\subsection{Acknowledgements}

The author is grateful to Fritz Eisenbrand, Michel X. Goemans, Jochen K{\"o}nemann and Laura Sanit{\`a}
for helpful discussions and to Sam Fiorini for reading and commenting on a preliminary draft. 
Moreover, the author wants to thank the anonymous reviewers of both the conference and the journal
version for their numerous helpful suggestions and comments. 

\bibliographystyle{alpha}
\bibliography{matchingpolytope}

\end{document}